\definecolor{MyDarkBlue}{rgb}{0.15,0.25,0.45}
\let\fn\footnote
\renewcommand{\footnote}[1]{\linespread{1.1}\fn{#1}\linespread{1.29}}
\declaretheorem[numberwithin=section,name=Theorem,refname={theorem,theorems},Refname={Theorem,Theorems}]{theorem}
\declaretheorem[sibling=theorem,name=Lemma,refname={lemma,lemmas},Refname={Lemma,Lemmas}]{lemma}
\declaretheorem[sibling=theorem,name=Definition,refname={definition,definitions},Refname={Definition,Definitions}]{definition}
\declaretheorem[sibling=theorem,name=Proposition,refname={proposition,propositions},Refname={Proposition,Propositions}]{proposition}
\declaretheorem[sibling=theorem,name=Corollary,refname={corollary,corollaries},Refname={Corollary,Corollaries}]{corollary}
\declaretheorem[sibling=theorem,name=Remark,refname={remark,remarks},Refname={Remark,Remarks}]{remark}
\declaretheorem[numberwithin=subsection,name=Theorem,refname={theorem,theorems},Refname={Theorem,Theorems}]{atheorem}
\declaretheorem[sibling=atheorem,name=Definition,refname={definition,definitions},Refname={Definition,Definitions}]{adefinition}
\renewcommand{\section}{\@startsection
    {section}{1}{\z@}{-3.5ex plus -1ex minus
        -.2ex}{2.3ex plus .2ex}{\mathversion{bold}\bf }}
\renewcommand{\subsection}{\@startsection{subsection}{2}{\z@}{-3.25ex
        plus -1ex minus
        -.2ex}{1.5ex plus .2ex}{\mathversion{bold}\bf }}
\renewcommand{\subsubsection}{\@startsection{subsubsection}{3}{-2.45ex}{-3.25ex
        plus -1ex minus -.2ex}{1.5ex plus .2ex}{\it }}
\renewcommand\paragraph{\@startsection{paragraph}{4}{\z@}%
    {3.25ex \@plus1ex \@minus.2ex}%
    {-1em}%
    {\normalfont\normalsize\bfseries\mathversion{bold}}}
\renewcommand{\thesection}{\arabic{section}}
\renewcommand{\thesubsection}{\arabic{section}.\arabic{subsection}}
\renewcommand{\@seccntformat}[1]{\@nameuse{the#1}.~~}
\renewcommand{\theequation}{\thesection.\arabic{equation}}
\makeatletter \@addtoreset{equation}{section}
\def\Ddots{\mathinner{\mkern1mu\raise\p@
        \vbox{\kern7\p@\hbox{.}}\mkern2mu
        \raise4\p@\hbox{.}\mkern2mu\raise7\p@\hbox{.}\mkern1mu}}
\renewcommand{\appendices}{
    \section*{Appendix}\label{appendices}\setcounter{subsection}{0}
    \addcontentsline{toc}{section}{Appendix}
    \setcounter{equation}{0}
    \renewcommand{\theequation}{\Alph{subsection}.\arabic{equation}}
    \renewcommand{\thesubsection}{\Alph{subsection}}
    \@addtoreset{equation}{subsection}
    %\@addtoreset{thm}{subsection}
    \crefalias{subsection}{appendix}
}
\newcommand{\makecommand}[3]{%
    \foreach \i in #3 {%
        \expandafter\xdef\csname #1\i\endcsname{\noexpand#2{\unexpanded\expandafter{\i}}}%
    }%
}
\newcommand{\latinalphabet}{A,a,B,b,C,c,d,D,E,e,F,f,G,g,H,h,I,i,J,j,K,k,L,l,M,m,N,n,O,o,P,p,Q,q,R,r,S,s,T,t,U,u,V,v,W,w,X,x,Y,y,Z,z}
\newcommand{\caplatinalphabet}{A,B,C,D,E,F,G,H,I,J,K,L,M,N,O,P,Q,R,S,T,U,V,W,X,Y,Z}
\newcommand{\ihom}{\underline{\sfhom}}
\newcommand{\acton}{\triangleright}
\def\slasha#1{\setbox0=\hbox{$#1$}#1\hskip-\wd0\hbox to\wd0{\hss\sl/\/\hss}}
\def\periodb#1{\setbox0=\hbox{$#1$}#1\hskip-\wd0\hbox to\wd0{-}}
\newcommand{\unit}{\mathbbm{1}}   			% identity map/matrix
\newcommand{\comment}[1]{}     				% remark
\def\tyng(#1){\raisebox{0.05cm}{\hbox{\tiny$\yng(#1)$}}}			% small Young diagram
\def\tyoung(#1){\hbox{\tiny$\young(#1)$}}			% small Young diagram
\newcommand{\eand}{~~~\mbox{and}~~~}
\newcommand{\beq}{\begin{eqnarray}}
    \newcommand{\eeq}{\end{eqnarray}}
\newcommand{\eps}{\varepsilon}
\let\oldref\ref
\tikzset{Rightarrow/.style={double equal sign distance,>={Implies},->},
    triple/.style={-,preaction={draw,Rightarrow}},
    quadruple/.style={preaction={draw,Rightarrow,shorten >=0pt},shorten >=1pt,-,double,double
        distance=0.2pt}}
\begin{document}
    \begin{titlepage}
        \setcounter{page}{0}
        \renewcommand{\thefootnote}{\fnsymbol{footnote}}
        \vspace*{2.0cm}
        \begin{center}
            {\LARGE \bf
                Principal 3-Bundles with Adjusted Connections
            }
            \vskip1.5cm
            {\large Gianni Gagliardo\,\orcidlink{0009-0005-5724-7965}\,$^{a}$, Christian Saemann\,\orcidlink{0000-0002-5273-3359}\,$^{a}$, and Roberto Tellez-Dominguez\,\orcidlink{0000-0002-8348-253X}\,$^{b}$}~\footnote{{\it E-mail addresses:\/}
                \href{mailto:ggg2001@hw.ac.uk}{\ttfamily ggg2001@hw.ac.uk},
                \href{mailto:c.saemann@hw.ac.uk}{\ttfamily c.saemann@hw.ac.uk}, 
                \href{mailto:roberto.tellezdominguez@ceu.es}{\ttfamily roberto.tellezdominguez@ceu.es}
            }
            
            \setcounter{footnote}{0}
            \renewcommand{\thefootnote}{\arabic{thefootnote}}
            \vskip1cm
            $^a$ 
            Maxwell Institute for Mathematical Sciences\\
            Department of Mathematics,
            Heriot--Watt University\\
            Edinburgh EH14 4AS, United Kingdom\\[.3cm]
            
            $^b$
            Instituto de Ciencias Matem\'aticas\\
            C. Nicol\'as Cabrera, 13-15, 28049 Madrid, Spain
        \end{center}
        \vskip1.0cm
        \begin{center}
            {\bf Abstract}
        \end{center}
        \begin{quote}
            We explore the notion of an adjusted connection for principal 3-bundles. We first derive the explicit form of an adjustment datum for 3-term $L_\infty$-algebras, which allows us to give a local description of such adjusted connections and their infinitesimal symmetries. We then integrate the corresponding action Lie 3-algebroid to an action Lie 3-groupoid, encoding local connection forms with finite (higher) symmetries. This also yields the notion of an adjusted 2-crossed module of Lie groups. Stackifying the action Lie 3-groupoid then gives us the explicit description of principal 3-bundles with adjusted connections in terms of differential cohomology. These connections appear in a number of contexts within high-energy physics, and we list local examples arising in gauged supergravity as well as a global example arising in various contexts in string/M-theory. Our primary motivation, however, stems from U-duality, and we also define a notion of categorified torus that forms an adjusted 2-crossed module, which we hope to be useful in lifting T-duality to M-theory.
        \end{quote}

    \end{titlepage}
    
    \tableofcontents
    
    \section{Introduction and results}
    
    \paragraph{Motivation.} The interplay between differential geometry and physics, particularly high-energy physics and within this area, especially string/M-theory, has been an extraordinarily productive one, both on the mathematical and the physical side. It is therefore natural to study further predictions and requirements of the framework known as string theory and to explore their differential geometric consequences. In this process, one is soon led to higher generalizations of parallel transport underlying so-called higher gauge theories~\cite{Baez:2010ya,Borsten:2024gox}. These are physical field theories whose kinematic data is given by connections on higher or categorified principal bundles, which are locally described by higher degree differential forms. In the literature, these higher bundles are also known as ($n-$)gerbes~\cite{Giraud:1971,0817647309}.
    
    In the past, higher principal bundles have been defined in many different ways, see e.g.~\cite{Nikolaus:2011ag} for a review of some of them. Their topological definition is usually rather straightforward, following evident principles of categorification. Endowing these higher bundles with connections, however, is more subtle. In particular, the natural higher generalizations lead to connections that are either slightly too general~\cite{Breen:math0106083,Aschieri:2003mw} or too restrictive, see~\cite{Borsten:2024gox} for a detailed summary of the situation. 
    
    Physics fortunately provides guidance on how to fix this issue, e.g.~in the so-called tensor hierarchies of gauged supergravity. Essentially, the unwanted freedom in the formulation of connections from~\cite{Breen:math0106083,Aschieri:2003mw} can be fixed by providing an additional datum on the higher gauge group. For particular higher gauge Lie algebras and in the local case, this datum was called a Chern--Simons term~\cite{Sati:2008eg}. A more general notion called an adjustment was subsequently introduced in~\cite{Saemann:2019dsl}, but the exploration of this mathematical structure is still ongoing.
    
    In two classes of examples, there is an interpretation of the adjustment data. First, there is a class of higher gauge Lie algebras modeled by homotopy Lie algebras, which arise from differential graded Lie algebras in a particular shift-truncation procedure. For these, the (local and infinitesimal) adjustment datum is given by an alternator in a homotopy Lie algebra, which describes the lift of anti-symmetry of the Lie bracket up to homotopy~\cite{Borsten:2021ljb}. Second, adjustments are also required for Lie groupoid bundles, and here they were identified with Cartan connections on the structure Lie groupoid~\cite{Fischer:2024vak}.
    
    An explicit global description of adjusted connections on general principal 2-bundles in the form of differential cocycles for adjusted connections was provided in~\cite{Rist:2022hci}. In the case that a Lie crossed module also admits a model as a multiplicative gerbe, adjustments on the crossed module were proven to be equivalent to connections on the multiplicative gerbe in~\cite{Tellez-Dominguez:2023wwr}, leading to a unification of the notions of adjusted connections and trivializations of the Chern--Simons 2-gerbe. 
    
    Adjusted connections on higher principal bundles have already found a number of applications in theoretical physics. The above-mentioned class of examples where the adjustment is given by an alternator arises in the tensor hierarchies of gauged supergravities. The other class of examples on Lie groupoid bundles allows for a vast generalization of gauge-matter theories ubiquitous in elementary particle physics. Within string theory, or rather its low-energy limit of supergravity, adjusted connections on principal 2-bundles describe the gauge potentials in the Neveu--Schwarz sector. One concrete application of this perspective is a differentially refined description of the topological T-duality of~\cite{Bouwknegt:2003vb} in terms of principal 2-bundles with adjusted connections\footnote{See~\cite{Nikolaus:2018qop} for the purely topological case.}~\cite{Kim:2022opr,Kim:2023hqx}, reproducing and extending a number of results known from the literature.
    
    String theory itself is believed to be a limit of M-theory, a theory on an eleven-dimensional space-time, with 11d supergravity the low-energy limit. This theory naturally comes with a gauge potential 3-form, which is part of a connection on a principal 3-bundle. Moreover, the T-duality of string theory lifts to a much richer and currently intensely studied symmetry called U-duality. It is therefore natural to develop the explicit formulation of adjusted connections on principal 3-bundle, and this is the goal of this paper.
    
    \paragraph{Key results.} We start from a local description of adjusted connections given by 3-term $L_\infty$-algebra-valued differential forms. Such connections are morphisms of differential graded commutative algebras from the Weil algebra of an $L_\infty$-algebra to the differential forms on a local patch of a manifold. Considering a particular truncation of the corresponding inner homomorphisms, one obtains the so-called BRST complex of the connection, which describes the local action of infinitesimal bundle isomorphisms, or gauge transformations in physics parlance, as well as their higher analogs. 
    
    Consistency of this BRST complex requires a particular choice of generators of the Weil algebra, and the Weil algebra automorphism bringing the natural choice of generators as in~\cite{Sati:2008eg} to a consistent choice is called a (local, infinitesimal) adjustment. We identify the relevant moduli of this automorphism, providing explicit descriptions for adjustments on 3-term $L_\infty$-algebras in \ref{sec:local_connections}. 
    
    The resulting local description of connection forms, together with their (higher) infinitesimal gauge transformations form a strict action Lie algebroid. This BRST Lie 3-algebroid serves as a starting point for the discussion of the finite symmetries: it identifies the relevant curvature forms, and provides a consistency check for the gauge transformations.
    
    To keep computations manageable, we then restrict ourselves to 2-crossed modules of Lie algebras, integrating to 2-crossed modules of Lie groups. These describe a semistrict 3-group or Gray group. Equivalently, they describe simplicial Lie groups with a Moore complex with non-trivial components in degrees $-2$, $-1$ and $0$. The corresponding 2-crossed modules of Lie algebras have an overlap with 3-term $L_\infty$-algebras, but neither is a subset of the other; see \ref{app:Lie-3-algebras}.
    
    In a first step, we integrate the (strict) BRST Lie 3-algebroid for a 2-crossed module of Lie algebras to a strict BRST Lie 3-groupoid in \ref{sec:BRST-groupoid}. This BRST Lie 3-groupoid captures local connection forms as well as the action of (finite) gauge and higher gauge transformations and their composition. The underlying calculations are made non-trivial by the involved axioms of the structure maps in 2-crossed modules, and they are very lengthy and error-prone. 
    
    Computing the integration amounts to postulating gauge and higher gauge transformations. The unadjusted gauge transformations, i.e., those that are consistent for fake flat connections\footnote{That is, connections for which all curvature forms but the one of highest form degree vanish.}, are easily guessed. For general connections, however, we have to introduce deformations that ensure that all (higher) morphisms in the BRST Lie 3-groupoid act consistently. In particular, they must compose associatively and link lower morphisms with the same target. We show that these deformations are captured by an adjustment, additional structure maps on the 2-crossed module of Lie groups that must satisfy certain adjustment relations. This yields the definition of an adjusted 2-crossed module of Lie groups in~\ref{def:adjusted-2XM}.
    
    Once the BRST Lie 3-groupoid is established, we can stackify it to the differential cohomology describing principal 3-bundles with adjusted connections subordinate to a chosen cover of the base manifold. This is done in \ref{sec:diff_cohomology}. Concretely, we link local connection forms over double overlaps by gauge transformations, which are then glued together over triple overlaps using higher gauge transformations, etc. We note that so far, only partial information on the form of even the unadjusted differential cocycles was available in the literature~\cite{Saemann:2013pca}, see also~\cite{Jurco:2009px} and~\cite{Martins:2009aa} for Čech cocycles and a discussion of holonomies, as well as~\cite{Wang:2013dwa,Song:2021jnw} for related work. Particularly the higher gauge transformations were mostly missing. Our results fill this gap for fake-flat connections and extend it to the general non-fake-flat case.
    
    \paragraph{Applications.} Our key results are obtained after a computational tour de force, and to justify this effort, we give a number of applications in \ref{sec:examples}. We start with the infinitesimal case and the tensor hierarchy of $d=4$ gauged supergravity, showing that the previous results of~\cite{Borsten:2021ljb} fit our perspective here and that they provide an example of adjustments for local 3-term $L_\infty$-valued connections.
    
    We then turn to twisted differential string structures as introduced in a somewhat different form in~\cite{Sati:2009ic} as a first finite example. In this example, only few of the adjustment moduli are non-trivial, but we include it due to its importance in string theory. 
    
    Finally, we come to the example that is inspired by our original motivation for this research: the lift of T-duality to M-theory. To this end, we define a 3-group version of the categorified tori introduced in~\cite{Ganter:2014zoa}. Recall that the latter were instrumental for interpreting topological T-duality in terms of principal 2-bundles in~\cite{Nikolaus:2018qop,Kim:2022opr,Kim:2023hqx}, where they played the role of the relevant structure 2-group. Moreover, their algebraic data contained a natural adjustment, which allowed for the differential refinement of the principal 2-bundles describing the topological T-duality. 
    
    We show that our 3-group categorified tori form 2-crossed modules, and the underlying algebraic data again contains a natural adjustment. The application to T-duality-like constructions is the main topic of future research~\cite{Gagliardo:2025b}.
    
    \section{Local connections on principal 3-bundles}\label{sec:local_connections}
    
    It is useful to start with local connections and their infinitesimal gauge symmetries: the computations here are substantially simpler than in the global case with finite gauge symmetries, and the local results serve as a reference and as a source of cross-checks for the global computations.
    
    The construction of global $L_\infty$-algebra-valued connection forms was first discussed in~\cite{Sati:2008eg}, with related previous work~\cite{Cartan:1949aaa,Cartan:1949aab,Bojowald:0406445} and in particular~\cite{Kotov:2007nr}. In this paper, we will use the local construction of adjusted connection forms on a contractible patch $U$ of a manifold $M$, as explained in~\cite{Saemann:2019dsl}, see also~\cite{Fischer:2024vak} for the case of Lie algebroid-valued connections and~\cite{Borsten:2024gox} for a review. 
    
    \subsection{Local \texorpdfstring{$L_\infty$}{L infinity}-algebra-valued connection forms}
    
    In the following, we model Lie 3-algebras by 3-term $L_\infty$-algebras, see \ref{app:Lie-3-algebras,app:L-infty} for details. We use $L_\infty$-algebra-valued local connection forms, as introduced in~\cite{Sati:2008eg}, and we briefly review the construction. Recall that any $L_\infty$-algebra $\frL$ dually defines a semi-free\footnote{i.e., there are relations involving the differential} differential graded commutative algebra (dgca), its Chevalley--Eilenberg algebra $\sfCE(\frL)$, given by the free symmetric tensor algebra\footnote{Here and in the following, the notation $V[i]$ denotes the graded vector space $V=\oplus_{k\in \IZ} V_k$, grade-shifted such that the homogeneously graded degree-$j$ elements $(V[i])_j$ are given by $V_{i+j}$.}\textsuperscript{,}\footnote{We also assume that $\frL[1]$ allows for a degree-wise dualization to a vector space $\frL[1]^*$, e.g.~because the homogeneously graded components are finite-dimensional.}
    \begin{equation}
        \sfCE(\frL)=(\odot^\bullet (\frL[1]^*),\sfd_{\sfCE})~.
    \end{equation} 
    The differential $\sfd_{\sfCE}$ is fully defined by its action on $t\in\frL[1]^*$, 
    \begin{equation}
        \sfd_{\sfCE} t=-f_1(t)-\tfrac12 f_2(t,t)-\tfrac{1}{3!}f_3(t,t,t)+\ldots~,
    \end{equation} 
    and the expression $f_i(t,\ldots,t)$ are, up to signs, given by the higher brackets $\mu_i$. For example, in the special case of a differential graded Lie algebra $\frL$ with basis $(e_\alpha)$ and structure constants defined by\footnote{We use Einstein summation convention, i.e., every index appearing twice is summed over.}
    \begin{equation}
        \mu_1(e_\alpha)\eqqcolon f_{1\alpha}^\beta e_\beta
        \eand
        \mu_1(e_\alpha,e_\beta)\eqqcolon f_{2\alpha\beta}^\gamma e_\gamma~,
    \end{equation}
    the Chevalley--Eilenberg algebra is freely generated by generators $t^\alpha$ shifted-dual to the basis elements $e_\alpha$ and of degree~$1-|e_\alpha|$, and the differential acts as
    \begin{equation}
        \sfd_{\sfCE} t^\alpha=-f_{1\beta}^\alpha t^\beta-\tfrac12 f_{2\beta\gamma}^\alpha t^\beta t^\gamma~.
    \end{equation} 
    
    Given an $L_\infty$-algebra $\frL$ concentrated in non-positive degrees, a flat local $\frL$-valued connection on a contractible manifold $U$ is given by a dgca-morphism
    \begin{equation}\label{eq:conn_morph}
        \caA:\sfCE(\frL)\rightarrow (\Omega^\bullet(U),\rmd)~.
    \end{equation} 
    The map $\caA$ is completely determined by the image of the generators, and in the case of the differential graded Lie algebra, we have local connection forms
    \begin{equation}
        A^\alpha\coloneqq \caA(t^\alpha)\in \Omega^{|t^\alpha|}(U)~.
    \end{equation} 
    Compatibility with the differential implies that the curvature
    \begin{equation}
        F^\alpha\coloneqq \rmd A^\alpha + \tfrac12 f^\alpha_{2\beta\gamma} A^\beta A^\gamma+f^\alpha_{1\beta} A^\beta
    \end{equation} 
    vanishes.    
    
    To remove the flatness constraint, we can switch to the Weil algebra of $\frL$ as in~\cite{Sati:2008eg}. This differential graded commutative algebra is a doubling of the Chevalley--Eilenberg algebra,
    \begin{equation}
        \sfW(\frL)=\sfCE(T[1]\frL)=(\odot^\bullet(\frL[1]^*\oplus \frL[2]^*),\sfd_{\sfW})~.
    \end{equation} 
    We denote by $\sigma$ the canonical shift isomorphism $\frL[1]^*\rightarrow\frL[2]^*$, which extends to a nil\-qua\-dratic derivation $\sfW(\frL)$ of degree~$1$. The differential $\sfd_\sfW$ is again defined by its action on $t+\sigma(t)\in \frL[1]^*\oplus \frL[2]^*$, and we have
    \begin{equation}
        \sfd_\sfW t=\sfd_\sfCE t+\sigma(t)~,~~~\sfd_\sfW \sigma(t)=-\sigma(\sfd_\sfW t)~.
    \end{equation} 
    
    A general $\frL$-valued connection is then a dgca-morphism
    \begin{equation}
        \caA:\sfW(\frL)\rightarrow (\Omega^\bullet(U),\rmd)~,
    \end{equation} 
    where the de~Rham complex $(\Omega^\bullet(U),\rmd)$ can be regarded as the Weil algebra of $U$, which, in turn, is trivially regarded as a Lie algebroid. 
    
    As an example, let us consider again the special case of $\frL$ being a differential graded Lie algebra. The morphism $\caA$ is defined by the image of the generators, and we define
    \begin{equation}
        \begin{aligned}
            A^\alpha&\coloneqq \caA(t^\alpha)\in \Omega^{|t^\alpha|}(U)~,
            \\
            F^\alpha&\coloneqq \caA(\sigma(t^\alpha))\in \Omega^{|t^\alpha|+1}(U)~.
        \end{aligned}
    \end{equation}
    Compatibility with the differential gives the definition of the curvatures $F^\alpha$ in terms of the potential forms $A^\alpha$ as well as their Bianchi identities:
    \begin{equation}
        \begin{aligned}
            F^\alpha&=\rmd A^\alpha + \tfrac12 f^\alpha_{2\beta\gamma} A^\beta A^\gamma+f^\alpha_{1\beta} A^\beta~,
            \\
            0&=\rmd F^\alpha+f^\alpha_{2\beta\gamma} A^\beta F^\gamma+f^\alpha_{1\beta} F^\beta~.
        \end{aligned}
    \end{equation}
    For further concrete examples, see e.g.~the discussion in~\cite{Borsten:2024gox} or our discussion of Lie 3-algebras below.
    
    We note that in the description of connections as a morphism of the form~\eqref{eq:conn_morph}, we have a splitting of the generators of $\sfW(\frL)$ into ``horizontal ones'' and ``vertical ones'' with respect to the natural projection
    \begin{equation}
        \sfW(\frL)\rightarrow \sfCE(\frL)~.
    \end{equation} 
    The images of the former define the components of the local connection forms, while the images of the latter define the local curvature forms. Because the form of the curvature fixes the local and infinitesimal bundle isomorphisms, the choice of horizontal and vertical generators of the Weil algebra is relevant. In particular, dgca-automorphisms of the Weil algebra which rotate these generators lead to different notions of curvatures, which induce different forms of gauge transformations. This will become important in the following.
    
    \subsection{Adjusting \texorpdfstring{$L_\infty$}{L infinity}-algebras}
    
    Recall that the BRST complex is a differential graded commutative algebra that describes how the (higher) Lie algebra of infinitesimal bundle isomorphisms, also known as (higher) gauge transformations, acts on the local (higher) connection forms. Formally, it is the Chevalley--Eilenberg algebra of the (higher) action algebroid of the (higher) Lie algebra of gauge transformations acting on the connection, see e.g.~\cite{Jurco:2018sby} for more details.
    
    Consider again an $L_\infty$-algebra $\frL$ and a contractible manifold $U$. For flat $\frL$-valued connections, the corresponding BRST complex is given by the inner homomorphisms\footnote{Recall that morphisms of graded vector spaces are described, dually, by a map of graded rings.} $\ihom(T[1]U,\frL[1])$ in the category of N$Q$-manifolds (i.e., non-positively graded manifolds). Note that $\ihom(T[1]U,\frL[1])$ consists of elements of degrees $g\leq 0$, with $g$ what is called the negative ghost degree in physics. That is, elements of degree~$0$ describe the connection, elements of degree~$-1$ describe the ghosts or parameters of infinitesimal gauge transformations or bundle isomorphisms, elements of degree~$-2$ describe the ghosts for ghosts or parameters of higher infinitesimal gauge transformations or bundle isomorphisms, etc.
    
    Concretely, the inner hom is given by its image on generators $t^\alpha$ of $\sfCE(\frL[1])$, 
    \begin{equation}\label{eq:inner_hom_expansion}
        \caA(t^\alpha)=A^\alpha+\sum_{0<j\leq|t^\alpha|}\xi_{|t^\alpha|j}c^\alpha_{|t^\alpha|j}~.
    \end{equation} 
    Here, $A^\alpha$ is a local connection $|t^\alpha|$-form, $c^\alpha_{ij}\in \Omega^{i-j}(U)$ is a (higher) ghost of ghost degree~$j$, and $\xi_{ij}$ is a formal variable of ghost degree~$-j$, giving the required freedom for the inner homomorphisms.
    
    Note that $\ihom(T[1]U,\frL[1])$ carries a natural differential $Q$ which acts on a function $\caA \in (\ihom(T[1]U,T[1]\frL[1]))^*$ as\footnote{Note that inner homs here do not automatically commute with the differential.} 
    \begin{equation}
        Q\caA\coloneqq\rmd_U\circ \caA-\caA\circ \sfd_{\sfCE}~,
    \end{equation} 
    where $\rmd_U$ is the de~Rham differential on $U$ and $\rmd_{\sfW}$ the differential in the Weil algebra $\sfW(\frL)$. The BRST differential $Q_{\rm BRST} $ is obtained as the part of $Q$ with ghost degree~$1$.
    
    For general connections, we again replace $\frL[1]$ by $T[1]\frL[1]$ or, dually, switch from the Chevalley--Eilenberg algebra to the Weil algebra. This means that the maps~\eqref{eq:inner_hom_expansion} are extended to
    \begin{equation}\label{eq:inner_hom_expansion2}
        \begin{aligned}
            \caA(t^\alpha)&=A^\alpha+\sum_{0<j\leq|t^\alpha|}\xi_{|t^\alpha|j}c^\alpha_{|t^\alpha|j}~,
            \\
            \caA(\sigma(t^\alpha))&=F^\alpha+\sum_{0<j\leq|t^\alpha|+1}\zeta_{|t^\alpha|j}d^\alpha_{|t^\alpha|j}~,
        \end{aligned}
    \end{equation}
    where $A^\alpha$, $\xi_{ij}$ and $c^\alpha_{ij}$ are as in~\eqref{eq:inner_hom_expansion}, $F^\alpha$ are the local curvature forms, $d^\alpha_{ij}\in\Omega^{i+1-j}(U)$ are additional ghosts of ghost degree~$j$, and $\zeta_{ij}$ are additional formal parameters of ghost degree~$-j$. 
    
    Evidently, we get too many ghosts or, equivalently, too much gauge freedom, and we need to set the $d^\alpha_{ij}$ to zero. This truncation, however, is only consistent if the BRST differential, i.e., the part of ghost degree~1 of
    \begin{equation}
        \hat Q\caA\coloneqq\rmd_U\circ \caA-\caA\circ \sfd_{\sfW}
    \end{equation}
    vanishes on these; otherwise, we expect the BRST complex to be {\em open}. Since the flat BRST complex is unaffected, the failure of closure must be proportional to the curvature forms $F^\alpha$, for $\alpha$ in a certain index set $I$. 
    
    The $F^\alpha$ with $\alpha\in I$ are usually called {\em fake curvatures} and generically comprise all but the curvature components of highest form degree. The fact that the BRST complex is open means that connections and (higher) gauge transformations do not compose into a higher action Lie algebroid, as one would expect. At the finite level, this can be reflected in either gauge transformations not composing to gauge transformations or higher gauge transformations linking (higher) gauge transformations with different images. Both are problematic.
    
    For many $L_\infty$-algebras\footnote{We are not aware of any counterexample.}, one can achieve compatibility of the reduction with the differential by performing an automorphism $\phi$ of $\sfW(\frL)$. This changes the generators mapped to curvatures and therefore the action of the differential $\sfd_\sfW$ on the $d^\alpha_{ij}$. If the automorphism also preserves the notion of flat connections, i.e., the following diagram commutes:
    \begin{equation}\label{eq:respect_projection}
        \begin{tikzcd}
            \sfW(\frL)\arrow[r,"\phi"] \arrow[d] & \sfW(\frL) \arrow[d]
            \\
            \sfCE(\frL)\arrow[r,"\rmid"] & \sfCE(\frL)
        \end{tikzcd}
    \end{equation} 
    then it is called an \uline{adjustment}~\cite{Saemann:2019dsl,Kim:2019owc,Rist:2022hci,Fischer:2024vak}.
    
    Concretely, the automorphism $\phi:\sfW(\frL)\rightarrow \sfW(\frL)$, which is defined in terms of the images of $t+\sigma(t)\in \frL[1]^*\oplus \frL[2]^*$, is then of the form
    \begin{equation}
        \begin{aligned}
            \phi(t)&=t~,
            \\
            \phi(\sigma(t))&=\sigma(t)+\kappa_2(t,\sigma(t))+\kappa_{3,1}(t,t,\sigma(t))+\kappa_{3,2}(t,\sigma(t),\sigma(t))+\ldots~.
        \end{aligned}
    \end{equation}
    
    We note that a Weil differential is generically of the form
    \begin{equation}
        \begin{aligned}
            \sfd_\sfW(\phi(\sigma(t)))&=\sfd_\sfW(\sigma(t))+\sfd(\kappa_2(t,\sigma(t))+\ldots)
            \\
            &=-f_1(\sigma(t))\pm f_2(t,\sigma(t))\pm\ldots\pm \kappa_2(\sfd_\sfW t,\sigma(t))\pm\kappa_2(t,\sfd_\sfW \sigma(t))+\ldots~,
        \end{aligned}
    \end{equation}
    which induces a BRST differential
    \begin{equation}
        Q_{\rm BRST} d_{ij}^\alpha = \rmd_U d^\alpha_{i(j+1)}-R_{j+1}~,
    \end{equation} 
    where $R_{j+1}$ is the part of 
    \begin{equation}
        (\scA\circ \sfd_\sfW )(\phi(\sigma(t^\alpha)))
    \end{equation} 
    of ghost degree~$j+1$. Setting $d_{ij}^\alpha$ to zero leaves us with the condition
    \begin{equation}
        R_{j+1}=0~~~\mbox{for}~~~j\geq 1~.
    \end{equation} 
    This condition is satisfied if and only if $\sfd_\sfW(\phi(\sigma(t)))$ is of a particular form:
    \begin{proposition}\label{prop:adjustment_L_infty}
        Consider an $L_\infty$-algebra $\frL$ together with an automorphism $\phi$ on its Weil algebra covering the identity map on its Chevalley--Eilenberg algebra as in~\eqref{eq:respect_projection}. The map $\phi$ defines an adjustment if and only if we have
        \begin{equation}\label{eq:diff_form}
            \begin{aligned}
                \sfd_\sfW(\phi(\sigma(t)))=\sum_i \frac{1}{i!}\Big(&m_{\alpha_1\ldots \alpha_i}\phi(\sigma(t^{\alpha_1}))\ldots\phi(\sigma(t^{\alpha_i}))
                \\
                &+n_{\alpha_0\alpha_1\ldots \alpha_i}t^{\alpha_0}\phi(\sigma(t^{\alpha_1}))\ldots\phi(\sigma(t^{\alpha_i}))\Big)~,
            \end{aligned}
        \end{equation}
        for some structure constants $m_{\alpha_1\ldots \alpha_i}$ and $n_{\alpha_0\alpha_1\ldots \alpha_i}$, where $n_{\alpha_0\alpha_1\ldots \alpha_i}$ vanishes unless $|t^{\alpha_0}|=1$.
    \end{proposition}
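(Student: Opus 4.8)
The plan is to recast the adjustment condition, as derived just above the statement, purely in terms of the element $\sfd_\sfW(\phi(\sigma(t^\alpha)))\in\sfW(\frL)$ and then to read off the admissible monomials by a ghost-degree count. I use throughout that, since $\phi$ is a dgca-automorphism fixing every $t^\alpha$, the elements $\{t^\beta\}\cup\{\phi(\sigma(t^\gamma))\}$ again freely generate $\sfW(\frL)$ as a graded-commutative algebra, so $\sfd_\sfW(\phi(\sigma(t^\alpha)))$ admits a unique expansion into monomials in these generators. I also use that, once the ghosts $d^\alpha_{ij}$ are set to zero, the reduced morphism $\caA$ acts on generators by $\caA(t^\beta)=A^\beta+\sum_{0<j\le|t^\beta|}\xi_{|t^\beta|j}c^\beta_{|t^\beta|j}$, carrying ghost-degree components $0,1,\dots,|t^\beta|$, and by $\caA(\phi(\sigma(t^\gamma)))=F^\gamma$, of ghost degree~$0$. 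As $\caA$ is an algebra morphism, ghost degrees add over the factors of a monomial, so for $M=t^{\beta_1}\cdots t^{\beta_p}\,\phi(\sigma(t^{\gamma_1}))\cdots\phi(\sigma(t^{\gamma_q}))$ the image $\caA(M)$ contains ghost degrees from~$0$ up to $\sum_k|t^{\beta_k}|$, the top value being reached by selecting from each $t$-factor its $0$-form top ghost $c^{\beta_k}_{|t^{\beta_k}||t^{\beta_k}|}$.

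For the easy direction ($\Leftarrow$) I would assume $\sfd_\sfW(\phi(\sigma(t^\alpha)))$ has the stated shape and apply the reduced $\caA$. Each purely-$\sigma$ monomial maps to a product of curvatures $F^{\gamma_1}\cdots F^{\gamma_q}$ of ghost degree~$0$, while each term carrying a single factor $t^{\alpha_0}$ with $|t^{\alpha_0}|=1$ maps to $\caA(t^{\alpha_0})\,F^{\gamma_1}\cdots F^{\gamma_q}$, of ghost degree at most~$1$ since $\caA(t^{\alpha_0})=A^{\alpha_0}+\xi_{11}c^{\alpha_0}_{11}$. Hence $\caA(\sfd_\sfW(\phi(\sigma(t^\alpha))))$ carries no ghost degree $\ge2$, so $R_{j+1}=0$ for all $j\ge1$ and $\phi$ is an adjustment.

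The substance is in the converse ($\Rightarrow$), and the hard part will be to rule out cancellations between the top-ghost contributions of distinct monomials. Being an adjustment means precisely $R_{j+1}=0$ for $j\ge1$, i.e.\ the reduced image $\caA(\sfd_\sfW(\phi(\sigma(t^\alpha))))$ has vanishing ghost degree $\ge2$ part. To isolate the coefficient of a given monomial I would pass to the specialization that sets $A^\beta=0$ and all non-top ghosts $c^\beta_{ij}$ with $j<|t^\beta|$ to zero, retaining only the $0$-form top ghosts $c^\beta\coloneqq c^\beta_{|t^\beta||t^\beta|}$ and the curvatures $F^\gamma$; this is legitimate since these are independent coordinate functions on the inner hom. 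Under it $\caA(t^\beta)\mapsto\xi\,c^\beta$ and $\caA(\phi(\sigma(t^\gamma)))\mapsto F^\gamma$, so each monomial $M$ maps to a single term proportional to $c^{\beta_1}\cdots c^{\beta_p}F^{\gamma_1}\cdots F^{\gamma_q}$ of ghost degree $\sum_k|t^{\beta_k}|$. A monomial in the graded-commutative generating set is determined by the multisets of its $\beta$- and $\gamma$-indices, and these multisets are recovered from the target product; distinct surviving monomials therefore yield linearly independent images, so there is no cancellation and the vanishing of the ghost degree $\ge2$ part forces the coefficient of every monomial with $\sum_k|t^{\beta_k}|\ge2$ in $\sfd_\sfW(\phi(\sigma(t^\alpha)))$ to vanish.

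Collecting the surviving monomials then gives exactly the asserted form: those with $\sum_k|t^{\beta_k}|=0$ are the pure-$\sigma$ terms contributing the coefficients $m_{\alpha_1\ldots\alpha_i}$, while those with $\sum_k|t^{\beta_k}|=1$ must contain a single factor $t^{\alpha_0}$ with $|t^{\alpha_0}|=1$, contributing the $n_{\alpha_0\alpha_1\ldots\alpha_i}$ with precisely the stated degree restriction; every monomial with two or more $t$-factors, or with one $t$-factor of degree $\ge2$, is excluded since it has $\sum_k|t^{\beta_k}|\ge2$. Throughout I would suppress the Koszul signs, as they affect neither the additivity of ghost degrees nor the linear independence underlying the no-cancellation step.
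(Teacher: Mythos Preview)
Your argument is correct and follows the same ghost-degree counting as the paper: both directions hinge on the observation that, after setting the curvature ghosts $d^\alpha_{ij}$ to zero, the only way a monomial in the generators $\{t^\beta\}\cup\{\phi(\sigma(t^\gamma))\}$ can contribute ghost degree $\geq 2$ is via two or more $t$-factors or a single $t$-factor of degree $\geq 2$. Your treatment of the converse is in fact more careful than the paper's, which simply asserts that ``this argument also makes it obvious that there cannot be any further terms''; your specialization to top ghosts $c^\beta_{|t^\beta||t^\beta|}$ and curvatures $F^\gamma$ makes the no-cancellation step explicit by exhibiting an injection from monomials in $\sfW(\frL)$ to monomials in independent coordinates on the inner hom.
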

    \begin{proof}
        The contribution of $m_{\alpha_1\ldots \alpha_i}\phi(\sigma(t^{\alpha_1}))\ldots\phi(\sigma(t^{\alpha_i}))$ to $R_{j+1}$ consists of a product of curvatures $F^{\alpha}$ and curvature ghosts $d_{ij}^{\alpha}$. Because $R_{j+1}$ has positive ghost degree, at least one factor is a ghost $d_{ij}^\alpha$ which is set to zero, and hence the contribution vanishes. The contribution of the second type to $R_{j+1}$ consists of a product of curvatures $F^{\alpha}$ and curvature ghosts $d_{ij}^{\alpha}$, as well as a single factor of either a gauge potential $A^\alpha$ or a ghost $c^\alpha_{11}$, where the restriction on the ghost is due to $|t^{\alpha_0}|=1$. Since $\deg(R_{j+1})\geq 2$, this implies that at least one factor of curvature ghosts $d_{ij}^{\alpha}$ has to appear, leading to a vanishing of the contribution after truncation. This argument also makes it clear that there cannot be any further terms in~\eqref{eq:diff_form}.
    \end{proof}
    
    We can now define the adjusted BRST complex, describing adjusted $L_\infty$-algebra-valued connections with their (higher) gauge transformations.
    \begin{definition}[\cite{Fischer:2024vak}](Adjusted BRST complex)
        Consider an $L_\infty$-algebra $\frL$ with adjustment, concentrated in non-positive degrees, a contractible manifold $U$, and the projection
        \begin{subequations}
            \begin{equation}
                \sfp_{\rm curv}: T[1]\frL[1]\cong\frL[1]\oplus \frL[2] \rightarrow \frL[2]
            \end{equation} 
            onto the vertical generators $\phi(\sigma(t))$ induced by the adjustment. The BRST complex for $\frL$-valued local connection forms on $U$ is the presheaf
            \begin{equation}
                \mathfrak{brst}^{\rm adj}(U,\frL)=\ihom^{\rm red}(T[1]U,T[1]\frL[1])~,
            \end{equation}
            where the reduced inner homomorphisms are given by 
            \begin{equation}\label{eq:ihom_red}
                \ihom^{\rm red}(\caN,T[1]\frL[1])\coloneqq \{\caA^*\in\ihom(\caN,T[1]\frL[1])~|~\rmdeg(\sfp_{\rm curv}\circ \caA^*)=0\}
            \end{equation}
            with $\deg(-)$ the ghost degree of the image.            The BRST differential is the ghost degree~1 part of 
            \begin{equation}
                \hat Q\caA\coloneqq\rmd_U\circ \caA-\caA\circ \sfd_{\sfW}
            \end{equation}
        \end{subequations}
        for $\caA\in (\ihom^{\rm red}(T[1]U,T[1]\frL[1]))^*$.
    \end{definition}
    
    The constrained form of the adjusted differential also has the following consequence.
    \begin{corollary}\label{cor:Bianchi}
        The Bianchi identities for adjusted $L_\infty$-algebra-valued connections are of the form
        \begin{equation}
            \rmd F=\sum_i \frac{1}{i!}\Big(n_{\alpha_0\alpha_1\ldots \alpha_i}A^{\alpha_0}F^{\alpha_1}\ldots F^{\alpha_i}+m_{\alpha_1\ldots \alpha_i}F^{\alpha_1}\ldots F^{\alpha_i}\Big)~.
        \end{equation} 
    \end{corollary}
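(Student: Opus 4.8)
The plan is to derive the identity directly from the defining property of $\caA$ as a dgca-morphism together with the constrained form of the differential established in \ref{prop:adjustment_L_infty}. The essential observation is that, after adjustment, the curvature forms are the images of the \emph{rotated} vertical generators, i.e.\ $F^\alpha = \caA(\phi(\sigma(t^\alpha)))$, while the connection forms remain $A^\alpha = \caA(t^\alpha)$ since $\phi$ fixes the $t^\alpha$. Because $\caA$ intertwines $\sfd_\sfW$ with the de~Rham differential $\rmd$ on $U$, we have $\rmd\circ\caA = \caA\circ\sfd_\sfW$, and applying this to $\phi(\sigma(t^\alpha))$ immediately yields
\begin{equation}
    \rmd F^\alpha = \rmd\,\caA(\phi(\sigma(t^\alpha))) = \caA\big(\sfd_\sfW(\phi(\sigma(t^\alpha)))\big)~.
\end{equation}

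The next step is to substitute the explicit form of $\sfd_\sfW(\phi(\sigma(t^\alpha)))$ provided by \ref{prop:adjustment_L_infty} into the right-hand side. Since $\caA$ is an algebra homomorphism, it distributes over the symmetric products appearing there, sending each factor $\phi(\sigma(t^{\alpha_k}))$ to $F^{\alpha_k}$ and the single factor $t^{\alpha_0}$ (present only in the $n$-terms, where $|t^{\alpha_0}|=1$) to $A^{\alpha_0}$. This produces exactly
\begin{equation}
    \rmd F = \sum_i \frac{1}{i!}\Big(n_{\alpha_0\alpha_1\ldots\alpha_i}A^{\alpha_0}F^{\alpha_1}\ldots F^{\alpha_i} + m_{\alpha_1\ldots\alpha_i}F^{\alpha_1}\ldots F^{\alpha_i}\Big)~,
\end{equation}
which is the claimed Bianchi identity. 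The structure constants $m$ and $n$ are inherited verbatim from the adjustment datum, so no new data enters.

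The only point requiring genuine care is bookkeeping in the graded-commutative setting: one must verify that the signs generated when $\caA$ passes through the symmetric products match those implicit in the Koszul conventions of $\sfCE(\frL)$ and $\sfW(\frL)$, and that the degree shift $\sigma$ together with the form degrees $|F^\alpha| = |t^\alpha|+1$ are tracked consistently. I expect this sign-matching to be the main (though routine) obstacle; once the conventions are fixed as in the local connection formalism above, the identity follows with no further input beyond multiplicativity of $\caA$ and the form of the differential from \ref{prop:adjustment_L_infty}.
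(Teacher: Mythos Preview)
Your argument is correct and is precisely the implicit reasoning behind the paper's treatment: the paper states the result as a corollary with no separate proof, since applying the dgca-morphism $\caA$ to both sides of the expression~\eqref{eq:diff_form} from \ref{prop:adjustment_L_infty} and using $\rmd\circ\caA=\caA\circ\sfd_\sfW$ together with multiplicativity immediately yields the Bianchi identity. Your write-up simply makes this step explicit.
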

    
    \begin{remark}\label{rem:action_modification}
        \Cref{cor:Bianchi} makes it clear that our construction is more general than the Chern--Simons terms of~\cite{Sati:2008eg}, in which only the contributions $m_{\alpha_1\ldots \alpha_i}F^{\alpha_1}\ldots F^{\alpha_i}$ were permitted. This is sufficient for the string-like $L_\infty$-algebras $\frL$, which contain a base Lie algebra $\frL_0$, which does not act on the other components $\frL_i$ with $i<0$. Hence, the de~Rham differential does not get ``covariantized'' by the terms $n_{\alpha_0\alpha_1\ldots \alpha_i}A^{\alpha_0}F^{\alpha_1}\ldots F^{\alpha_i}$. In the general case, however, the adjustments modify the action of the elements in $\frL_0$ on $\frL_i$ with $i<0$, which is invisible from the Chern--Simons term perspective.
    \end{remark}
    
    \subsection{Semi-strict Lie 3-algebras}\label{eq:example_3_term_L_infinity}
    
    We now calculate the form of the adjustment of a general 3-term $L_\infty$-algebra with underlying differential complex
    \begin{equation}
        \frL=(\frL_{-2} \xrightarrow{~\mu_1~} \frL_{-1} \xrightarrow{~\mu_1~} \frL_{0})\coloneqq (\frl \xrightarrow{~\mu_1~} \frh \xrightarrow{~\mu_1~} \frg)~.
    \end{equation}
    We denote by $t,r,s$ elements of $\frg[1]^*,\frh[1]^*,\frl[1]^*$, respectively. These generate the Cheval\-ley--Eilenberg algebra $\sfCE(\frL)$, and the differential $\sfd_\sfCE$ acts on these as follows:
    \begin{equation}
        \begin{aligned}
            \sfd_{\sfCE} t &=-\tfrac12 f_{2}(t,t)-f_1(r)~,
            \\
            \sfd_{\sfCE} r &=-\tfrac1{3!}f_3(t,t,t)-f_{2}(t,r)+f_1(s)~,
            \\
            \sfd_{\sfCE} s &=\tfrac1{4!}f_4(t,t,t,t)+\tfrac12f_3(t,t,r)-f_2(t,s)+\tfrac12f_2(r,r)~,
        \end{aligned}
    \end{equation}
    where the $f_i$ are multilinear maps $\odot^i \frL[1]^* \rightarrow \odot^i \frL[1]^*$ and the signs are a convenient convention. The corresponding Weil algebra is generated by elements $t,s,r$ and $\hat t=\sigma(t),\hat s=\sigma(s),\hat r=\sigma(r)$, which are elements in $\frg[2]^*,\frh[2]^*,\frl[2]^*$, respectively. The differential $\sfd_\sfW$ acts on these as 
    \begin{equation}
        \begin{aligned}
            \sfd_{\sfW} t &=-\tfrac12 f_{2}(t,t)-f_1(r)+\hat t~,
            \\
            \sfd_{\sfW} r &=-\tfrac1{3!}f_3(t,t,t)-f_{2}(t,r)+f_1(s)+\hat r~,
            \\
            \sfd_{\sfW} s &=\tfrac1{4!}f_4(t,t,t,t)+\tfrac12f_3(t,t,r)-f_2(t,s)+\tfrac12f_2(r,r)+\hat s~,
            \\
            \sfd_{\sfW} \hat t &=-f_{2}(t,\hat t)+f_1(\hat r)~,
            \\
            \sfd_{\sfW} \hat r &=\tfrac1{2!}f_3(t,t,\hat t)-f_{2}(t,\hat r)+f_{2}(\hat t,r)-f_1(\hat s)~,
            \\
            \sfd_{\sfW} \hat s &=\tfrac1{3!}f_4(t,t,t,\hat t)-\tfrac12f_3(t,t,\hat r)+f_3(t,\hat t,r)-f_2(r,\hat r)-f_2(t,\hat s)+f_2(\hat t,s)~,
        \end{aligned}
    \end{equation}
    where we extended the $f_i$ in the obvious manner to the tensor algebra of $\frL[1]^*\oplus \frL[2]^*$.
    
    The most general automorphism $\phi$ on $\sfW$ of the form~\eqref {eq:respect_projection} is evidently given by 
    \begin{equation}
        \begin{aligned}
            \tilde t&=\phi(t)=t~,~&\tilde r&=\phi(r)=r~,~&\tilde s&=\phi(s)=s~,
            \\
            \tilde{\hat{t}}&=\phi(\hat t)=\hat t~,~&\tilde{\hat{r}}&=\phi(\hat r)=\hat r-\kappa_1(t,\hat t)~,~&\tilde{\hat{s}}&=\phi(\hat s)=\hat s-\kappa_2(t,\tilde{\hat{r}})-\kappa_3(r,\hat t)-\tfrac12 \kappa_4(t,t,\hat t)
        \end{aligned}
    \end{equation}
    for some linear maps
    \begin{equation}
        \kappa_1:\frg\times \frg\rightarrow \frh
        ~,~~~
        \kappa_2:\frg\times \frh\rightarrow \frl
        ~,~~~
        \kappa_3:\frh\times \frg\rightarrow \frl
        ~,\eand
        \kappa_4:\frg\times\frg\times \frg\rightarrow \frl~,
    \end{equation} 
    with $\kappa_4$ anti-symmetric in its first two arguments. The Weil differential on these generators is readily computed.
    
    A map $\caA\in (\ihom(T[1]U,T[1]\frL[1]))^*$ now has components
    \begin{equation}\label{eq:component_maps}
        \begin{aligned}
            \caA(\tilde t)&=\xi_{11} c_{11}+A~,
            \\
            \caA(\tilde r)&=\xi_{22} c_{22}+\xi_{21} c_{21}+B~,
            \\
            \caA(\tilde s)&=\xi_{33} c_{33}+\xi_{32} c_{32}+\xi_{31} c_{31}+C~,
            \\
            \caA(\tilde{\hat{t}})&=\zeta_{12} d_{12}+\zeta_{11} d_{11}+F~,
            \\
            \caA(\tilde{\hat{r}})&=\zeta_{23} d_{23}+\zeta_{22} d_{22}+\zeta_{21} d_{21}+H~,
            \\
            \caA(\tilde{\hat{s}})&=\zeta_{34} d_{34}+\zeta_{33} d_{33}+\zeta_{32} d_{32}+\zeta_{31} d_{31}+G~,
        \end{aligned}
    \end{equation}
    where the components $A,F,c_{1i},d_{1i}$ take values in $\frg$, the components $B,H,c_{2i},d_{2i}$ take values in $\frh$, and the components $C,G,c_{3i},d_{3i}$ take values in $\frl$. Moreover, components in the $k$th column of~\eqref{eq:component_maps} are $(k-1)$-forms on $U$, and $\xi_{ij}$ and $\zeta_{ij}$ denote again formal variables of (ghost) degree~$-j$, as in~\eqref{eq:inner_hom_expansion2}.
    
    Using \ref{prop:adjustment_L_infty} and requiring that the Weil differential becomes of the form~\eqref{eq:diff_form} yields the following relations that the adjustment maps $\kappa_{1,2,3,4}$ must satisfy: 
    \begin{subequations}\label{eq:kappa-axioms-infinitesimal}
        \begin{equation}
            \begin{aligned}
                \kappa_1(\mu_1(Y),X)-\mu_1(\kappa_3(Y,X))+\mu_2(X,Y)&=0~,
                \\
                \kappa_2(\mu_1(Y_1),Y_2)+\mu_2(Y_1,Y_2)-\kappa_3(Y_1,\mu_1(Y_2))&=0~,
                \\
                \kappa_3(\mu_1(Z),X)-\mu_2(X,Z)&=0~,
            \end{aligned}
        \end{equation}
        \begin{equation}\label{eq:kappa-axioms-infinitesimal_k_1_2}
            \begin{aligned}
                \kappa_1(\mu_2(X_1,X_2),X_3)&=\kappa_1\Big(X_1,\mu_2(X_2,X_3)-\mu_1(\kappa_1(X_2,X_3))\Big)
                \\
                &\hspace{1cm}-\kappa_1\Big(X_2,\mu_2(X_1,X_3)-\mu_1(\kappa_1(X_1,X_3))\Big)
                \\
                &\hspace{1cm}+\mu_2(X_1,\kappa_1(X_2,X_3))-\mu_2(X_2,\kappa_1(X_1,X_3))
                \\
                &\hspace{1cm}-\mu_1(\kappa_4(X_1,X_2,X_3))+\mu_3(X_1,X_2,X_3)~,
            \end{aligned}
        \end{equation}
        \begin{equation}
            \begin{aligned}
                \kappa_2(\mu_2(X_1,X_2),Y)&=\kappa_2\Big(X_1,\mu_2(X_2,Y)-\kappa_1(X_2,\mu_1(Y))+\mu_1(\kappa_2(X_2,Y))\Big)
                \\
                &\hspace{1cm}-\kappa_2\Big(X_2,\mu_2(X_1,Y)-\kappa_1(X_1,\mu_1(Y))+\mu_1(\kappa_2(X_1,Y))\Big)
                \\
                &\hspace{1cm}+\mu_2(X_1,\kappa_2(X_2,Y))-\mu_2(X_2,\kappa_2(X_1,Y))
                \\
                &\hspace{1cm}-\kappa_4(X_1,X_2,\mu_1(Y))-\mu_3(X_1,X_2,Y)~,
            \end{aligned}
        \end{equation}
        \begin{equation}
            \begin{aligned}
                \kappa_3(\mu_2(X_1,Y),X_2)&=-\kappa_3\Big(Y,\mu_2(X_1,X_2)-\mu_1(\kappa_1(X_1,X_2))\Big)+\mu_2(X_1,\kappa_3(Y,X_2))
                \\
                &\hspace{1cm}-\mu_2(Y,\kappa_1(X_1,X_2))-\kappa_4(X_1,\mu_1(Y),X_2)+\mu_3(X_1,X_2,Y)~,
            \end{aligned}
        \end{equation}
        \begin{equation}
            \begin{aligned}
                \kappa_3(\mu_3(X_1,X_2,X_3),X_4)&=-\kappa_4\Big(X_1,X_2,\mu_2(X_3,X_4)-\mu_1(\kappa_1(X_3,X_4))\Big)
                \\
                &\hspace{1cm}+\kappa_4\Big(X_1,X_3,\mu_2(X_2,X_4)-\mu_1(\kappa_1(X_2,X_4))\Big)
                \\
                &\hspace{1cm}-\kappa_4\Big(X_2,X_3,\mu_2(X_1,X_4)-\mu_1(\kappa_1(X_1,X_4))\Big)
                \\
                &\hspace{1cm}+\kappa_4(X_1,\mu_2(X_2,X_3),X_4)-\kappa_4(X_2,\mu_2(X_1,X_3),X_4)
                \\
                &\hspace{1cm}+\kappa_4(X_3,\mu_2(X_1,X_2),X_4)+\mu_2(X_1,\kappa_4(X_2,X_3,X_4))
                \\
                &\hspace{1cm}-\mu_2(X_2,\kappa_4(X_1,X_3,X_4))+\mu_2(X_3,\kappa_4(X_1,X_2,X_4))
                \\
                &\hspace{1cm}+\mu_3(X_1,X_2,\kappa_1(X_3,X_4))-\mu_3(X_1,X_3,\kappa_1(X_2,X_4))
                \\
                &\hspace{1cm}+\mu_3(X_2,X_3,\kappa_1(X_1,X_4))-\mu_4(X_1,X_2,X_3,X_4)
            \end{aligned}
        \end{equation}
    \end{subequations}
    for all $X,X_i\in \frg$, $Y,Y_i\in\frh$, and $Z\in\frl$.
    
    Furthermore, the field strengths are given by 
    \begin{equation}
        \begin{aligned}
            F&=\rmd A+\tfrac12 \mu_2(A,A)+\mu_1(B)~,
            \\
            H&=\rmd B+\mu_2(A,B)-\tfrac1{3!}\mu_3(A,A,A)+\mu_1(C)-\kappa_1(A,F)~,
            \\
            G&=\rmd C+\mu_2(A,C)-\tfrac12\mu_2(B,B)-\tfrac12\mu_3(A,A,B)-\tfrac1{4!}\mu_4(A,A,A,A)
            \\
            &\hspace{1cm}+\kappa_2(A,H)+\kappa_3(B,F)-\tfrac12 \kappa_4(A,A,F)~,
        \end{aligned}
    \end{equation}
    and the corresponding Bianchi identities are
    \begin{subequations}\label{eq:Bianchi_identities}
        \begin{equation}
            \begin{aligned}
                \rmd F+\mu_2(A,F)-\mu_1(\kappa_1(A,F))-\mu_1(H)&=0~,
                \\
                \rmd H+\mu_2(A,H)-\kappa_1(A,\mu_1(H))+\mu_1(\kappa_2(A,H))+\kappa_1(F,F)-\mu_1(G)&=0~,
            \end{aligned}
        \end{equation}
        and
        \begin{equation}
            \begin{aligned}
                &\rmd G+\mu_2(A,G)+\kappa_2(A,\mu_1(G))-\kappa_2(A,\kappa_1(F,F))-\kappa_3(\kappa_1(A,F),F)-\kappa_4(A,F,F)
                \\
                &\hspace{1cm}=\kappa_2(F,H)+\kappa_3(H,F)~.
            \end{aligned}
        \end{equation}
    \end{subequations}
    
    The (1-)gauge transformations of the gauge potential components read as 
    \begin{equation}
        \begin{aligned}
            \delta A&=\rmd \alpha+\mu_2(A,\alpha)-\mu_1(\Lambda)~,
            \\
            \delta B&=\rmd \Lambda+\mu_2(A,\Lambda)-\mu_2(\alpha,B)+\tfrac{1}{2} \mu_3(A,A,\alpha)-\mu_1(\Sigma)+\kappa_1(\alpha,F)~,
            \\
            \delta C&=\rmd \Sigma+\mu_2(A,\Sigma)+\mu_2(B,\Lambda)-\mu_2(\alpha,C)+\tfrac{1}{2} \mu_3(A,A,\Lambda)-\mu_3(A,\alpha,B)
            \\
            &\hspace{1cm}-\tfrac{1}{3!} \mu_4(A,A,A,\alpha)-\kappa_2(\alpha,H)-\kappa_3(\Lambda,F)-\kappa_4(A,\alpha,F)~,
        \end{aligned}
    \end{equation}
    where we identified
    \begin{equation}
        \alpha=c_{11}~,~~~\Lambda=c_{21}~,\eand \Sigma=c_{31}~.
    \end{equation}
    The curvature components transform according to 
    \begin{equation}
        \begin{aligned}
            \delta F&=-\mu_2(\alpha,F)+\mu_1(\kappa_1(\alpha,F))~,
            \\
            \delta H&=-\mu_2(\alpha,H)+\kappa_1(\alpha,\mu_1(H))-\mu_1(\kappa_2(\alpha,H))~,
            \\
            \delta G&=-\mu_2(\alpha,G)-\kappa_2(\alpha,\mu_1(G))+\kappa_2(\alpha,\kappa_1(F,F))+\kappa_3(\kappa_1(\alpha,F),F)+\kappa_4(\alpha,F,F)~.
        \end{aligned}
    \end{equation}
    
    There are also higher or 2-gauge transformations, which are parameterized by
    \begin{equation}
        \beta=c_{22}~,~~~\Xi=c_{32}~,
    \end{equation} 
    and their infinitesimal action on the parameters of 1-gauge transformations read as
    \begin{equation}
        \begin{aligned}
            \delta \alpha&=\mu_1(\beta)~,
            \\
            \delta \Lambda&=\rmd \beta+\mu_2(A,\beta)+\mu_1(\Xi)~,
            \\
            \delta \Sigma&=\rmd \Xi+\mu_2(A,\Xi)-\mu_2(B,\beta)+\kappa_3(\beta,F)-\tfrac12 \mu_3(A,A,\beta)~,
        \end{aligned}
    \end{equation}
    as well as 3-gauge transformations, parameterized by $\gamma=c_{33}$, which act as
    \begin{equation}
        \begin{aligned}
            \delta \beta&=-\mu_1(\gamma)~,
            \\
            \delta \Xi=&\rmd \gamma+\mu_2(A,\gamma)~.
        \end{aligned}
    \end{equation}
    
    Altogether, we note that there are four adjustment maps $\kappa_{1,2,3,4}$, which satisfy the relations~\eqref{eq:kappa-axioms-infinitesimal}. Their role is to deform the linear action of $\frg$ onto the total curvature form $F+G+H$ to the following non-linear one:
    \begin{equation}
        \begin{aligned}
            \mu_2(X,F+H+G)&\rightarrow \mu_2(X,F)-\mu_1(\kappa_1(X,F))
            \\
            &\hspace{1cm}+\mu_2(X,H)-\kappa_1(X,\mu_1(H))+\mu_1(\kappa_2(X,H))
            \\
            &\hspace{1cm}+\mu_2(X,G)+\kappa_2(X,\mu_1(G))
            \\
            &\hspace{1cm}-\kappa_2(X,\kappa_1(F,F))-\kappa_3(\kappa_1(X,F),F)-\kappa_4(X,F,F)~,
        \end{aligned}
    \end{equation}
    as indicated in \ref{rem:action_modification}. Contrary to the case of connections taking values in a Lie 2-algebra, here both the gauge transformations and 2-gauge transformations are deformed by the maps $\kappa_i$.
    
    \section{The BRST Lie 3-groupoid}\label{sec:BRST-groupoid}
    
    In order to define connections on principal 3-bundles, we need to develop a suitable BRST\footnote{The name derives from the Becchi--Rouet--Stora--Tyutin (BRST) complex in physics, which is essentially the Chevalley--Eilenberg algebra of the gauge action algebroid of the (higher) gauge algebra under consideration.} Lie 3-groupoid integrating the infinitesimal gauge transformations derived earlier.
    
    The integration of general semi-strict Lie 3-algebras to a Lie 3-group is complicated. Since we are interested in formulations that allow for concrete computations, we restrict ourselves to semistrict 3-groups. These can be described by 2-crossed modules of Lie groups. The corresponding 2-crossed modules of Lie algebras have an overlap with Lie 3-algebras, see \ref{app:Lie-3-algebras} and \ref{app:2crossedmodules} for more details on 2-crossed modules.
    
    Concretely we will always work with a structure Lie 3-group given by a 2-crossed module of Lie groups $\caG=(\sfL \rightarrow \sfH \rightarrow \sfG)$ with corresponding 2-crossed module of Lie algebras $\sfLie(\caG)=(\frl \rightarrow \frh \rightarrow \frg)$.  We will also restrict ourselves to some manifold $U$, which is to be seen as a local patch of the base manifold of a higher principal bundle.
    
    We first make explicit the relevant strict BRST Lie 3-algebroid. This is a slight and evident modification of the BRST Lie 3-algebroid given in \ref{eq:example_3_term_L_infinity} which is obtained by using the relations between 3-term $L_\infty$-algebras and 2-crossed modules of Lie algebras explained in~\ref{app:L-infty}. We then construct, step by step, the integrating BRST Lie 3-groupoid, and explain in some detail the underlying computations.
    
    \subsection{The BRST Lie 3-algebroid}
    
    For the gauge Lie 3-algebra given by $\sfLie(\caG)$ and a particular manifold $U$, we have a strict action Lie 3-algebroid, describing local connection forms, their (1-)gauge, 2-gauge, and 3-gauge transformations. The base manifold of this algebroid\footnote{We shall ignore analytical subtleties relating to the infinite-dimensionality of the function spaces, as it is irrelevant to the discussion.} consists of the local connection forms 
    \begin{equation}
        A\in \Omega^1(U,\frg)~,~~~B\in \Omega^2(U,\frh)~,~~~C\in \Omega^3(U,\frl)~.
    \end{equation}
    Infinitesimal gauge transformations are parameterized by
    \begin{equation}
        \alpha\in \Omega^0(U,\frg)~,~~~\Lambda\in \Omega^1(U,\frh)~,~~~\Sigma\in\Omega^2(U,\frl)~,
    \end{equation}
    and act according to
    \begin{equation}
        \begin{aligned}
            \delta A&=\rmd \alpha+[A,\alpha]-\sft(\Lambda)~,
            \\
            \delta B&=\rmd \Lambda+A\acton \Lambda-\alpha\acton B-\sft(\Sigma)+\kappa_1(\alpha,F)~,
            \\
            \delta C&=\rmd \Sigma+A\acton \Sigma-\{B,\Lambda\}-\{\Lambda,B\}-\alpha\acton C-\kappa_2(\alpha,H)-\kappa_3(\Lambda,F)~,
        \end{aligned}
    \end{equation}
    where we introduced the maps 
    \begin{equation}
        \kappa_1:\frg\times \frg\rightarrow \frh~,~~~\kappa_2:\frg\times \frh\rightarrow \frl~,\eand \kappa_3:\frh\times \frg\rightarrow \frl~,
    \end{equation} 
    extended in the obvious way to $\sfLie(\caG)$-valued differential forms, that satisfy
    \begin{subequations}\label{eq:kappa-axioms-infinitesimal-2xm}
        \begin{equation}
            \begin{aligned}
                \kappa_1(\sft(Y),X)-\sft(\kappa_3(Y,X))+X\acton Y&=0~,
                \\
                \kappa_2(\sft(Y_1),Y_2)-\{Y_1,Y_2\}-\{Y_2,Y_1\}-\kappa_3(Y_1,\sft(Y_2))&=0~,
                \\
                \kappa_3(\sft(Z),X)-X\acton Z&=0~,
            \end{aligned}
        \end{equation}
        \begin{equation}
            \begin{aligned}
                \kappa_1([X_1,X_2],X_3)&=\kappa_1\Big(X_1,[X_2,X_3]-\sft(\kappa_1(X_2,X_3))\Big)
                -\kappa_1\Big(X_2,[X_1,X_3]-\sft(\kappa_1(X_1,X_3))\Big)
                \\
                &\hspace{1cm}+X_1\acton\kappa_1(X_2,X_3)-X_2\acton \kappa_1(X_1,X_3)~,
            \end{aligned}
        \end{equation}
        \begin{equation}
            \begin{aligned}
                \kappa_2([X_1,X_2],Y)&=\kappa_2\Big(X_1,X_2\acton Y-\kappa_1(X_2,\sft(Y))+\sft(\kappa_2(X_2,Y))\Big)
                \\
                &\hspace{1cm}-\kappa_2\Big(X_2,X_1\acton Y-\kappa_1(X_1,\sft(Y))+\sft(\kappa_2(X_1,Y))\Big)
                \\
                &\hspace{1cm}+X_1\acton \kappa_2(X_2,Y)-X_2\acton \kappa_2(X_1,Y)~,
            \end{aligned}
        \end{equation}
        \begin{equation}
            \begin{aligned}
                \kappa_3(X_1\acton Y,X_2)&=-\kappa_3\Big(Y,[X_1,X_2]-\sft(\kappa_1(X_1,X_2))\Big)+X_1\acton \kappa_3(Y,X_2)
                \\
                &\hspace{1cm}+\{Y,\kappa_1(X_1,X_2)\}+\{\kappa_1(X_1,X_2),Y\}~,
            \end{aligned}
        \end{equation}
    \end{subequations}
    Infinitesimal 2-gauge transformations are parameterized by
    \begin{equation}
        \beta\in\Omega^0(U,\frh)
        \eand
        \Xi\in \Omega^1(U,\frl)
    \end{equation}
    and act according to
    \begin{equation}
        \begin{aligned}
            \delta \alpha&=\sft(\beta)~,
            \\
            \delta \Lambda&=\rmd \beta+A\acton \beta+\sft(\Xi)~,
            \\
            \delta \Sigma&=\rmd \Xi+A\acton \Xi+\{B,\beta\}+\{\beta,B\}+\kappa_3(\beta,F)~.
        \end{aligned}
    \end{equation}
    Finally, we have infinitesimal 3-gauge transformations, parameterized by 
    \begin{equation}
        \gamma\in \Omega^0(U,\frl)
    \end{equation}
    and acting as 
    \begin{equation}
        \begin{aligned}
            \delta \beta&=\sft(\gamma)~,
            \\
            \delta \Xi=&\rmd \gamma+A\acton \gamma~.
        \end{aligned}
    \end{equation}
    The adjusted curvature forms read as
    \begin{equation}
        \begin{aligned}
            F&=\rmd A+\tfrac12[A,A]+\sft(B)~,
            \\
            H&=\rmd B+A\acton B+\sft(C)-\kappa_1(A,F)~,
            \\
            G&=\nabla C+\{B,B\}+\kappa_2(A,H)+\kappa_3(B,F)~.
        \end{aligned}
    \end{equation}
    
    \begin{definition}
        The above-listed spaces and actions uniquely define a strict action Lie 3-algebroid $\frA(U,\sfLie(\caG))$, which we call the \uline{BRST Lie 3-algebroid} for the Lie 3-algebra $\sfLie(\caG)$.
    \end{definition}
    \noindent We refrain from discussing this BRST Lie 3-algebroid in any explicit form (e.g.~as an N$Q$-manifold) since it is not relevant to our further discussion. For our purposes, it is fully sufficient to specify what is being acted on and what these actions are; see also the clarification after \ref{thm:integration}. 
    
    \subsection{Objects and 1-morphisms}
    
    We now come to the integration of the BRST Lie 3-algebroid $\frA(U,\sfLie(\caG))$. We will construct a strict integrating Lie 3-groupoid iteratively. We present a path through the required computations, but we will suppress individual steps that can be completed by iterative application of the axioms in a 2-crossed module of Lie groups and Lie algebras, outlined in~\ref{app:2crossedmodules} and~\ref{app:Formulas}. We stress that many of these computations are of considerable length and best done using a computer algebra programme.
    
    We will postulate parts of 1-, 2-, and 3-morphisms that can be deduced from the corresponding expressions for principal 2-bundles or the Čech cocycles explored e.g.~in~\cite{Saemann:2013pca}. We then have to check that the morphisms compose correctly, that the composition is unital, associative and comes with an inverse, and that higher morphisms do not change the images of the lower morphisms.
    
    The strict Lie 3-groupoid $\scB(\caG,U)$ has as its objects
    \begin{equation}
        \scB(\caG,U)_0\coloneqq \{~(A,B,C)~|~ A\in \Omega^1(U,\frg)~,~B\in \Omega^2(U,\frh)~,~C\in \Omega^3(U,\frl)~\}~.
    \end{equation}
    The objects of its morphism 2-category are
    \begin{equation}
        \scB(\caG,U)_1\coloneqq \scB(\caG,U)_0\times \{~(g,\Lambda,\Sigma)~|~g\in \Omega^0(U,\sfG)~,~\Lambda\in \Omega^1(U,\frh)~,~\Sigma\in\Omega^2(U,\frl)~\}~,
    \end{equation} 
    and we have 1-morphisms
    \begin{equation}
        \begin{tikzcd}[column sep=3.5cm]
            (\tilde A,\tilde B,\tilde C) & \arrow[l,"{(A,B,C,g,\Lambda,\Sigma)}"'] (A,B,C)
        \end{tikzcd}~.
    \end{equation}
    
    The formulas for $\tilde A$, $\tilde B$, $\tilde C$ can be gleaned from the known results for undeformed principal 3-bundles~\cite{Saemann:2013pca} as well as adjusted principal 2-bundles~\cite{Rist:2022hci}. We therefore know that
    \begin{equation}\label{eq:1-morphisms}
        \begin{aligned}
            \tilde A&=g^{-1}A g+g^{-1}\rmd g-\sft(\Lambda)~,
            \\
            \tilde B&=g^{-1}\acton B+\tilde\nabla\Lambda+\tfrac12[\Lambda, \Lambda]-\sft(\Sigma)-\kappa_1(g^{-1},F)~,
            \\
            \tilde C&=g^{-1}\acton C+\tilde \nabla\Sigma+\sft(\Lambda)\acton \Sigma-\{\tilde B,\Lambda\}-\{\Lambda,\tilde B\}+\{\Lambda,\tilde \nabla\Lambda\}+\tfrac12\{\Lambda,[\Lambda,\Lambda]\}
            \\
            &\hspace{1cm}+\kappa_2(g^{-1},H)-R_1(g,\Lambda,F)~.
        \end{aligned}
    \end{equation}
    Here, we have introduced the following functions parameterizing deformations of the gauge transformations of fake-flat curvatures (as found in~\cite{Saemann:2013pca}):
    \begin{equation}
        \begin{aligned}
            \kappa_1&:&\sfG\times \frg&\rightarrow \frh~,~~~&\kappa_2&:&\sfG\times \frh&\rightarrow \frl~,
            \\
            R_1&:&\sfG\times \frh\times \frg&\rightarrow \frl~,
        \end{aligned}
    \end{equation}
    which are linear in their Lie-algebra-valued arguments, and which we extend in the obvious manner to $\sfLie(\caG)$-valued differential forms. In the following, we will also denote the linearizations (i.e., differentials restricted to the units) of the maps $\kappa_1$ and $\kappa_2$, i.e., the maps $\kappa_1:\frg\times \frg\rightarrow \frh$ and $\kappa_2:\frg\times \frg\rightarrow \frl$, by the same symbols.
    
    The fact that the unit gauge transformation $(g,\Lambda,\Sigma)=(\unit,0,0)$ has to preserve $A$, $B$, and $C$ implies the following rules:
    \begin{equation}\label{eq:der_k1_k2_unitality}
        \kappa_1(\unit,X)=0\eand
        \kappa_2(\unit,Y)=0~.
    \end{equation}
    
    The form of the 1-morphisms also implies that the 2- and 3-form curvatures of source and target are related by
    \begin{equation}
        \begin{aligned}
            \tilde F&=g^{-1}F g-\sft(\kappa(g^{-1},F))~,
            \\
            \tilde H&=g^{-1}\acton H+\sft(\kappa_2(g^{-1},H))+
            \tilde F\acton \Lambda-\sft(\kappa_3(g,\Lambda,F))+\kappa_1(\sft(\Lambda),\tilde F)-\rmd(\kappa_1(g^{-1},F))
            \\
            &\hspace{1cm}-\kappa_1(g^{-1}\nabla g,\tilde F)-(g^{-1}\nabla g)\acton\kappa_1(g^{-1},F)+g^{-1}\acton\kappa_1(A,F)~.
            \\
            \tilde G&=g^{-1}\acton G+\tilde F\acton \Sigma+\{\Lambda,\tilde H\}-\{\tilde H,\Lambda\}-\{\Lambda,\tilde F\acton \Lambda\}+\ldots~,
        \end{aligned}
    \end{equation}
    where $\ldots$ captures terms given by deformations $\kappa_{1,2}$ and $R_{1}$. In particular, $\ldots$ vanishes for flat connections $(A,B,C)$ with $(F,H,G)=(0,0,0)$. Inspection immediately leads to the following lemma.
    \begin{lemma}
        The 1-morphisms~\eqref{eq:1-morphisms} map flat connections to flat connections.
    \end{lemma}
    
    Consider now the (strict) composition of 1-morphisms:
    \begin{equation}\label{eq:diag_comp_1_morphisms}
        \begin{tikzcd}[column sep=0.5in]
            & (A_2,B_2,C_2) \arrow[ld,"{(\ldots,g_2,\Lambda_2,\Sigma_2)}"']
            \\
            (A_3,B_3,C_3) & {} & \arrow[lu,"{(\ldots,g_1,\Lambda_1,\Sigma_1)}"'] (A_1,B_1,C_1)
            \arrow[ll,"{(\ldots,g_3,\Lambda_3,\Sigma_3)}"]
        \end{tikzcd}
    \end{equation} 
    Comparing the component $A_3$ and allowing arbitrary $A_1$, we obtain
    \begin{equation}
        g_3=g_1g_2\eand \sft(\Lambda_3)=\sft(\Lambda_2+g^{-1}_2\acton \Lambda_1)~.
    \end{equation} 
    Comparing the component $B_3$ and allowing arbitrary $A_1$, $B_1$, we obtain
    \begin{equation}
        g_3=g_1g_2~,~~~\Lambda_3=\Lambda_2+g^{-1}_2\acton \Lambda_1~,~~~\sft(\Sigma_3)=\sft(\Sigma_2+g^{-1}_2\acton \Sigma_1+\{\Lambda_2,g_2^{-1}\acton \Lambda_1\}-\kappa_4(g_1,g_2,F))
    \end{equation}
    as well as the first adjustment condition:
    \begin{equation}\label{eq:adj_cond_1_1}
        \begin{aligned}
            g_1\acton \kappa_1(g_2,X)&=\kappa_1(g_1 g_2,X)-\kappa_1\left(g_1,\rmAd_{g_2}(X)-\sft(\kappa_1(g_2,X))\right)-\sft\left(\kappa_4(g_2^{-1},g_1^{-1},X)\right)~,
        \end{aligned}
    \end{equation}
    where we have an adjustment function\footnote{This function is included for generality; we will see that it vanishes for the strict setting considered here.}
    \begin{equation}
        \kappa_4:\sfG\times \sfG\times \frg \rightarrow \frl~.
    \end{equation}
    We note that the full linearization of this adjustment function, i.e., the function $\kappa_4:\frg\times \frg\times \frg\rightarrow \frl$ also incorporates a factor of $\tfrac12$. This becomes obvious comparing condition~\eqref{eq:adj_cond_1_1} and~\eqref{eq:kappa-axioms-infinitesimal_k_1_2}.
    
    Condition \eqref{eq:adj_cond_1_1} also implies the relation
    \begin{equation}\label{eq:k1_derivation_rule}
        \begin{aligned}
            \rmd\kappa_1(g,X)&=g\acton \kappa_1(g^{-1}\rmd g,X)+\kappa(g,[g^{-1}\rmd g,X]-\sft(\kappa_1(g^{-1}\rmd g,X)))+\kappa_1(g,\rmd X)
            \\
            &\hspace{1cm}-\sft(\kappa_4(g^{-1}\rmd g,g^{-1},X))~.
        \end{aligned}
    \end{equation}    
    
    Comparing the component $C_3$ for arbitrary $A_1$, $B_1$, $C_1$, we then obtain the complete composition rules
    \begin{equation}
        g_3=g_1g_2~,~~~\Lambda_3=\Lambda_2+g^{-1}_2\acton \Lambda_1~,~~~\Sigma_3=\Sigma_2+g^{-1}_2\acton \Sigma_1+\{\Lambda_2,g_2^{-1}\acton \Lambda_1\}-\kappa_4(g_1,g_2,F)~.
    \end{equation}
    as well as an additional consistency condition
    \begin{equation}
        \caC_1=0~,
    \end{equation}
    where $\caC_1$ is an expression that vanishes for flat connections. It is listed in \ref{app:raw_adjustment_condition}, and we will analyze this condition later.
    
    Setting one of the morphisms $(\ldots,g_{1,2},\Lambda_{1,2},\Sigma_{1,2})$ to $(\ldots,\unit,0,0)$, we recover the expected transformation in the composition if
    \begin{equation}\label{eq:der_k4_unitality}
        \kappa_4(g,\unit,X)=\kappa_4(\unit,g,X)=0
    \end{equation} 
    for all $g\in \sfG$ and $X\in \frg$.
    
    Demanding associativity of composition amounts to the condition
    \begin{equation}\label{eq:der_k4}
        \begin{aligned}
            g_1\acton \kappa_4(g_2,g_3,X)&=\kappa_4(g_2,g_3g_1^{-1},X)-\kappa_4(g_2g_3,g_1^{-1},X)
            \\
            &\hspace{1cm}+\kappa_4(g_3,g_1^{-1},g_2^{-1}Xg_2-\sft(\kappa_1(g_2^{-1},X)))~.
        \end{aligned}
    \end{equation}
    
    Inverses of 1-morphisms exists if and only if
    \begin{equation}
        \kappa_4(g^{-1},g,X-Ad_g(X)+\sft(\kappa_1(g^{-1},X)))=0~,
    \end{equation} 
    and they are given by
    \begin{equation}
        (A,B,C,g,\Lambda,\Sigma)^{-1}=\Big(\tilde A,\tilde B,\tilde C,g^{-1},-g\acton \Lambda,-g\acton \Sigma+g\acton\{\Lambda,\Lambda\}-\kappa_4(g,g^{-1},F)\Big)~.
    \end{equation}
    
    \subsection{2-morphisms}
    
    The 2-morphisms are parameterized by
    \begin{equation}
        \scB(\caG,U)_2\coloneqq \scB(\caG,U)_1\times \{~(h,\Xi)~|~h\in \Omega^0(U,\sfH)~,~\Xi\in \Omega^1(U,\frl)~\}~,
    \end{equation} 
    and take the form
    \begin{equation}
        \begin{tikzcd}[column sep=3.5cm]
            (A,B,C,\tilde g,\tilde \Lambda,\tilde \Sigma) & \arrow[Rightarrow,l,"{(A,\ldots,\Sigma,h,\Xi)}"'] (A,B,C,g,\Lambda,\Sigma)
        \end{tikzcd}~.
    \end{equation}
    
    Consider then the globularity of the 2-morphisms:
    \begin{equation}\label{eq:glob_2_morphisms}
        \begin{tikzcd}[column sep=3.5cm]
            (\tilde A,\tilde B,\tilde C) & (A,B,C)
            \arrow[l,bend right=30,"{(\ldots,g,\Lambda,\Sigma)}"{name=U},swap]
            \arrow[l,bend left=30,"{~(\ldots,\tilde g,\tilde \Lambda,\tilde \Sigma)}"{name=D}]
            \arrow[Rightarrow,"{(\ldots,h,\Xi)}", from=U, to=D,start anchor={[yshift=-1ex]},end anchor={[xshift=0.325ex,yshift=1ex]}]
        \end{tikzcd}
    \end{equation}
    
    Extending the known result for the corresponding BRST Lie 2-groupoid for principal 2-bundles~\cite{Rist:2022hci}, we have
    \begin{equation}
        \begin{aligned}
            \tilde g&=\sft(h)g~,
            \\
            \tilde \Lambda &= \Lambda+g^{-1}\acton (h^{-1}\nabla h)+\sft(g^{-1}\acton\Xi)~.
        \end{aligned}
    \end{equation}
    The 2-commutativity of diagram~\eqref{eq:glob_2_morphisms} in the $\tilde B$ and $\tilde C$ components then implies 
    \begin{equation}
        \begin{aligned}
            \tilde \Sigma&=\Sigma+g^{-1}\acton (\nabla \Xi+\tfrac12[\Xi,\Xi]+\{\sft(\Xi),\Lambda\}-\{h^{-1},B\}-\{B,h^{-1}\})
            \\
            &\hspace{1cm}+\{\Lambda,g^{-1}\acton (h^{-1}\nabla h+\sft(\Xi))\}+R_2(g,h,F)~,
        \end{aligned}
    \end{equation}
    where we introduced a function
    \begin{equation}
        R_2:\sfG\times \sfH\times \frg\rightarrow \frl~.
    \end{equation} 
    Moreover, we have the further conditions
    \begin{equation}\label{eq:der_k1_1}
        \kappa_1(\sft(h),X)=\rmAd^\acton_h(X)-\sft(R_2(\unit,h^{-1},X))~,
    \end{equation} 
    where
    \begin{equation}
        \rmAd^\acton(f,A_a)\coloneqq f (A_a\acton f^{-1})~,
    \end{equation} 
    and
    \begin{equation}
        R_2(g,h,X)=-\kappa_4(\sft(h),g,X)+g^{-1}\acton \kappa_3(h,X)
    \end{equation}
    for some function $\kappa_3:\sfH\times \frg\rightarrow \frl$, as well as 
    \begin{equation}
        \caC_2=0~,
    \end{equation}
    given again in \ref{app:raw_adjustment_condition}. We again postpone analysis of this condition to later.
    
    Also, since $(h,\Xi)=(\unit,0)$ should parameterize the identity gauge transformation, we have
    \begin{equation}\label{eq:der_k3_unitality}
        R_2(g,\unit,X)=0
    \end{equation} 
    for all $g\in \sfG$ and $X\in\frg$.
    
    Next, we consider the vertical composition of 2-morphisms
    \begin{equation}\label{eq:diag_comp_2_morphisms}
        \begin{tikzcd}[column sep=0.7in]
            & (\ldots,g_2,\Lambda_2,\Sigma_2) \arrow[Rightarrow,ld,"{(\ldots,h_2,\Xi_2)}"']
            \\
            (\ldots,g_3,\Lambda_3,\Sigma_3) & & (\ldots,g_1,\Lambda_1,\Sigma_1)\arrow[Rightarrow,lu,"{(\ldots,h_1,\Xi_1)}"']
            \arrow[Rightarrow,ll,"{(\ldots,h_3,\Xi_3)}"]
        \end{tikzcd}
    \end{equation} 
    Comparing the images at $g_3$ and $\Lambda_3$, we find\footnote{The inverse order in the product $h_2h_1$ is a choice of convention, which we make to match the conventions in earlier papers, such as~\cite{Saemann:2013pca}.}
    \begin{equation}
        \begin{aligned}
            h_3&=h_2h_1~,
            \\
            \Xi_3&=\Xi_1+\sft(h_1^{-1})\acton\Xi_2-\{h_1^{-1},h_2^{-1}\nabla h_2\}~,
        \end{aligned}
    \end{equation}
    as well as an additional condition
    \begin{equation}
        \caC_3=0~.
    \end{equation} 
    
    \subsection{3-morphisms}
    
    Finally, the 3-morphisms are parameterized by
    \begin{equation}
        \scB(\caG,U)_3\coloneqq \scB(\caG,U)_2\times \{~\ell~|~\ell\in \Omega^0(U,\sfL)~\}~,
    \end{equation} 
    and take the form
    \begin{equation}
        \begin{tikzcd}[column sep=3.5cm]
            (A,B,C,g,\Lambda,\Sigma,\tilde h,\tilde \Xi) & \arrow[triple,l,"{(A,\ldots,\Xi,\ell)}"'] (A,B,C,g,\Lambda,\Sigma,h,\Xi)
        \end{tikzcd}~.
    \end{equation}
    We start again by considering the globularity of the 3-morphisms:
    \begin{equation}\label{eq:glob_3_morphisms}
        \begin{tikzcd}[column sep=3.5cm]
            (\ldots,\tilde g,\tilde \Lambda,\tilde \Sigma) & (\ldots,g,\Lambda,\Sigma)
            \arrow[Rightarrow,l,bend right=30,"{(\ldots,h,\Xi)}"{name=U},swap]
            \arrow[Rightarrow,l,bend left=30,"{~(\ldots,\tilde h,\tilde \Xi)}"{name=D}]
            \arrow[triple,"{(\ldots,\ell)}", from=U, to=D,start anchor={[yshift=-1ex]},end anchor={[xshift=0.325ex,yshift=1ex]}]
        \end{tikzcd}
    \end{equation}
    From the expected component of the 3-morphism,
    \begin{equation}
        \tilde h=\sft(\ell)h~,
    \end{equation} 
    we then conclude that
    \begin{equation}
        \tilde \Xi=\Xi-h^{-1}\acton (\ell^{-1}\nabla \ell)
    \end{equation}
    by demanding 3-commutativity of the diagram~\eqref{eq:glob_3_morphisms}. This also implies the additional condition
    \begin{equation}
        \caC_4=0~.
    \end{equation}
    
    The trivial three morphisms are of the form $(A,\ldots,\Xi,\unit)$. Transversal composition\footnote{By transversal, we mean the third in a sequence of horizontal, vertical, transversal.} of 3-morphisms
    \begin{equation}\label{eq:diag_comp_3_morphisms}
        \begin{tikzcd}[column sep=0.7in]
            & (\ldots,h_2,\Xi_2) \arrow[triple,ld,"{(\ldots,\ell_2)}"']
            \\
            (\ldots,h_3,\Xi_3) & & (\ldots,h_1,\Xi_1)\arrow[triple,lu,"{(\ldots,\ell_1)}"']
            \arrow[triple,ll,"{(\ldots,\ell_3)}"]
        \end{tikzcd}
    \end{equation} 
    is simply given by
    \begin{equation}
        \ell_3=\ell_2\ell_1~.
    \end{equation} 
    This composition is obviously associative and has an evident inverse.

    \subsection{Remaining 3-groupoid structures}

	Beyond the above data, we also need the following structures for a 3-groupoid, cf.~\ref{app:Gray}.
	
	\begin{lemma}
		The vertical composition of 3-morphisms is given by
		\begin{equation}
			\begin{aligned}
				&(\ldots,h_2,\Xi_2,\ell_2)\bullet (\ldots,h_1,\Xi_1,\ell_1)
				\\
				&\hspace{1cm}\coloneqq(\ldots, h_2h_1,\Xi_1+\sft(h_1^{-1})\acton\Xi_2-\{h_1^{-1},h_2^{-1}\nabla h_2\},\ell_2(h_2\acton \ell_1))~.
			\end{aligned}
		\end{equation}
		This composition is associative and unital, and it satisfies the interchange rule~\eqref{eq:interchange_rule}.
	\end{lemma}
	\begin{proof}
		The form of the horizontal composition is read off the defining diagram for the $h$-component of 2-morphisms. One then verifies that the sources and targets also match for the $\Xi$-component.
		
		Both associativity and unitality are evident by inspection.
		
		The interchange rule~\eqref{eq:interchange_rule} is implied by the fact that this condition is satisfied int the Gray tricategory $\sfB\caG$, as explained in \ref{app:2crossedmodules}, together with the fact that sources and targets match.
	\end{proof}
	
	\begin{lemma}
		Given composable 1-morphisms $(\ldots,g_2,\Lambda_2,\Sigma_2)$ and $(\ldots,g_1,\Lambda_1,\Sigma_1)$ and a 3-morphism $(\ldots,g_1,\Lambda_1,\Sigma_1,h_1,\Xi_1,\ell_1)$, left-whiskering is uniquely defined as 
		\begin{equation}
			\begin{aligned}
				&L_{(\ldots,g_2,\Lambda_2,\Sigma_2)}(\ldots,g_1,\Lambda_1,\Sigma_1,h_1,\Xi_1,\ell_1)
				\\
				&\hspace{1cm}\coloneqq(\ldots,g_1g_2,g_2^{-1}\acton \Lambda_1+\Lambda_2,\{\Lambda_2,g_2^{-1}\acton \Lambda_1\}+g_2^{-1}\acton \Sigma_1+\Sigma_2,h_1,\Xi_1,\ell_1)~.
			\end{aligned}
		\end{equation}
		Given composable 1-morphisms $(\ldots,g_2,\Lambda_2,\Sigma_2)$ and $(\ldots,g_1,\Lambda_1,\Sigma_1)$ and a 3-morphism $(\ldots,g_2,\Lambda_2,\Sigma_2,h_2,\Xi_2,\ell_2)$, right-whiskering is uniquely defined as 
		\begin{equation}
			\begin{aligned}
				&R_{(\ldots,g_1,\Lambda_1,\Sigma_1)}(\ldots,g_2,\Lambda_2,\Sigma_2,h_2,\Xi_2,\ell_2)
				\\
				&\hspace{1cm}\coloneqq(\ldots,g_1g_2,g_2^{-1}\acton \Lambda_1+\Lambda_2,\{\Lambda_2,g_2^{-1}\acton \Lambda_1\}+g_2^{-1}\acton \Sigma_1+\Sigma_2,
				\\
				&\hspace{2cm}g_1\acton h_2,g_1\acton (\Xi_2-\{\Lambda_1,h_2\}-\{h_2,\Lambda_1\}),g_1\acton \ell_2)~.
			\end{aligned}
		\end{equation}
		The consistent definition of right-whiskering requires 
		\begin{equation}\label{eq:kappa_3_condition}
			g\acton \kappa_3(h,F)=\kappa_3(g\acton h,\nu_2(g,F))+\{\kappa_1(g,F),g\acton h^{-1}\}+\{g\acton h^{-1},\kappa_1(g,F)\}~.
		\end{equation}
	\end{lemma}
	\begin{proof}
		The definitions of left- and right-whiskering are fixed by considering the composition of 1-morphisms, as inspection shows. Consistency of right-whiskering in the $\Sigma$-component then implies relation~\eqref{eq:kappa_3_condition} after a lengthy but straightforward computation.
		
		The relevant compatibility relations for left- and right-whiskering, \eqref{eq:whiskering_compatibility}, are evidently satisfied.
		
		Finally, left-whiskering evidently preserves vertical and transversal composition of 3-morphisms. A short computation then shows that this is also true for right-whiskering.
	\end{proof}
	
	The definition of the interchangor is now a straightforward generalization of the interchangor~\eqref{eq:interchangor_2_xm} in a 2-crossed module regarded as a one-object 3-groupoid. Consider two 2-morphisms
	\begin{equation}
		(\ldots,g_1,\Lambda_1,\Sigma_1,h_1,\Xi_1)
		\eand
		(\ldots,g_2,\Lambda_2,\Sigma_2,h_2,\Xi_2)~.
	\end{equation}
	Then there is a 3-morphism
	\begin{equation}
		\begin{aligned}
			I(\ldots,h_1,\Xi_1;\ldots,h_2,\Xi_2)&:
			R_{\sft(\ldots,h_1,\Xi_1)}(\ldots,h_2,\Xi_2)\circ L_{(\ldots,g_2,\Lambda_2,\Sigma_2)}(\ldots,h_1,\Xi_1)
			\\
			&\hspace{0.5cm}\rightarrow L_{\sfs(\ldots,h_2,\Xi_2)}(\ldots,h_1,\Xi_1)\circ R_{(\ldots,h_1,\Xi_1)}(\ldots,h_2,\Xi_2) ~,
		\end{aligned}
	\end{equation}
	where $\sfs$ and $\sft$ denote source and target, respectively, with 
	\begin{equation}
		I(\ldots,h_1,\Xi_1;\ldots,h_2,\Xi_2)=(\ldots,\{h_1,g_1\acton h_2\})~,
	\end{equation}
	which can be verified by lengthy but straightforward computation.
	
	Because of the above definition, because the $\sfL$-valued component of this interchangor is the same as that for the interchangor~\eqref{eq:interchangor_2_xm} in a 2-crossed module, and because the latter satisfies the relevant coherence axioms, it directly follows that also the former interchangor satisfies the relevant coherence axioms.
    
    \subsection{Adjusted 2-crossed modules and integration theorem}
    
    The conditions on the adjustment maps obtained above, including the conditions $\caC_{1,2,3,4}$ listed in \ref{app:raw_adjustment_condition}, can be distilled into the following definition.
    \begin{definition}[Adjusted 2-crossed module of Lie groups]\label{def:adjusted-2XM}
        An adjusted 2-crossed module of Lie groups is a 2-crossed module of Lie groups $\caG=(\sfL\xrightarrow{\sft}\sfH\xrightarrow{\sft}\sfG)$ together with maps
        \begin{equation}
            \kappa_1:\sfG\times \frg\rightarrow \frh~,~~~\kappa_2:\sfG\times \frh\rightarrow \frl~,~~~\kappa_3:\sfH\times \frg\rightarrow \frl~,
        \end{equation}
        which are unital-linear\footnote{i.e., they vanish if the group-valued argument is $\unit$, and they are linear in their Lie-algebra-valued arguments} and        
        satisfy the following relations:
        \begin{subequations}\label{eq:final_adjustment_relations}
            \begin{equation}\label{eq:cond_k1_1}
                \begin{aligned}
                    \kappa_1(\sft(h),X)&=\rmAd^{\acton}_{h}(X)-\sft\left(\kappa_3(h^{-1},X)\right)~,
                \end{aligned}
            \end{equation}
            \begin{equation}\label{eq:cond_k1_2}
                \begin{aligned}
                    g_1\acton \kappa_1(g_2,X)&=-\kappa_1\left(g_1,\nu_2(g_2,X)\right)+\kappa_1(g_1 g_2,X)~,
                \end{aligned}
            \end{equation}
            \begin{equation}
                \begin{aligned}\label{eq:cond_k2_1}
                    \kappa_2(\sft(h),Y)&=\left\{Y,h\right\}+\left\{h,Y\right\}-\kappa_3\left(h^{-1},\sft(Y)\right)~,
                \end{aligned}
            \end{equation}
            \begin{equation}\label{eq:cond_k2_2}
                \begin{aligned}
                    g_1\acton \kappa_2(g_2,Y)&=\kappa_2(g_1 g_2,Y)-\kappa_2(g_1,g_2\acton Y)-\kappa_2(g_1,\sft(\kappa_2(g_2,Y))-\kappa_1(g_2,\sft(Y)))~,
                \end{aligned}
            \end{equation}
            \begin{equation}\label{eq:cond_k3_1}
                \begin{aligned}
                    \kappa_3(\sft(\ell),X)&=\rmAd^{\acton}_{\ell^{-1}}(X)~,
                \end{aligned}
            \end{equation}
            \begin{equation}\label{eq:cond_k3_2}
                \begin{aligned}
                    \kappa_3(h_1h_2,X)&=\{h_2^{-1},\rmAd^\acton_{h_1^{-1}}X\}+\sft(h_2^{-1})\acton \kappa_3(h_1,X)+\kappa_3(h_2,X)~,
                \end{aligned}
            \end{equation}
            \begin{equation}\label{eq:cond_k3_3}
                g\acton \kappa_3(h,X)=\kappa_3(g\acton h,\nu_2(g,X))+\{g\acton h^{-1},\kappa_1(g,X)\}+\{\kappa_1(g,X),g\acton h^{-1}\}~,
            \end{equation}
        \end{subequations}
        where we introduced the useful shorthand
        \begin{equation}
            \begin{aligned}
                \nu_2(g,X)&\coloneqq \rmAd_g(X)-\sft(\kappa_1(g,X))~,
            \end{aligned}
        \end{equation}
        and denoted linearizations of maps by the same symbols as the original maps.
    \end{definition}
    Correspondingly, we have the following proposition.
    \begin{proposition}
        The consistency conditions for a BRST Lie 3-groupoid obtained in the previous sections, i.e., the conditions on the adjustment maps $R_{1,2},\kappa_{1,2,3,4}$ are equivalent to the data of an adjusted 2-crossed module.
    \end{proposition}
    \begin{proof}
        We first show that $\kappa_4$ vanishes. A direct approach\footnote{One can also work without this assumption, but the required computation is then longer.} is to assume that $\kappa_4$ is analytic. Applying the derivation rule following from \eqref{eq:der_k4} in $\caC_1$ and considering the terms proportional to $g_2^{-1}\rmd g_2$ yields
        \begin{equation}
            \kappa_4(g,X_1,X_2)=0
        \end{equation} 
        for all $g\in \sfG$ and $X_{1,2}\in \frg$. Plugging this back into $\caC_1$ and considering the terms proportional to $g_1^{-1}\rmd g_1$ yields
        \begin{equation}
            \kappa_4(X_1,g,X_2)=0
        \end{equation} 
        for all $g\in \sfG$ and $X_{1,2}\in \frg$. Since all linearizations of $\kappa_4$ are trivial, it follows that $\kappa_4$ itself is trivial.
        
        We have already established that the maps $\kappa_{1,2,3}$ are linear-unital in~\eqref{eq:der_k1_k2_unitality} and \eqref{eq:der_k3_unitality}. Furthermore, we have already derived the conditions~\eqref{eq:cond_k1_1}, \eqref{eq:cond_k1_2}, and \eqref{eq:cond_k3_3} in equations~\eqref{eq:der_k1_1}, \eqref{eq:adj_cond_1_1}, and~\eqref{eq:kappa_3_condition}, respectively.
        
        We continue by considering condition $\caC_2$ for $\Lambda=0$ and $g=\unit$. Applying the obvious derivation rule $\rmd \kappa_3(h,X)=\kappa_3(h,\rmd X)+\ldots$, where $\ldots$ represents terms involving $h^{-1}\rmd h$ which are irrelevant here, together with the Bianchi identity for $F$, cf.~\eqref{eq:Bianchi_identities}, then produces~\eqref{eq:cond_k2_1}.
        
        Similarly, we can consider condition $\caC_1$ for $\kappa_4=0$, and applying the derivation rule for $\kappa_1$, as well as the Bianchi identity for $F$. The part proportional to $H$ then yields the adjustment condition~\eqref{eq:cond_k2_2}.
        
        Next, condition $\caC_4$ for $g=\unit$ and $h=\unit$ yields~\eqref{eq:cond_k3_1}, and $\caC_3$ for $g=\unit$ yields~\eqref{eq:cond_k3_2}. Conversely, the conditions obtained so far already imply both conditions $\caC_3$ and $\caC_4$.
        
        Considering then the terms in $\caC_1$ linear in $\Lambda_2$ for $g_2=\unit$ then results in the identity
        \begin{equation}
            R_1(g,Y,X)=R_1(\unit,Y,\nu_2(g^{-1},X))~,
        \end{equation} 
        and we identify
        \begin{equation}
            R_1(\unit,Y,X)=\kappa_3'(Y,X)~.
        \end{equation}
        for some function $\kappa_3'$. 
        
        We then consider the component proportional to $\Lambda_1$ in $\caC_1$, with all possible simplifications, obtaining that $\kappa_3'$ is indeed the linearization of $\kappa_3$,
        \begin{equation}
            \kappa_3'(Y,X)=\kappa_3(Y,X)~.
        \end{equation} 
        Using this to simplify $\caC_1$ further yields
        \begin{equation}
            g\acton \kappa_3(Y,X)=\kappa_3(g\acton Y,\nu_2(g,X))-\{g\acton Y,\kappa_1(g,X)\}-\{\kappa_1(g,X),g\acton Y\}~,
        \end{equation}
        which is a linearization of condition~\eqref{eq:kappa_3_condition}. The conditions above then imply both $\caC_1$ and, using another linearization of~\eqref{eq:kappa_3_condition}, also $\caC_2$.

        This completes the equivalence between adjustment conditions and conditions for a consistent BRST Lie 3-groupoid.
    \end{proof}

	We can now conclude the first part of this paper with the following theorem:
	\begin{theorem}\label{thm:integration}
		Let $(\caG,\kappa)$ denote an adjusted 2-crossed module $\caG$. Then $\scB(U,\caG)$ forms indeed a 3-groupoid, the BRST Lie 3-groupoid for $U$ and $\caG$. Furthermore, $\scB(U,\caG)$ differentiates to the BRST Lie 3-algebroid $\frA(U,\caG)$.
	\end{theorem}
	Let us clarify this statement. For a general rigorous definition of Lie differentiation of higher groupoids, we can turn to Ševera~\cite{Severa:2006aa}, see also e.g.~\cite{Jurco:2016qwv} and~\cite{Li:2014}. Here, higher Lie algebroids are seen as 1-jets of higher Lie groupoids. However, since we are dealing with strict action Lie 3-groupoid and corresponding strict action Lie 3-algebroids, we can linearize the individual actions, i.e., the gauge and higher gauge transformations in $\scB(U,\caG)$ and compare the result to the corresponding infinitesimal actions in $\frA(U,\caG)$. Direct inspection shows that this is indeed the case.
    
    \section{Differential cohomology for principal 3-bundles}\label{sec:diff_cohomology}
    
    \subsection{Preliminaries}
    
    The next step is the stackification of the BRST Lie 3-groupoid into the differential cocycles defining a principal 3-bundle with connection. That is, local gauge potential data is glued together on overlaps by gauge transformations, which, in turn, are glued together on triple overlaps by composition and higher gauge transformations, etc. 
    
    The differential cocycles of a principal $\caG$-bundle (without adjustment) have been developed to a large extent in~\cite{Saemann:2013pca}.\footnote{See also e.g.~\cite{Jurco:2009px} for the contained Čech cocycles and~\cite{Martins:2009aa} for the definition of holonomies based on such principal 3-bundles, as well as~\cite{Breen:1994aa} for related, earlier work.} However, some cocycle relations and most coboundary relations as well as all higher coboundary relations were missing, and our construction below also fills this gap.
    
    In the following, we always consider a principal $\caG$-bundle $\caP$ over a manifold $M$, where $\caG$ is a 2-crossed module of Lie groups $(\sfL \rightarrow \sfH \rightarrow \sfG)$ with the corresponding 2-crossed module of Lie algebras $(\frl \rightarrow \frh \rightarrow \frg)$. 
    
    The principal $\caG$-bundle $\caP$ will be described subordinate to a surjective submersion $\sigma:Y \rightarrow M$. That is, all components of cocycles will be differential forms on one of the fibered products
    \begin{equation}
        Y^{[p]}\coloneqq\{ (y_2,\ldots,y_p) \in Y^p~|~\sigma(y_1)=\ldots=\sigma(y_p)\}~.
    \end{equation} 
    The 3-groupoid of general principal 3-bundles is obtained as the colimit over the directed set of covers.
    
    For a sheaf $\caS$, we denote $\caS$-valued $q$-forms on $Y^{[p]}$ by 
    \begin{equation}
        C^{p,q}(\sigma,\caS)\coloneqq \Omega^q(Y^{[p]},\caS)~,
    \end{equation} 
    and we usually denote Lie groups and Lie algebras and the corresponding sheaf of Lie-group- or Lie-algebra-valued functions by the same letter.
    
    We note that the group-valued components of gauge and higher gauge transformations will automatically form the Čech cocycles of a principal 3-bundle. The latter can be directly derived e.g.~by using weak 3-functors between the Čech groupoid and the delooping trigroupoid of $\caG$. Much simpler would be to consider the simplicial Lie group corresponding to $\caG$ and to describe a principal 3-bundle as a twisted Cartesian product, cf.~\cite{May:book:1967} for the general discussion. This description, however, contains a number of redundancies that would make the computation rather lengthy. 
    
    In the following, we merely list the results. Gluing constraints are still computed by combining strict composition with higher transformations. Again, these computations are in principle straightforward, but quite lengthy and therefore best done with a computer algebra program.
    
    \subsection{Differential cocycles}
    
    A differential cocycle of a principal $\caG$-bundle subordinate to a surjective submersion $\sigma:Y\rightarrow M$ consists of components    
    \begin{subequations}\label{eq:conventional_cocycles}
        \begin{equation}
            \begin{aligned}
                &\{\ell_{abcd}\}\in C^{3,0}(\sigma,\sfL)~,&&\{\Xi_{abc}\}\in C^{2,1}(\sigma,\frl)~,~~\{\Sigma_{ab}\}\in C^{1,2}(\sigma,\frl)~,~~\{C_a\}\in C^{0,3}(\sigma,\frl)~,\\
                &\{h_{abc}\}\in C^{2,0}(\sigma,\sfH)~,&&\{\Lambda_{ab}\}\in C^{1,1}(\sigma,\frh)~,~~~\{B_a\}\in C^{0,2}(\sigma,\frh)~,\\
                &\{g_{ab}\}\in C^{1,0}(\sigma,\sfG)~,&&\{A_a\}\in C^{0,1}(\sigma,\frg)~,
            \end{aligned}
        \end{equation}
        which satisfy a number of conditions. For the Čech part $(\ell_{abcd},h_{abc},g_{ab})$ of the cocycle, the relations read
        \begin{equation}
            \begin{gathered}
                \sft(h_{abc})g_{ab}g_{bc}=g_{ac}~,
                \\
                h_{acd}h_{abc}\sft(\ell_{abcd})=h_{abd}(g_{ab}\acton h_{bcd})~,
                \\
                \ell_{abcd}((g_{ab}\acton h^{-1}_{bcd})\acton\ell_{abde})(g_{ab}\acton\ell_{bcde})\hspace{4.5cm}
                \\
                \hspace{4.5cm}=\ (h^{-1}_{abc}\acton\ell_{acde})\{h^{-1}_{abc},g_{ac}\acton h^{-1}_{cde}\}((g_{ab}g_{bc}\acton h^{-1}_{cde})\acton \ell_{abce})~,
            \end{gathered}
        \end{equation}
        and these are already found in the literature, see e.g.~\cite{Jurco:2009px,Saemann:2013pca}. The local gauge potentials are now glued together by gauge transformations on $Y^{[2]}$:
        \begin{equation}
            \begin{aligned}
                A_b&=g^{-1}_{ab}A_{a} g_{ab}+g^{-1}_{ab}\rmd g_{ab}-\sft(\Lambda_{ab})~,
                \\
                B_b&=g^{-1}_{ab}\acton B_a+\nabla_b\Lambda_{ab}+\tfrac12[\Lambda_{ab}, \Lambda_{ab}]-\sft(\Sigma_{ab})-\kappa_1(g_{ab}^{-1},F_a)~,
                \\
                C_b&=g^{-1}_{ab}\acton C_a+\nabla_{b}\Sigma_{ab}-\{\sft(\Sigma_{ab}),\Lambda_{ab}\}-\{\Lambda_{ab},\sft(\Sigma_{ab})\}-\{B_{b},\Lambda_{ab}\}
                \\
                &\hspace{1cm} -\{\Lambda_{ab},B_{b}\}+\{\Lambda_{ab},\nabla_b\Lambda_{ab}\}+\tfrac12\{\Lambda_{ab},[\Lambda_{ab},\Lambda_{ab}]\}
                \\
                &\hspace{1cm} +\kappa_2(g_{ab}^{-1},H_a)-\kappa_3(\Lambda_{ab},\nu_2(g_{ab},F_a))~.
            \end{aligned}
        \end{equation}
        On $Y^{[3]}$ we combine composition of 1-morphisms in the BRST groupoid with 2-morphisms, and the result is:
        \begin{equation}
            \begin{aligned}
                \Lambda_{ac}&=\Lambda_{bc}+g_{bc}^{-1}\acton\Lambda_{ab}-g_{ac}^{-1}\acton(h_{abc}\nabla_ah_{abc}^{-1})+\sft(g_{ac}^{-1}\acton \Xi_{abc})~,
                \\
                \Sigma_{ac} &= \Sigma_{bc}+g_{bc}^{-1}\acton\Sigma_{ab}+g_{ac}^{-1}\acton \nabla_a\Xi_{abc}-\tfrac12g_{ac}^{-1}\acton[\Xi_{abc},\Xi_{abc}]+\{\Lambda_{bc},g_{bc}^{-1}\acton \Lambda_{ab}\}
                \\
                &\hspace{0.5cm}+\{\sft(g_{ac}^{-1}\acton \Xi_{abc}),g_{ac}^{-1}\acton (h_{abc}\nabla_a h_{abc}^{-1})\}+g_{ac}^{-1}\acton(\{B_a,h_{abc}\}+\{h_{abc},B_a\})
                \\
                &\hspace{0.5cm}
                -\{\Lambda_{ac},\sft(g_{ac}^{-1}\acton \Xi_{abc})+g_{ac}^{-1}\acton (h_{abc}\nabla_a h_{abc}^{-1})\}+g_{ac}^{-1}\acton \kappa_3(h_{abc}^{-1},F_a)~,
            \end{aligned}
        \end{equation}
        Finally, on $Y^{[4]}$, we combine composition on the faces of a tetrahedron with a 3-morphism, obtaining
        \begin{equation}
            \begin{aligned}
                \Xi_{abd}&=\sft(h_{acd})\acton \Xi_{abc}+\Xi_{acd}-(g_{ad}g_{bd}^{-1})\acton\big(\Xi_{bcd}-\{h_{bcd},\Lambda_{ab}\}+\{\Lambda_{ab},h_{bcd}\}\big)
                \\
                &\hspace{1cm}-\sft(h_{abd})\acton (h_{abd}^{-1}h_{acd}h_{abc})\acton \left(\ell_{abcd}\nabla_a \ell_{abcd}^{-1}\right)
                \\
                &\hspace{1cm}+\left\{h_{abd},(h_{abd}^{-1}h_{acd}h_{abc})\nabla_a(h^{-1}_{abc}h_{acd}^{-1}h_{abd})\right\}+\left\{h_{acd},h_{abc} \nabla_ah_{abc}^{-1}\right\}~.
            \end{aligned}
        \end{equation}
    \end{subequations}
    Here, we used again the shorthand notation
    \begin{equation}
        \nabla_a \alpha\coloneqq\rmd \alpha+A_a\acton \alpha\eand f\nabla_a f^{-1}\coloneqq f\rmd f^{-1}+\rmAd^\acton(f,A_a)
    \end{equation} 
    for $\alpha$ taking values in $\frh$ or $\frl$ and $f$ taking values in $\sfH$ or $\sfL$ with $f\rmd f^{-1}$ the evident Maurer--Cartan form and 
    \begin{equation}
        \rmAd^\acton(f,A_a)\coloneqq f (A_a\acton f^{-1})~.
    \end{equation} 
    
    The curvatures are defined as 
    \begin{equation}
        \begin{aligned}
            F_a&\coloneqq \rmd A_a+\tfrac12[A_a,A_a]+\sft(B_a)~,
            \\
            H_a&\coloneqq \rmd B_a+A_a\acton B_a+\sft(C_a)-\kappa_1(A_a,F_a)~,
            \\
            G_a&\coloneqq \nabla_a C_a+\{B_a,B_a\}+\kappa_2(A_a,H_a)+\kappa_3(B_a,F_a)~,
        \end{aligned}
    \end{equation}
    and they are related over $Y^{[2]}$ as follows:
    \begin{equation}
        \begin{aligned}
            F_b&=\nu_2(g_{ab}^{-1},F_a)~,
            \\
            H_b&=g_{ab}^{-1}\acton H_a-\kappa_1(g_{ab}^{-1},\sft(H_a))+\sft(\kappa_2(g_{ab}^{-1},H_a))~.
            \\
            G_b&=g_{ab}^{-1}\acton G_a+\kappa_2(g_{ab}^{-1},\sft(G_a))-\kappa_3(\kappa_1(g_{ab}^{-1},F_a), \nu_2(g_{ab}^{-1},F_a))
            \\
            &\hspace{1cm}-\kappa_2(g_{ab}^{-1},\kappa_1(F_a,F_a))+\{\kappa_1(g_{ab}^{-1},F_a),\kappa_1(g_{ab}^{-1},F_a)\}~.
        \end{aligned}
    \end{equation}
    
    The Bianchi identities are variants of those found in~\eqref{eq:Bianchi_identities},
    \begin{equation}
        \begin{aligned}
            \nabla_a F_a-\sft(\kappa_1(A_a,F_a))-\sft(H_a)&=0~,
            \\
            \nabla_a H_a-\kappa_1(A_a,\sft(H_a))+\sft(\kappa_2(A_a,H_a))+\kappa_1(F_a,F_a)-\sft(G_a)&=0~,
            \\
            \nabla_aG_a+\kappa_2\big(A_a,\sft(G_a)-\kappa_1(F_a,F_a)\big)-\kappa_2(F,H)-\kappa_3(H_a+\kappa_1(A_a,F_a),F_a)&=0~.
        \end{aligned}
    \end{equation}
    
    \subsection{Differential 1-coboundaries}
    
    Consider two differential cocycles $(\ell_{abcd},h_{abc},g_{ab},\ldots,A_a,B_a,C_a)$ and $(\tilde\ell_{abcd},\tilde h_{abc},\tilde g_{ab}, \ldots, $ $\tilde A_a,\tilde B_a,\tilde C_a)$. We call these equivalent if they are related by a coboundary consisting of components
    \begin{subequations}\label{eq:conventional_coboundaries}
        \begin{equation}
            \begin{aligned}
                &\{\ell_{abc}\}\in C^{2,0}(\sigma,\sfL)~,&&\{\Xi_{ab}\}\in C^{1,1}(\sigma,\frl)~,~~\{\Sigma_{a}\}\in C^{0,2}(\sigma,\frl)~,\\
                &\{h_{ab}\}\in C^{1,0}(\sigma,\sfH)~,&&\{\Lambda_{a}\}\in C^{0,1}(\sigma,\frh)~,\\
                &\{g_{a}\}\in C^{0,0}(\sigma,\sfG)
            \end{aligned}
        \end{equation}
        and they satisfy
        \begin{equation}
            \begin{aligned}
                \tilde g_{ab}&=g_a^{-1}\sft(h_{ab})g_{ab}g_b~,
                \\
                \tilde h_{abc}&=g_a^{-1}\acton(h_{ac} h_{abc}\sft(\ell_{abc})(g_{ab}\acton h^{-1}_{bc})h^{-1}_{ab})~,
                \\
                \tilde \ell_{abcd}&=g_{a}^{-1}\acton \Big((h_{ab}\,g_{ab}\acton h_{bc})\acton \left(\ell^{-1}_{abc}\times h_{abc}^{-1}\acton \left\{h_{abc},(g_{ab}g_{bc})\acton h_{cd}\right\}^{-1}\right.
                \\
                &\hspace{2cm}\times((g_{ab}g_{bc})\acton h_{cd})\acton \left((h_{abc}^{-1}\acton \ell^{-1}_{acd})\ell_{abcd}\right.
                \\
                &\hspace{2cm}\times \left(g_{ab}\acton h_{bcd}^{-1}\right)\acton \left(\ell_{abd}\, ((g_{ab}\acton h_{bd}^{-1})h_{ab}^{-1})\acton \left\{h_{ab},g_{ab}\acton h_{bd}\right\}^{-1}\right.
                \\
                &\hspace{5cm}\left.\times h_{ab}^{-1}\acton \left\{h_{ab},g_{ab}\acton h_{bcd}\right\}^{-1}\right)
                \\
                &\hspace{2cm}\left.\times \left(h_{ab}^{-1}\acton \left\{h_{ab},\sft(g_{ab}\acton \ell_{bcd})\right\}^{-1}\right)(g_{ab}\acton \ell_{bcd})(h_{ab}^{-1}\acton\left\{h_{ab},(g_{ab}g_{bc})\acton h_{cd}^{-1}\right\}^{-1}\right)
                \\
                &\hspace{2cm}\left.\times h_{ab}^{-1}\acton \left\{h_{ab},g_{ab}\acton h_{bc}^{-1}\right\}^{-1}\right)\Big)~,
            \end{aligned}
        \end{equation}
        for the Čech cocycles, which is again known from the literature. Gauge transformations of local potential forms are then identified with the 1-morphisms of the BRST Lie groupoid:        
        \begin{equation}
            \begin{aligned}
                \tilde A_a&=g^{-1}_{a}A_{a} g_{a}+g^{-1}_{a}\rmd g_{a}-\sft(\Lambda_{a})~,
                \\
                \tilde B_a&=g^{-1}_{a}\acton B_a+\tilde\nabla_a\Lambda_{a}+\tfrac12[\Lambda_{a}, \Lambda_{a}]-\sft(\Sigma_{a})-\kappa_1(g_{a}^{-1},F_a)~,
                \\
                \tilde C_a&=g^{-1}_{a}\acton C_a-\tilde \nabla_{a}\Sigma_{a}-\{\sft(\Sigma_{a}),\Lambda_{a}\}-\{\Lambda_{a},\sft(\Sigma_{a})\}-\{\tilde B_{a},\Lambda_{a}\}
                \\
                &\hspace{1cm} -\{\Lambda_{a},\tilde B_{a}\}+\{\Lambda_{a},\tilde \nabla_a\Lambda_{a}\}+\tfrac12\{\Lambda_{a},[\Lambda_{a},\Lambda_{a}]\}~,
                \\
                &\hspace{1cm} +\kappa_2(g_{a}^{-1},H_a)-\kappa_3(\Lambda_{a},\nu_2(g_{a},F_a))~.
            \end{aligned}
        \end{equation}
        To derive the gauge transformations of $\Lambda_{ab}$ and $\Sigma_{ab}$, one doubles the patches $a$ and $b$ and considers the composition of four such 1-morphisms, with the 1-morphisms on doubled patches given by $\Lambda_{a}$ and $\Lambda_b$:
        \begin{equation}
            \begin{aligned}
                \tilde \Lambda_{ab}&=\Lambda_b+g_b^{-1}\acton \Lambda_{ab}-\tilde g_{ab}^{-1}\acton \Lambda_a+(g_b^{-1}g_{ab}^{-1})\acton (h_{ab}^{-1} \nabla_a h_{ab})+\sft(g_a^{-1}\acton \Xi_{ab})~,
                \\
                \tilde \Sigma_{ab}&=g_{b}^{-1}\acton \Sigma_{ab}+g^{-1}_a\acton \nabla_a\Xi_{ab}+\{\sft(g^{-1}_a\acton \Xi_{ab}),\tilde \Lambda_{ab}\}-\tfrac{1}{2} g^{-1}_a\acton [\Xi_{ab},\Xi_{ab}]
                \\
                &\hspace{1cm}
                -\tilde g_{ab}^{-1}\acton \left(\Sigma_a-\left\{\Lambda_a,\Lambda_a\right\}\right)+\Sigma_b+\left\{\Lambda_b,g_{b}^{-1}\acton \Lambda_{ab}\right\}
                \\
                &\hspace{1cm}-\left\{g_{b}^{-1}\acton \Lambda_{ab}+\Lambda_b,\left(\tilde g_{ab}^{-1}g_{a}^{-1}\right)\acton (h_{ab}\nabla_a h_{ab}^{-1})+\tilde g_{ab}^{-1}\acton \Lambda_a\right\}
                \\
                &\hspace{1cm}+(\tilde g_{ab}^{-1}g_{a}^{-1})\acton \left(\left\{B_{a},h_{ab}\right\}+\left\{h_{ab},B_{a}\right\}-\left\{(h_{ab}\nabla_a h_{ab}^{-1}),(h_{ab}\nabla_a h_{ab}^{-1})+g_a\acton \Lambda_a\right\}\right)~,
            \end{aligned}
        \end{equation}
        For the triple overlaps, one proceeds analogously, leading to 
        \begin{equation}
            \begin{aligned}
                &\tilde \Xi_{abc}=\tilde g_{ac}\acton((g_c^{-1}g_{ac}^{-1})\acton \Xi_{abc}+g_a^{-1}\acton \Xi_{ac}-g_b^{-1}\acton \Xi_{bc}-\tilde g_{bc}^{-1}\acton g_a^{-1}\acton \Xi_{ab})
                \\
                &\hspace{0.5cm}
                +(\tilde g_{ac}g_c^{-1}g_{bc}^{-1})\acton(\{\Lambda_{ab},h_{bc}^{-1}\}+\{h_{bc}^{-1},\Lambda_{ab}\})
                \\
                &\hspace{0.5cm}-\{\tilde h_{abc},\Lambda_a-(g_a^{-1}\sft(h_{ab}))\acton h_{ab}^{-1}\nabla_a h_{ab}\}+\{\Lambda_a,\tilde h_{abc}\}+\tilde h_{abc}\acton g_a^{-1}\acton \{h_{ab},h_{ab}^{-1}\nabla_a h_{ab}\}
                \\
                &\hspace{0.5cm}+(g_a^{-1})\acton\Big(-\{h_{ac},h_{abc}\nabla_a h_{abc}^{-1}+h_{ac}^{-1}\nabla_a h_{ac}^{-1}\}
                \\
                &\hspace{3.5cm}+\{h_{ac}h_{abc},g_{ab}\acton (h_{bc}^{-1}\nabla_a h_{bc}^{-1})-g_{ab}\acton (h_{bc}^{-1} (g_{ab}^{-1}\nabla_a g_{ab})\acton h_{bc})\}\Big)
                \\
                &\hspace{0.5cm}+(g_a^{-1})\acton(h_{ac}h_{abc})\acton\Big(\ell_{abc}(g_{ab}\acton (h_{bc}^{-1}\nabla_a h_{bc}-h_{bc}^{-1}(g_{ab}^{-1}\nabla_a g_{ab})\acton h_{bc}))\ell_{abc}^{-1}-\ell_{abc}\nabla_a\ell_{abc}^{-1}\Big)
            \end{aligned}
        \end{equation}
        for the transition forms.
    \end{subequations}
    
    \subsection{Differential 2-coboundaries}
    
    Two coboundaries $(\ell_{abc},h_{ab},g_{a},\Xi_{ab},\Lambda_{a},\Sigma_{a})$ and $(\tilde\ell_{abc},\tilde h_{ab},\tilde g_{a},\tilde\Xi_{ab},\tilde\Lambda_{a},\tilde\Sigma_{a})$ are equivalent if they are related by a 2-coboundary consisting of components 
    \begin{subequations}
        \begin{equation}
            \begin{aligned}
                &\{\ell_{ab}\}\in C^{1,0}(\sigma,\sfL)~,&&\{\Xi_{a}\}\in C^{0,1}(\sigma,\frl)~,\\
                &\{h_{a}\}\in C^{0,0}(\sigma,\sfH)~,
            \end{aligned}
        \end{equation}
        and they satisfy
        \begin{equation}
            \begin{aligned}
                \tilde g_a&=\sft(h_a)g_a~,
                \\
                \tilde h_{ab}&=h_a h_{ab}\sft(\ell_{ab})(g_{ab}\acton h_b^{-1})~,
                \\
                \tilde \ell_{abc}&=(h_{abc}^{-1}\acton\{h_{abc},(g_{ab}g_{bc})\acton h_c\}^{-1})
                \\
                &\hspace{0.5cm}\times((g_{ab}g_{bc})\acton h_c)\acton \Big( h_{abc}^{-1}\acton\Big(
                \ell_{ac}^{-1}((h_{ac}^{-1}h_a^{-1})\acton\{h_a,h_{ac}\}^{-1})
                (h_a^{-1}\acton\{h_a,h_{abc}\}^{-1})\Big)
                \\
                &\hspace{3.8cm}\times (h_a^{-1}\acton \{h_a,\sft(\ell_{abc})\}^{-1})\ell_{abc}
                \\
                &\hspace{3.8cm}\times (g_{ab}\acton h_{bc}^{-1})\acton\Big(
                (h_a^{-1}\acton\{h_a,g_{ab}h_{bc}\})
                ((h_{ab}^{-1}h_a^{-1})\acton\{h_a,h_{ab}\})
                \ell_{ab}
                \Big)
                \\
                &\hspace{3.8cm}\times g_{ab}\acton \ell_{bc}\,\Big)~,
            \end{aligned}
        \end{equation}
        and the obvious higher gauge transformations
        \begin{equation}
            \begin{aligned}
                \tilde \Lambda_a&=\Lambda_a+g_a^{-1}\acton (h_a^{-1}\nabla_a h_a)+\sft(g_a^{-1}\acton\Xi_a)~,
                \\
                \tilde \Sigma_a&=\Sigma_a+g^{-1}_a\acton (\nabla_a \Xi_a+\tfrac12[\Xi_a,\Xi_a]+\{\sft(\Xi_a),\Lambda_a\}-\{h^{-1}_a,B_a\}-\{B_a,h^{-1}_a\})
                \\
                &\hspace{1cm}+\{\Lambda_a,g^{-1}_a\acton (h^{-1}_a\nabla_a h_a+\sft(\Xi_a))\}+g_a^{-1}\acton \kappa_3(h_a,F_a)~.
            \end{aligned}
        \end{equation}
        On $Y^{[2]}$, we double again the local patches and consider composition of four 2-morphisms, leading to 
        \begin{equation}
            \begin{aligned}
                \tilde \Xi_{ab}&=\Xi_{ab}+g_a^{-1}\acton\Xi_a-g_b^{-1}\acton\Xi_b+g_a^{-1}\acton(\{\Lambda_{ab},h_b^{-1}\}+\{h_b^{-1},\Lambda_{ab}\})
                \\
                &\hspace{0.5cm} +(g_a)^{-1}\acton\Big(-\ell_{ab}^{-1}\nabla_a \ell_{ab}-\ell_{ab}^{-1}((h_ah_{ab})^{-1}\nabla (h_ah_{ab}))\acton \ell_{ab}
                \\
                &\hspace{3cm}-\{h_{ab}^{-1},h_a^{-1}\nabla_a h_a\}+\{h_b^{-1},g_{ab}^{-1}\acton (\tilde h_{ab}^{-1}\nabla_a \tilde h_{ab})\}\Big)~.
            \end{aligned}
        \end{equation}
    \end{subequations}
    
    \subsection{Differential 3-coboundaries}
    
    Finally, two 2-coboundaries $(\ell_{ab},h_{a},\Xi_{a})$ and $(\tilde\ell_{ab},\tilde h_{a},\tilde\Xi_{a})$ are equivalent if they are related by a 3-coboundary consisting of components 
    \begin{subequations}
        \begin{equation}
            \{\ell_{a}\}\in C^{0,0}(\sigma,\sfL)~,
        \end{equation}
        and satisfying
        \begin{equation}
            \begin{aligned}
                \tilde h_a&=\sft(\ell_a) h_a~,
                \\
                \tilde \ell_{ab}&=((h_ah_{ab})^{-1}\acton\ell_a^{-1})\ell_{ab}((g_{ab}\acton h_{b}^{-1})\acton g_{ab}\acton \ell_b)~,
            \end{aligned}
        \end{equation}
        and
        \begin{equation}
            \begin{aligned}
                \tilde \Xi_a&=\Xi_a-h_a^{-1}\acton(\ell_a^{-1}\nabla_a \ell_a)~.
            \end{aligned}
        \end{equation}
    \end{subequations}

    \subsection{Comments on the above constructions}
    
    We start with an obvious but important remark:
    \begin{remark}
        For crossed modules, it is known that there are examples which do not admit an adjustment; see for example the case of $\sfT\sfB_n^{\rm F2}$ discussed in~\cite{Kim:2022opr}. It is therefore clear that there are 2-crossed modules for which there is no adjustment. As a consequence, not every principal 3-bundle can be endowed with a connection.
    \end{remark}
    
    Next, consider the differential cohomology, but with all maps $\kappa_{1,2,3}$ set to zero. In this case, the Chevalley--Eilenberg differential of the BRST Lie 3-algebroid does not close anymore; physicists speak of an ``open'' BRST complex, in which the differential squares to terms proportional to the {\em fake curvatures}\footnote{More generally, the fake curvature forms are all curvature forms except for the top form component.} $F_a$ and $H_a$. This is the well-known fake-flatness constraint imposed in many papers, see~\cite{Saemann:2019dsl} and~\cite{Borsten:2024gox} for a discussion. 
    
    In the BRST Lie 3-groupoid, we do not have the expected globularity of 2-morphisms: two 1-morphisms linked by a 2-morphism may have different targets, unless one imposes fake flatness. This constraint, however, is unsuitable for many applications. For example, many of the string structures such as the explicit example constructed in~\cite{Rist:2022hci}, are not fake flat. Also, the differential refinement of the description of topological T-duality in terms of principal 2-bundles only works if this condition is lifted~\cite{Kim:2022opr}. More generally, we expect a 3-bundle version of the result of~\cite{Gastel:2018joi,Saemann:2019dsl}, which shows that fake-flat connections on principal 2-bundles can locally be gauged to connections on abelian gerbes. This is too restrictive, and we stress that the analog of this statement is not true for principal bundles.
    
    \section{Examples}\label{sec:examples}
    
    In this section, we list three classes of examples of adjusted higher connections: one for local adjusted connections and two for global adjusted connections. All of them arise naturally in supergravity or string/M-theory.
    
    \subsection{Infinitesimal case: firm adjustments and \texorpdfstring{$d=4$}{d=4} gauged supergravity}
    
    A construction of adjustments for a particular class of $L_\infty$-algebras was given in~\cite{Borsten:2021ljb}. Here, one starts from a differential graded Lie algebra (dgLa) $\frh=(\oplus_{n\in \IZ}\frh_n,\sfd,[-,-]_\frh)$, which induces an $L_\infty$-algebra structure on the truncated and shifted complex 
    \begin{equation}
        \frL=\bigoplus_{k\in \IN} \frL_{-k}~,~~~\frL_{-k}\coloneqq\frg_{-k-1}
    \end{equation} 
    by a derived bracket construction~\cite{Fiorenza:0601312,Getzler:1010.5859}. As shown in~\cite{Borsten:2021ljb}, this construction can be refined to an $E_2L_\infty$-algebra, in which anti-symmetry is homotopy-lifted up to alternators. The alternators now provide an adjustment datum for the $L_\infty$-algebra $\frL$, and such adjustments were called firm in~\cite{Borsten:2021ljb}. 
    
    Concretely, start with a dgLA $\frh$ concentrated in degrees $d\geq -3$ with underlying differential complex 
    \begin{equation}
        \frh=\Big(~\frh_{-3}\xrightarrow{~\sfd~}\frh_{-2}\xrightarrow{~\sfd}\hdots~\Big)~,
    \end{equation} 
    then the above construction yields an $E_2L_\infty$-algebra with underlying complex
    \begin{equation}
        \frE=\Big(~\frL_{-2}=\frh_{-3}\xrightarrow{~\sfd~}\frL_{-1}=\frh_{-2}\xrightarrow{~\sfd~}\frL_0=\frh_{-1}~\Big)~.
    \end{equation} 
    The binary bracket in this $E_2L_\infty$-algebra reads as
    \begin{equation}
        \eps_2^0(\ell_1,\ell_2)\coloneqq [\sfd \ell_1,\ell_2]_\frh~,
    \end{equation} 
    and $(\frE,\eps_2^0)$ is a graded Leibniz algebra. The alternator providing the homotopy that describes the failure of anti-symmetry of $\eps_2^0$, is given by
    \begin{equation}
        \eps_2^1(\ell_1,\ell_2)\coloneqq [\ell_1,\ell_2]~.
    \end{equation} 
    This $E_2L_\infty$-algebra can then be anti-symmetrized to an $L_\infty$-algebra $\frL$ with the same underlying differential complex. As shown in~\cite{Borsten:2021ljb}, the components of $\eps_2^1(-,-)$ can then be identified with the adjustment data $\kappa_{1,2,3,4}$ for $\frL$.
    
    Even more concretely, this construction is implicit in the tensor hierarchies of gauged supergravity theories, and the Lie 3-algebra case corresponds to four-dimensional such theories, see the general discussion in~\cite{Borsten:2021ljb} and the original physics literature~\cite{deWit:2007mt,Bergshoeff:2009ph} and~\cite{deWit:2005ub,deWit:2008ta}.
    
    Supergravity theories come with a global symmetry group $\sfG$, and all fields form representations under this group. In particular, the field content contains local $\fru(1)$-valued connection 1-forms $A$ which are arranged in a representation $\frh_{-1}$ of the Lie algebra $\frh_0=\sfLie(\sfG)$ of $\sfG$. We now want to ``gauge'' a subgroup $\sfK$ of $\sfG$, i.e., to promote the $\fru(1)$-valued connection forms to $\sfLie(\sfK)$-valued connection forms. This is done by introducing a linear map $\sfd:\frh_{-1}\rightarrow \frh_0$, which is used to introduce an action of the 1-forms $A$ on all other fields $\phi$ of the theory by
    \begin{equation}
        A\acton \phi\coloneqq [\sfd A,\phi]~.
    \end{equation} 
    It is then found that the corresponding natural curvature expression does not transform as expected, which can be corrected by introducing a 2-form potential taking values in another representation $\frh_{-2}$ of $\frh_0$, and this process can be iterated $d-1$ times in $d$ dimensions. The end result is a differential graded Lie algebra $\frh$, and the curvatures are precisely the firmly adjusted curvatures arising from the procedure outlined above.
    
    To be even more concrete, let us sketch the case of maximal $d=4$ supergravity. In this case,  we have $\frh_0=\fre_{7(7)}$ (a particular real form of $\fre_7$), and the relevant representations are\footnote{We use physics notation, where the bold numbers label irreducible representations by their dimension; $\mathbf{56}$ denotes the fundamental representation and $\mathbf{133}$ denotes the adjoint.}
    \begin{equation}
        \frh_{-1}=\mathbf{56}~,~~~\frh_{-2}=\mathbf{133}^*~,~~~\frh_{-3}=\mathbf{912}~.
    \end{equation} 
    The embedding $\fre_{7(7)}\subset \frsp(56,\IR)$ allows one to pull back the invariant symplectic form to produce a (non-invariant) pairing $\Omega$ on $\frh_{-1}$, or a map $\frh_{-1}\rightarrow \frh_{-1}^*$. Together with the differential $\sfd:\frh_{-1}\rightarrow \frh_0$ and the representation $\rho$, we then have the following maps:
    \begin{equation}
        \begin{aligned}
            \sfd&:\mathbf{56}\rightarrow \mathbf{133}
            ~,~~~&
            \Omega&:\mathbf{56}\rightarrow \mathbf{56}^*
            ~,~~~&
            Z&=\Omega^* \circ \sfd^*: \mathbf{133}^*\rightarrow \mathbf{56}~,
            \\
            [-,-]&:\mathbf{133}\times\mathbf{133}\rightarrow \mathbf{133}
            ~,~~~&
            \rho_1&:\mathbf{133}\times\mathbf{56}\rightarrow \mathbf{56}
            ~,~~~&
            \rho_2&:\mathbf{133}\times\mathbf{912}\rightarrow \mathbf{912}~.
        \end{aligned}
    \end{equation}
    We note that $\sfd$ transforms in the $\mathbf{56}\times \mathbf{133}^*=\mathbf{56}\oplus \mathbf{912}\oplus\mathbf{6480}$, but consistency conditions restrict it to have non-trivial components only in the irreducible subrepresentation $\mathbf{912}$, which we now regard as embedded into the representation $\mathbf{56}\times \mathbf{133}^*$. This gives us another map
    \begin{equation}
        Y=[\sfd-,-]-\rho_1(-,Z(-)):\mathbf{912}\subset \mathbf{56}\times \mathbf{133}^*\rightarrow \mathbf{133}^*~.
    \end{equation} 
    With the appropriate constraints, this yields an $L_\infty$-algebra
    \begin{equation}
        \begin{aligned}
            \frL&=\Big(~\frL_{-2}\xrightarrow{~\mu_1~}\frL_{-1}\xrightarrow{~\mu_1~}\frL_{0}~\Big)
            \\
            &=\Big(~\frh_{-3}\xrightarrow{~Y~}\frh_{-2}\xrightarrow{~Z~}\frh_{-1}~\Big)
        \end{aligned}
    \end{equation}
    with higher products obtained as outlined above, e.g.
    \begin{equation}
        \mu_2(x_1,x_2)=\tfrac12([\sfd x_1,x_2]_{\frh_0}-[\sfd x_2,x_1]_{\frh_0})~.
    \end{equation} 
    
    Up to prefactors, the adjustment data is then identified as follows:
    \begin{equation}
        \begin{aligned}
            \kappa_1&=\rho^*_1(\Omega^*(-),-):\mathbf{56}\times \mathbf{56}\rightarrow \mathbf{133}^*~,
            \\
            \kappa_2&=\sfp:\mathbf{56}\times \mathbf{133}^*\rightarrow \mathbf{912}^*~,
            \\
            \kappa_3&=\sfp:\mathbf{133}^*\times \mathbf{56}\rightarrow \mathbf{912}^*~,
        \end{aligned}
    \end{equation}    
    where $\sfp$ is simply the projection from $\mathbf{56}\times \mathbf{133}^*$ to the irreducible representation $\mathbf{912}$.
    
    We conclude:
    \begin{proposition}
        The above adjustment data forms an adjustment of a 3-term $L_\infty$-algebra. In particular, the maps $\kappa_{1,2,3}$ satisfy the relations~\eqref{eq:kappa-axioms-infinitesimal}.
    \end{proposition}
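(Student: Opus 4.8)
The plan is to avoid verifying the six relations in \eqref{eq:kappa-axioms-infinitesimal} by brute-force $\fre_{7(7)}$-representation theory, and instead to exhibit the stated data as an instance of the firm adjustment of~\cite{Borsten:2021ljb}, for which the adjustment property is already guaranteed. First I would assemble the maps $\sfd:\mathbf{56}\to\mathbf{133}$, $Z:\mathbf{133}^*\to\mathbf{56}$, $Y:\mathbf{912}\to\mathbf{133}^*$, together with the adjoint bracket on $\mathbf{133}=\fre_{7(7)}$ and the representations $\rho_1,\rho_2$, into a single differential graded Lie algebra
\[
\frh=\big(\,\mathbf{912}\xrightarrow{~Y~}\mathbf{133}^*\xrightarrow{~Z~}\mathbf{56}\xrightarrow{~\sfd~}\fre_{7(7)}\,\big)
\]
concentrated in degrees $-3,\ldots,0$, and check that its graded Jacobi identity and the derivation property of the total differential hold. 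This is precisely the input of the firm construction, and the content of this step is the standard fact that the $d=4$ tensor-hierarchy data organizes into such a dgLa once the embedding-tensor constraints---the linear constraint projecting $\sfd$ onto $\mathbf{912}$ and the quadratic constraint---are imposed.

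Second, I would run the derived-bracket shift-truncation on $\frh$ to produce the $3$-term $L_\infty$-algebra $\frL=(\frh_{-3}\to\frh_{-2}\to\frh_{-1})$, and check that its binary bracket reproduces the stated $\mu_2(x_1,x_2)=\tfrac12([\sfd x_1,x_2]_{\frh}-[\sfd x_2,x_1]_{\frh})$, while the alternator $\eps_2^1(\ell_1,\ell_2)=[\ell_1,\ell_2]_{\frh}$ carries exactly the components $\kappa_1=\rho_1^*(\Omega^*(-),-)$ and $\kappa_2=\kappa_3=\sfp$, with $\kappa_4=0$. The vanishing $\kappa_4=0$ follows because the alternator of a strict dgLa is bilinear, so no trilinear correction is generated after the shift. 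The work here is bookkeeping of degrees and signs: each $\kappa_i$ must land in the domain and codomain dictated by $|t^{\alpha_0}|=1$ in \ref{prop:adjustment_L_infty}, and one must confirm that the symmetric (non-antisymmetrizable) part of the Leibniz bracket $\eps_2^0$ is carried precisely by the symplectic pairing $\Omega$ and the equivariant projection $\sfp$ onto $\mathbf{912}$.

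Third, with this identification in place, the adjustment property of $\kappa_{1,2,3,4}$ is inherited from the general result of~\cite{Borsten:2021ljb} that the alternator arising this way is a firm adjustment, which settles the first assertion of the proposition. To obtain the relations \eqref{eq:kappa-axioms-infinitesimal} in the conventions of the present paper, I would match the firm-adjustment identities of~\cite{Borsten:2021ljb} against the six relations derived from \ref{prop:adjustment_L_infty}: both are repackagings of the homotopy-Jacobi relations for the pair $(\eps_2^0,\eps_2^1)$, so the comparison is term-by-term once the grading shift and the sign conventions are fixed, and the relations involving $\mu_3,\mu_4$ are consistent with the derived brackets induced by $\frh$.

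The step I expect to be the main obstacle is the representation-theoretic heart of the first two steps: verifying that $\sfd$ indeed factors through $\mathbf{912}$ and that $\Omega$ is $\fre_{7(7)}$-compatible in exactly the way needed for $\eps_2^0$ to be a graded Leibniz bracket---equivalently, that the embedding-tensor constraints are precisely what make $\frh$ a dgLa. If one instead preferred a self-contained verification, the hard part would migrate to expanding each relation in \eqref{eq:kappa-axioms-infinitesimal} directly and proving the bilinear identities among $\rho_1,\Omega,\sfp,Z,Y$; these in turn reduce to invariance of $\Omega$ under $\fre_{7(7)}$ together with $\fre_{7(7)}$-equivariance of the projection $\sfp$.
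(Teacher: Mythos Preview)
Your proposal is correct and takes essentially the same approach as the paper: the proposition is stated without a separate proof block, but the preceding discussion in the subsection lays out exactly the argument you describe, namely that the tensor-hierarchy data assembles into a dgLa, the derived-bracket construction of~\cite{Borsten:2021ljb} produces the 3-term $L_\infty$-algebra with alternator $\eps_2^1=[-,-]_\frh$, and the components of this alternator are identified with $\kappa_{1,2,3}$ (with $\kappa_4=0$), so the adjustment axioms follow from the general firm-adjustment result. Your outline is in fact more detailed than what the paper spells out.
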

    
    \subsection{Twisted string structures}
    
    A second example is the cocycle description of twisted string structures as introduced in~\cite{Sati:2009ic,Sati:2010dc}, see also~\cite{Fiorenza:2012mr}, which can be seen as a trivialization of the first fractional Pontryagin class in a 4-form background.
    
    Given a Lie group $\sfG$, we can construct a central extension of its based loop group\footnote{The group of smoothly parameterized paths $\ell:[0,1]\rightarrow \sfG$ that start and end at $\unit$, i.e., $\ell(0)=\ell(1)=\unit$.},
    \begin{equation}
        \begin{tikzcd}
            \unit \arrow[r] &\sfU(1) \arrow[r, "\iota"]& \widehat{L_0 \sfG}\arrow[r, "\pi"]& L_0 \sfG \arrow[r]& \unit
        \end{tikzcd}~,
    \end{equation}
    where the image of $\iota$ is contained in the center of $\widehat{L_0 \sfG}$. 
    
    We can combine $\widehat{L_0 \sfG}$ with the group $P_0\sfG$ of based paths in $\sfG$ to obtain a crossed module of Lie groups~\cite{Baez:2005sn}
    \begin{equation}
        \sfString(\sfG)=\left(\widehat{L_0 \sfG}\xrightarrow{~\sft~}P_0\sfG,\acton\right)~,
    \end{equation} 
    which forms a 2-crossed module of the string group, see also~\cite{Rist:2022hci} for further details.
    
    It has been shown in~\cite{Rist:2022hci} that an adjustment $\kappa^{\sfString}$ on $\sfString(G)$ is given by
    \begin{equation}
        \begin{aligned}
            \kappa^{\sfString}: P_0 G \times P_0 \mathfrak{g}&\rightarrow L_0 \mathfrak{g}\times \mathfrak{u}(1), \\
            (g ,V ) &\mapsto \left((\sfid-\wp\circ\flat)(gVg^{-1}-V), \frac{\rmi}
            {2\pi} \int_0^1 \rmd r \left \langle g^{-1} \frac{\partial g}{\partial r}, V \right \rangle\right)~,
        \end{aligned}
    \end{equation}
    where $\frg=\sfLie(\sfG)$, $\flat$ denotes the endpoint evaluation $\flat:P_0\sfG\rightarrow \sfG$, and $\wp$ is any smooth function $\wp:[0,1]\rightarrow \IR$ with $\wp(0)=0$ and $\wp(1)=1$.
    
    In order to describe twisted string structures, we need to lift the crossed module $\sfString(\sfG)$ to the 2-crossed module
    \begin{equation}
        \widehat{\sfString(\sfG)}\coloneqq \left(
        \begin{tikzcd}
            \sfU(1) \arrow[r, "\sft=\iota"] &\widehat{L_0\sfG}\arrow[r,"\sft=\flat\circ\pi"]&P_0 \sfG 
        \end{tikzcd}
        \right)~.
    \end{equation}
    Because the action of $P_0\sfG$ on $\widehat{L_0\sfG}$ satisfies the Peiffer identity, the Peiffer lifting in this crossed module can be chosen to be trivial. Together with axiom~\eqref{eq:2xm-last_axiom}, this forces us to choose the action of $P_0\sfG$ on $\sfU(1)$ to be trivial. This is compatible with the remaining axioms, as $\sfU(1)$ is abelian and the image of $\iota$ is located in the fiber of $\widehat{L_0\sfG}$ over $\unit\in L_0\sfG$.
    
    \begin{proposition}
        The 2-crossed module $\widehat{\sfString(\sfG)}$ becomes an adjusted 2-crossed module by setting 
        \begin{equation}
            \kappa_1=\kappa^{\sfString}~,~~~\kappa_2=0~,~~~\kappa_3=0~.
        \end{equation} 
    \end{proposition}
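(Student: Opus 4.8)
The plan is to verify directly that the data $\kappa_1=\kappa^{\sfString}$, $\kappa_2=\kappa_3=\kappa_4=0$ satisfy the defining relations \eqref{eq:final_adjustment_relations} of an adjusted 2-crossed module, where the three layers are identified as $\sfG=P_0\sfG$, $\sfH=\widehat{L_0\sfG}$, and $\sfL=\sfU(1)$. The key simplifying inputs are the two structural features of $\widehat{\sfString(\sfG)}$ already recorded above: the Peiffer lifting $\{-,-\}$ is trivial, and the action $\acton$ of $P_0\sfG$ on $\sfU(1)$ (hence the induced infinitesimal action of $P_0\frg$ on $\fru(1)$) is trivial. I would also use that $\sft\circ\sft=\flat\circ\pi\circ\iota=\unit$, since $\pi\circ\iota$ is trivial and $\flat(\unit)=\unit$.

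First I would dispose of the relations that collapse identically. With $\kappa_2=\kappa_4=0$, both relations in the second block of \eqref{eq:final_adjustment_relations} reduce to $0=0$: every term on the right-hand side is a value of $\kappa_2$ or $\kappa_4$, the only composite being $\kappa_2(g_1,\kappa_1(g_2,\sft(Y)))$, whose outer factor is again $\kappa_2=0$. In the third block, the two $\kappa_3$-relations for $\kappa_3(h_1h_2,X)$ and for $g\acton\kappa_3(h,X)$ reduce, after setting $\kappa_3=\kappa_4=0$, to statements of the form $0=-\sft(\ldots)\acton\{\ldots\}$ and $0=-\sft(\ldots)\acton(\{\ldots\}+\{\ldots\})$, which hold because the Peiffer lifting vanishes; the $\kappa_4$-cocycle relation of that block is $0=0$; and the last relation $\kappa_3(\sft(\ell),X)=-\rmAd^\acton_\ell(X)$ reads $0=-\rmAd^\acton_\ell(X)$, which holds since $\rmAd^\acton_\ell(X)=\ell\,(X\acton\ell^{-1})$ and the infinitesimal $P_0\frg$-action on $\sfU(1)$ is trivial.

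It then remains to check the two relations for $\kappa_1$ in the first block of \eqref{eq:final_adjustment_relations}. After setting $\kappa_3=\kappa_4=0$ these become $\kappa_1(\sft(h),X)=-\big(\sft(h)\acton\rmAd^\acton_{h^{-1}}(X)\big)$ and $g_1\acton\kappa_1(g_2,X)=-\kappa_1\big(g_1,\rmAd_{g_2}(X)-\sft(\kappa_1(g_2,X))\big)+\kappa_1(g_1g_2,X)$. Both involve only the top crossed module $\widehat{L_0\sfG}\xrightarrow{\sft=\flat\circ\pi}P_0\sfG$ together with its action, i.e.\ exactly the Lie crossed module $\sfString(\sfG)$, and they are precisely the adjustment axioms of a crossed module. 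Hence they hold for $\kappa^{\sfString}$ by the computation of \cite{Rist:2022hci}, which establishes that $\kappa^{\sfString}$ is an adjustment of $\sfString(\sfG)$; I would also note in passing the unital-linearity of $\kappa^{\sfString}$ (it vanishes for the constant path $g=\unit$ and is manifestly linear in its Lie-algebra argument), as required by the definition.

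The only real work, and thus the main obstacle, is the careful bookkeeping of the second paragraph: one must confirm term by term that every contribution in the $\kappa_2$-, $\kappa_3$-, and $\kappa_4$-relations is killed either by the vanishing of those maps, by the trivial Peiffer lifting, or by the trivial $\sfU(1)$-action, so that no residual constraint on $\kappa_1$ survives. Once this is done, the verification reduces cleanly to the already-established crossed-module statement, and no genuinely new identity about $\kappa^{\sfString}$ needs to be proved.
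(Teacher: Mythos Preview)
Your approach is correct and essentially identical to the paper's one-line proof: direct verification of \eqref{eq:final_adjustment_relations} using the trivial Peiffer lifting and the trivial $P_0\sfG$-action on $\sfU(1)$, with the $\kappa_1$-relations reducing to the crossed-module adjustment axioms already established for $\kappa^{\sfString}$ in~\cite{Rist:2022hci}. One small slip: the first relation in the second block of \eqref{eq:final_adjustment_relations} has Peiffer-lifting and $\kappa_3$ terms on its right-hand side, not only $\kappa_2$ or $\kappa_4$ terms as you claim, but these also vanish by the triviality of $\{-,-\}$ and $\kappa_3=0$ that you invoke elsewhere.
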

    \begin{proof}
        Using that $\{-,-\}$ is trivial, it is straightforward to verify that the conditions~\eqref{eq:final_adjustment_relations} are satisfied.
    \end{proof}
    
    While we do not have a proof of the uniqueness of adjustments for 2-crossed modules, it is clear that in this case, there is little freedom in choosing the adjustment. In particular, the conditions 
    \begin{equation}
        \begin{aligned}
            \kappa_3(t(\ell), X) &= \rmAd^\acton_\ell(X)=0~,
            \\
            \kappa_2(\sft(h),Y)&=-\sft(h)\acton\left(\{Y,h^{-1}\}+\{h^{-1},Y\}\right)+\kappa_3(h^{-1},\sft(Y))=0
        \end{aligned}
    \end{equation}
    force both maps to be trivial if the left argument is a loop.
    
    Consider now a differential cocycle describing a principal 3-bundle with structure 3-group $\widehat{\sfString(\sfG)}$. The connection forms in such a cocycle satisfy the Bianchi identity
    \begin{equation}
        \nabla_a H_a-\kappa_1(A_a,\sft(H_a))+\kappa_1(F_a,F_a)=G_a~,
    \end{equation}
    and this describes the trivialization of the 4-form $G_a-\kappa_1(F_a,F_a)$ by a (non-abelian) gerbe. This becomes clearer if we switch to the equivalent skeletal 3-term $L_\infty$-algebra\footnote{Recall that while global differential cocycles are very hard to describe explicitly in this case, the local data is unproblematic.}
    \begin{equation}
        \begin{tikzcd}
            \mathbb{R} \arrow[r, "\sfid"]& \mathbb{R}  \arrow[r,"0"]& \mathfrak{g}
        \end{tikzcd}~.
    \end{equation}
    Here, the above Bianchi identity reads as
    \begin{equation}
        \rmd H_a=G_a-(F_a,F_a)~,
    \end{equation} 
    and the gerbe described by $H_a$ trivializes the 2-gerbe with 4-form curvature $G_a-(F_a,F_a)$.
    
    A motivation for our example is the smooth moduli 3-stack of Spin connections and C-field configurations defined in ~\cite{Fiorenza:2012mr}, see also~\cite{Sati:2009ic}. On a space-time $X$ this is described by the data of a spin principal bundle $P_{\sfSpin}$ on $X$ with a $\frso$-connection $\nabla_{\frso}$, a principal $\sfE_8$ bundle $P_{\sfE_8}$ on $X$ and an abelian 2-gerbe with connection $(P_{\sfB^2\sfU(1)}, \nabla_{\sfB^2\sfU(1)})$. Furthermore, one requires an equivalence of abelian 2-gerbes between $P_{\sfB^2\sfU(1)}$ and $P_{\sfSpin}\times_XP_{\sfE_8}$
    via $\frac{1}{2}p_1 + 2a$, where $p_1$ is the first Pontryagin class of the spin bundle and $a$ is the second Chern class of $P_{\sfE_8}$.
    
    This reproduces the Witten flux quantization condition
    \begin{align}
        2[G_4] = \tfrac{1}{2}p_1 + 2a \in H^4(X, \mathbb{Z})
    \end{align}
    as in~\cite{Witten:1996md,Horava:1996ma} at the level of cohomology, and this condition is differentially refined to
    \begin{align}
        2G_4 = 2\rmd H + \tfrac{1}{2}p_1 + 2a~,
    \end{align}
    where $H$ is the characteristic class of a 1-gerbe realizing the equivalence of $U(1)$-2-gerbes between $P_{\sfB^2\sfU(1)}$ and $P_{\sfSpin}\times_XP_{\sfE_8}$.

    \subsection{Categorified torus-bundles}
    
    Finally, let us come to a finite example that we hope to be relevant to the description of $U$-duality. In this example, all adjustment functions are turned on.
    
    In the description of T-duality in terms of principal 2-bundles~\cite{Nikolaus:2018qop,Kim:2022opr,Kim:2023hqx}, gerbes on top of principal bundles were described by the categorified tori of~\cite{Ganter:2014zoa}. These tori are crossed modules of Lie groups that naturally come with an adjustment datum. We generalize this construction to 2-crossed modules.
    
    \begin{proposition}
        Consider three real vector spaces $V_i$ with prescribed lattices $\Lambda_i \subset V_i$, $i=0,\,1,\,2$, together with two bilinear maps
        \begin{equation}
            \langle -,-\rangle_0: \Lambda_0 \otimes \Lambda_0 \rightarrow \Lambda_1
            \eand
            \langle -,-\rangle_1: \Lambda_0 \otimes \Lambda_1 \rightarrow \Lambda_2~.
        \end{equation}
        Linearly extend both maps to all of the $V_i$. Then there is a 2-crossed module of Lie groups
        \begin{equation}
            \caT=(\Lambda_1\times V_2/\Lambda_2\xrightarrow{~\sft~}\Lambda_0\times V_1\xrightarrow{~\sft~}V_0)
        \end{equation} 
        with structure maps 
        \begin{equation}
            \begin{aligned}
                \sft(\lambda_0,u_1) &= \lambda_0~,\\
                \sft(\lambda_1,[w]) &= (0,\lambda_1)~,\\
                u_0 \triangleright (\lambda_0,u_1) &= (\lambda_0, u_1 + \langle u_0, \lambda_0 \rangle_0)~, \\
                u_0 \triangleright (\lambda_1,[w]) &= (\lambda_1, [w + \langle u_0, \lambda_1 \rangle_1])~,\\
                \{(\lambda_0,u_1),(\mu_0,v_1)\} &= (-\langle \lambda_0,\mu_0 \rangle_0,[ \langle \lambda_0,v_1 \rangle_1])
            \end{aligned}
        \end{equation}
        for all $u_{0,1}\in V_0$, $\lambda_0, \mu_0\in \Lambda_0$, $\lambda_1\in \Lambda_1$, and $w\in V_2$ if and only if
        \begin{equation}\label{eq:jacobi}
            \langle u_0, \langle v_0, w_0 \rangle_0 \rangle_1 + \langle v_0, \langle u_0, w_0 \rangle_0 \rangle_1 = 0~.
        \end{equation}
        We call this 2-crossed module the categorified torus for the data $(V_{0,1,2},\Lambda_{0,1,2},\langle-,-\rangle_{0,1})$.
    \end{proposition}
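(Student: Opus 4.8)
The plan is to verify directly that the prescribed data satisfies every axiom of a 2-crossed module as listed in \ref{app:2crossedmodules}. The observation that organizes the entire computation is that all three groups are abelian: $V_0$, the product $\Lambda_0\times V_1$, and $\Lambda_1\times V_2/\Lambda_2$ are abelian under componentwise addition. Consequently every conjugation is trivial and every group commutator vanishes, so a large block of axioms collapses: the condition that $\sft\{h_1,h_2\}$ equals the Peiffer commutator $h_1h_2h_1^{-1}({}^{\sft(h_1)}h_2)^{-1}$ reduces to a one-line substitution (which yields precisely the first component $-\langle\lambda_0,\mu_0\rangle_0$ of the Peiffer lifting), and $\{\sft(\ell_1),\sft(\ell_2)\}$ matches the commutator $[\ell_1,\ell_2]$ because both $\sft(\ell_i)$ have vanishing $\Lambda_0$-component, making both sides trivial.

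Before the axioms I would settle the set-up. The pairings are given on lattices, and since each $\Lambda_i$ spans $V_i$ over $\IR$, I would extend $\langle-,-\rangle_0$ and $\langle-,-\rangle_1$ uniquely to $\IR$-bilinear maps $V_0\otimes V_0\to V_1$ and $V_0\otimes V_1\to V_2$, so that the action and the Peiffer lifting make sense for $V$-valued arguments. I would then check well-definedness on the quotient: the action $u_0\acton(\lambda_1,[w])=(\lambda_1,[w+\langle u_0,\lambda_1\rangle_1])$ and the second slot $[\langle\lambda_0,v_1\rangle_1]$ of the Peiffer lifting descend to $V_2/\Lambda_2$, since shifting $w$ by $\Lambda_2$ shifts the result by $\Lambda_2$. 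The maps $\sft$ are visibly homomorphisms with trivial composite, and the actions are by automorphisms and $\sft$-equivariant; all of this follows immediately from bilinearity, so I would dispatch it quickly.

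The substance lies in the biadditivity axioms for $\{-,-\}$ together with the equivariance axiom. Carrying out the substitution for additivity in each slot produces, in the abelian setting, the naive additive term (which holds by bilinearity of $\langle-,-\rangle_{0,1}$) plus a single correction term of the shape $[\langle\lambda_0,\langle\lambda_0',\mu_0\rangle_0\rangle_1]$. The crucial point is that such a term is the image of a triple pairing of lattice elements, hence lies in $\langle\Lambda_0,\langle\Lambda_0,\Lambda_0\rangle_0\rangle_1\subset\Lambda_2$ and therefore \emph{vanishes in $V_2/\Lambda_2$}; this is exactly why the third group is quotiented by $\Lambda_2$. I would likewise verify the axioms linking $\{\sft(\ell),h\}$ and $\{h,\sft(\ell)\}$ to the $\sfG$-action on $\sfL$: the former is trivial since $\sft(\ell)$ has vanishing $\Lambda_0$-component, and the latter reproduces the identity $\{h,\sft(\ell)\}={}^{\sft(h)}\ell\cdot\ell^{-1}$ through $\langle-,-\rangle_1$ by direct evaluation, with no lattice input needed.

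The one genuinely nontrivial axiom --- and the step I expect to be the main obstacle --- is the $\sfG$-equivariance of the Peiffer lifting, ${}^{u_0}\{h_1,h_2\}=\{{}^{u_0}h_1,{}^{u_0}h_2\}$. Expanding both sides, the left-hand side produces a term $-[\langle u_0,\langle\lambda_0,\mu_0\rangle_0\rangle_1]$ coming from the action on the first component of the bracket, while the right-hand side produces $[\langle\lambda_0,\langle u_0,\mu_0\rangle_0\rangle_1]$ coming from the shifted second argument. Equivariance therefore holds precisely when $\langle\lambda_0,\langle u_0,\mu_0\rangle_0\rangle_1+\langle u_0,\langle\lambda_0,\mu_0\rangle_0\rangle_1=0$, which is the hypothesis \eqref{eq:jacobi} after relabeling. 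This identifies \eqref{eq:jacobi} as exactly the compatibility required between the two pairings; once it is in hand, every axiom is satisfied and the verification is complete.
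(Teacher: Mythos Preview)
Your approach is correct and coincides with the paper's, which simply records that this is a straightforward verification of the axioms in \ref{def:2xm-groups}; you have usefully made explicit that the abelianness trivializes most conditions, that the quotient by $\Lambda_2$ absorbs the triple-pairing correction terms in the biadditivity axioms, and that~\eqref{eq:jacobi} is precisely what $\sfG$-equivariance of the Peiffer lifting demands. One small slip: the axiom~\eqref{eq:2xm-last_axiom} reads $\{\sft(\ell),h\}\{h,\sft(\ell)\}=\ell(\sft(h)\acton\ell^{-1})$, and in the abelian setting with $\{\sft(\ell),h\}$ trivial the two sides become $(0,[\langle\lambda_0,\lambda_1\rangle_1])$ and $(0,[-\langle\lambda_0,\lambda_1\rangle_1])$, so lattice input \emph{is} needed here (both vanish in $V_2/\Lambda_2$ because $\langle\Lambda_0,\Lambda_1\rangle_1\subset\Lambda_2$), contrary to your remark.
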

    \begin{proof}
        This is a straightforward verification of the axioms of a 2-crossed module of Lie groups, as listed in \ref{def:2xm-groups}.
    \end{proof}
    
    As a Lie $3$-group, $\mathcal{T}$ sits in a central extension of the form
    \begin{equation}
        1 \rightarrow B^2(V_2/\Lambda_2) \rightarrow \mathcal{T} \rightarrow \mathcal{T}_{\leq 1} \rightarrow 1~,
    \end{equation}
    where $\mathcal{T}_{\leq 1}$ is a Lie $2$-group sitting in a central extension of the form
    \begin{equation}
        1 \rightarrow B(V_1/\Lambda_1) \rightarrow \mathcal{T} \rightarrow V_0/\Lambda_0 \rightarrow 1
    \end{equation}
    and constructed from the data $(\Lambda_0,\Lambda_1,\langle -,-\rangle_0)$ as in~\cite{Ganter:2014zoa}. As we discuss in~\cite{Gagliardo:2025b}, $\mathcal{T}$ has a square root as a central extension of $\mathcal{T}_{\leq 1}$ but, in order to present this square root, a slightly weaker model for Lie $3$-groups is needed, which is beyond the scope of this paper. 
    
    \begin{proposition}
        If the two maps $\langle-,-\rangle_{0,1}$ satisfy the condition
        \begin{equation}\label{eq:torus_condition}
            [\langle u, \langle v, \lambda \rangle_0 \rangle_1] + [\langle \lambda, \langle u, v \rangle_0 \rangle_1] = 0
        \end{equation}
        for all $\lambda\in \Lambda_0$ and $u,v\in V_0$
        then an adjustment on $\caT$ is given by the maps 
        \begin{equation}
            \begin{aligned}
                \kappa_1&:& V_0 \times V_0 &\rightarrow V_1
                ~,~~~&
                (u_0,v_0) &\mapsto -\langle v_0,u_0 \rangle_0
                ~,
                \\
                \kappa_2&:& V_0 \times V_1 &\rightarrow V_2/\Lambda_2
                ~,~~~&
                (u_0,u_1) &\mapsto [\langle u_0,u_1 \rangle_1]
                ~,
                \\
                \kappa_3&:& (\Lambda_0\times V_1) \times V_0 &\rightarrow V_2/\Lambda_2
                ~,~~~&
                ((\lambda_0,u_1),u_0) &\mapsto [\langle u_0,u_1 \rangle_1]~.         
            \end{aligned}
        \end{equation}
    \end{proposition}
    \begin{proof}
        This is a straightforward verification of the adjustment conditions~\eqref{eq:final_adjustment_relations}.
    \end{proof}
    
    We note that condition~\eqref{eq:torus_condition} effectively amounts to $\langle u,\langle v,\lambda\rangle_0\rangle_1=0$. This is another hint that for an interesting categorified torus, one should consider a slightly weaker model for Lie 3-groups.
    
    Moreover, it is natural to expect that lifts of T-duality to M-theory can be described analogously to~\cite{Nikolaus:2018qop,Kim:2022opr} in terms of the higher categorified tori introduced above. This will be the topic of future work~\cite{Gagliardo:2025b}.

    \section*{Acknowledgments}
    
    We would like to thank Leron Borsten, Hyungrok Kim, Dominik Rist, and Martin Wolf for interesting discussions.
    
    %     
    %     
    %     \section*{Data and License Management}
    %     
    %     No additional research data beyond the data presented and cited in this work are needed to validate the research findings in this work. For the purposes of open access, the authors have applied the \href{https://creativecommons.org/licenses/by/4.0/}{Creative Commons Attribution 4.0 International (CC~BY~4.0)} license to any author-accepted manuscript version arising from this work.
    
    \appendices
    
    In this appendix, we summarize the required background on Lie 3-groups and Lie 3-algebras, starting with a short overview.
    
    \subsection{Tricategories, Lie 3-groups and Lie 3-algebras}\label{app:Lie-3-algebras}
    
    A tricategory is a collection of objects, 1-morphisms between objects, 2-morphisms between 1-morphisms and 3-morphisms between 2-morphisms. These can be composed in different ways, satisfying a number of coherence relations which can be weaker or stronger depending on the model. Contrary to the case of bicategories, the coherence theorem for (set-theoretical) tricategories does not state that a tricategory, in the weakest possible model, is always equivalent to a fully strict tricategory. Instead, the equivalence is with something slightly weaker, called a Gray tricategory \cite{Gordon:1995tr}. This will be defined in \ref{app:Gray}.
    
    A 3-groupoid is a tricategory in which the compositions come with weak inverses for 1- and 2-morphisms, as well as strict inverses for 3-morphisms. A 3-group is a 3-groupoid with a single object. Working with general tricategories is quite cumbersome, and so it is common to consider only Gray 3-groups; i.e., 3-groups which are also Gray tricategories. A convenient description of these is provided by 2-truncated simplicial groups: simplicial sets in which the spaces of simplices have additionally the structure of a group, in which all face and degeneracy maps are group homomorphisms, and in which all the spaces of simplices are determined by the spaces of 2-cells, 1-cells and 0-cells. There is much degeneracy in these, and extracting the essential information by considering the corresponding Moore complex, one arrives at 2-truncated hypercrossed complexes or, equivalently, 2-crossed modules of groups ~\cite{Conduche:1984:155,Carrasco:1991:195-235,Conduche:2003}. 
    
    Analogously, one can consider 2-truncated simplicial Lie groups, determining 2-crossed modules of Lie groups, and 2-truncated simplicial Lie algebras, determining 2-crossed modules of Lie algebras. These are models for a rather general family of Lie 3-groups and their Lie 3-algebras, although one must be warned that not all Lie 3-groups are covered by this construction due to the failure of the coherence theorem in the smooth setting. We will review the definition of 2-crossed modules in \ref{app:2crossedmodules}.
    
    For the description of Lie 3-algebras, we also use 3-term $L_\infty$-algebras, generalizations of dg-Lie algebras with ternary and quaternary brackets. We provide the necessary background on this form of Lie 3-algebras in \ref{app:L-infty}.
    
    \subsection{Gray tricategories}\label{app:Gray}

    Standard references for tricategories and their relation with 2-crossed modules include \cite{Gordon:1995tr,Kamps:2002aa,Gurski:2006}.

    \begin{adefinition}[Gray tricategory]\label{def:gray}
        A Gray tricategory $\scB$ consists of the following data. 
        \begin{enumerate}
            \item A set of objects $\scB_0$.
            \item For objects $A_1, \, A_2 \in \scB_0$, a strict bicategory $\scB_1(A_1,A_2)$. That is: 
            \begin{enumerate}
                \item $\scB_1(A_1,A_2)$ has 2-cells of the form 
                \begin{equation}
                    \begin{tikzcd}[ampersand replacement = \&, column sep=12ex ]
                        g' \& g\phantom{'} \ar[l,bend right = 40, "{h}"{name=F},,swap,pos=0.5] \ar[l,bend left = 40, "{h'}"{name=G},pos=0.5] \ar[Rightarrow,from=F,to=G,"{l}",swap,pos=0.4,, shorten=1ex] 
                    \end{tikzcd},
                \end{equation}
                where $g$ and $g'$ are called 1-morphisms between $A_1$ and $A_2$ in $\scB$, $h$ and $h'$ are called 2-morphisms between $g$ and $g'$ in $\scB$ and $l$ is called a 3-morphism between $h$ and $h'$ in $\scB$.
                \item There is a rule for vertical composition of 2-cells in $\scB_1(A_1,A_2)$,
                \begin{equation}
                    \begin{tikzcd}[ampersand replacement = \& ,column sep = 12ex]
                        g'
                        \& g\phantom{'} \ar[l,bend right = 70, "{h}"{name=F},swap] \ar[l,bend right = 10, "{h'}"{name=G},pos=0.5]
                        \ar[l,bend left = 70,"{h''}"{name=H}]
                        \ar[Rightarrow,from=F,to=G,"{l}",pos=0.4,shorten = 1ex]
                        \ar[Rightarrow,from=G,to=H,"{l'}",pos=0.4,shorten=0.5ex]
                    \end{tikzcd} = 
                    \begin{tikzcd}[ampersand replacement = \& ,column sep = 12ex]
                        g'
                        \& g\phantom{'}  \ar[l,bend right = 70, "{h}"{name=F},swap]
                        \ar[l,bend left = 70,"{h''}"{name=H}]
                        \ar[Rightarrow,from=F,to=H,"{l' \circ l}",swap,pos=0.4, shorten=1ex]
                    \end{tikzcd}~,
                \end{equation}
                which defines the transversal composition of 3-morphisms in $\scB$.
                \item There is a rule for horizontal composition of 2-cells in $\scB_1(A_1,A_2)$,
                \begin{align}
                    &
                    \begin{tikzcd}[ampersand replacement = \& , column sep=8ex]
                        g'' \& g' \ar[l,bend right = 40, "{h_1}"{name=F},pos=0.55,swap] \ar[l,bend left = 40, "{h'_1}"{name=G},pos=0.55] \ar[Rightarrow,from=F,to=G,"{l_1}",swap, shorten=1ex]  \& g\phantom{'} \ar[l,bend right = 40, "{h_2}"{name=F2},swap] \ar[l,bend left = 40, "{h'_2}"{name=G2}] \ar[Rightarrow,from=F2,to=G2,"{l_2}",swap, shorten=1ex] 
                    \end{tikzcd} =
                    \begin{tikzcd}[ampersand replacement = \&,column sep=16ex ]
                        g'' \& g\phantom{''} \ar[l,bend right = 40, "{h_1 \circ h_2}"{name=F},swap] \ar[l,bend left = 40, "{h_1' \circ h_2'}"{name=G}] \ar[Rightarrow,from=F,to=G,"{l_1 \bullet l_2}",swap,pos=0.45, shorten=1ex]
                    \end{tikzcd}~,
                \end{align}
                which defines vertical composition of 3-morphisms in $\scB$.
                \item Horizontal and vertical compositions of 2-cells in $\scB_1(A_1,A_2)$ are strictly associative and satisfy the following interchange rule,
                \begin{equation}\label{eq:interchange_rule}
                    \begin{tikzcd}[ampersand replacement = \& ,column sep = 18ex]
                        g''
                        \& g\phantom{''} \ar[l,bend right = 70, "{h_1 \circ h_2}"{name=F},swap] \ar[l,bend right = 10, "{h'_1 \circ h'_2}"{name=G},pos=0.5]
                        \ar[l,bend left = 70,"{h''_1 \circ h_2''}"{name=H}]
                        \ar[Rightarrow,from=F,to=G,"{l_1 \bullet l_2}",swap,pos=0.5,shorten=0.5ex]
                        \ar[Rightarrow,from=G,to=H,"{l'_1 \bullet l'_2}",swap,pos=0.4,shorten=0.5ex]
                    \end{tikzcd} = 
                    \begin{tikzcd}[ampersand replacement = \& ,column sep = 12ex]
                        g''
                        \& g'\phantom{'} \ar[l,bend right = 70, "{h_1}"{name=F},pos=0.5,swap]
                        \ar[l,bend left = 70,"{h''_1}"{name=H}]
                        \ar[Rightarrow,from=F,to=H,"{l'_1 \circ l_1}",swap,pos=0.4,shorten=0.5ex]
                        \& g\phantom{''} \ar[l,bend right = 70, "{h_2}"{name=F},swap]
                        \ar[l,bend left = 70,"{h''_2}"{name=H}]
                        \ar[Rightarrow,from=F,to=H,"{l'_2 \circ l_2}",swap,pos=0.4,shorten=0.5ex]
                    \end{tikzcd}.
                \end{equation}
            \end{enumerate}
            \item For objects $A_1, \,A_2,\,A_3 \in \scB_0$ and 1-morphisms $g_1 \in \scB_1(A_1,A_2)$, $g_2 \in \scB_1(A_2,A_3)$, strict functors $R_{g_1}: \scB_1(A_2,A_3) \rightarrow \scB_1(A_1,A_3)$ and $L_{g_2}: \scB_1(A_1,A_2) \rightarrow \scB_1(A_1,A_3)$ (i.e.~maps sending 2-cells to 2-cells, preserving all source, target and composition maps). These are called left and right whiskering functors, and they must satisfy the following.
            \begin{enumerate}
                \item $R_{g_1}(g_2) = L_{g_2}(g_1) =: g_1g_2 \in \scB_1(A_1,A_3)$.
                \item Given an additional $A_4 \in \scB_0$ and $g_3 \in \scB_1(A_3,A_4)$,
                \begin{equation}\label{eq:whiskering_compatibility}
                    \begin{aligned}
                        R_{g_1} \circ R_{g_2} &= R_{g_1g_2}:& \scB_1(A_3,A_4) &\rightarrow \scB_1(A_1,A_4)~,
                        \\
                        R_{g_1} \circ L_{g_3} &= L_{g_3} \circ R_{g_1}: &\scB_1(A_2,A_3) &\rightarrow \scB_1(A_1,A_4)~,
                        \\
                        L_{g_3} \circ L_{g_2} &= L_{g_2g_3}: &\scB_1(A_1,A_2) &\rightarrow \scB_1(A_1,A_4)~.
                    \end{aligned}
                \end{equation}
            \end{enumerate}
            \item For objects $A_1, \,A_2,\,A_3 \in \scB_0$, 1-morphisms $g_1,\,g_1' \in \scB_1(A_1,A_2)$, $g_2,\,g_2' \in \scB_1(A_2,A_3)$ and 2-morphisms $h_1:g_1 \rightarrow g_1'$, $h_2:g_2 \rightarrow g_2'$, a 3-morphism of the form
            \begin{equation}
                \begin{tikzcd}[column sep=18ex]
                    g_1g_2 \ar[r,"{L_{g_2}(h_1)}",{name=U}] \ar[d,"{R_{g_1}(h_2)}",swap] 
                    & g_1'g_2 \ar[d,"{R_{g_1'}(h_2)}"] \ar[Rightarrow, dl,"{I(h_1,h_2)}",swap]\\
                    g_1g_2' \ar[r,"{L_{g_2'}(h_1)}",{name=D},swap] 
                    & g_1'g_2'
                \end{tikzcd}.
            \end{equation}
            This 3-morphism is called interchangor and it must satisfy the following: 
            \begin{enumerate}
                \item Naturality on $3$-morphisms of $\scB$. That is, given $l_i: h_i \Rightarrow h_i': g_i \rightarrow g_i'$, $i = 1, \,2$, we have
                \begin{equation}
                    \scalebox{0.85}{
                    \begin{tikzcd}[ampersand replacement = \& ,column sep=12ex]
                        g_1g_2 \ar[r,"{L_{g_2}(h_1)}",{name=U}] \ar[d,"{}"{name=E}] \ar[d,bend right=90,looseness = 2.5, "{R_{g_1}(h_2')}"{name=F},swap]  \ar[Rightarrow,from=E,to=F,"{R_{g_1}(l_2)}",swap,shorten=0.5ex]
                        \& g_1'g_2 \ar[d,"{R_{g_1'}(h_2)}"] \ar[Rightarrow, dl,"{I(h_1,h_2)}",swap]
                        \\
                        g_1g_2' \ar[r,"{}"{name=D}] \ar[r,bend right=50, "{L_{g_2'}(h_1')}"{name=G},swap, looseness=1.5] \ar[Rightarrow,from=D,to=G,"{L_{g_2'}(l_1)}",pos=0.4,shorten=0.5ex]
                        \& g_1'g_2' 
                    \end{tikzcd}}
                    =\!
                    \scalebox{0.85}{
                    \begin{tikzcd}[ampersand replacement = \& ,column sep=12ex]
                        g_1g_2\phantom{'} \ar[r,bend left=50, "{L_{g_2}(h_1)}"{name=F}] \ar[r,"{}"{name=U}] \ar[d,"{R_{g_1}(h_2')}"{name=E},swap]   \ar[Rightarrow,from=F,to=U,"{L_{g_2}(l_1)}",shorten <=0.2ex,shorten >=-0.3ex,pos=0.7]
                        \& g_1'g_2\phantom{'} \ar[d,"{}"{name=R}] \ar[Rightarrow, dl,"{I(h_1',h_2')}",swap]   \ar[d,bend left=100, "{R_{g_1'}(h_2)}"{name=G},looseness=2.5] \ar[Rightarrow,from=G,to=R,"{R_{g_1'}(l_2)}",swap,shorten=0.5ex]\\
                        g_1g_2' \ar[r,"{L_{g_2'}(h_1')}"{name=D},swap] 
                        \& g_1'g_2' 
                    \end{tikzcd}
                    }~.
                \end{equation}
                \item They respect composition of $2$-morphisms in $\scB$. That is, given $g_i \stackrel{h_i}{\rightarrow} g_i' \stackrel{h_i'}{\rightarrow} g_i''$, $i=1, \,2$, we have
                \begin{equation}
                    \scalebox{0.85}{
                    \begin{tikzcd}[ampersand replacement = \& ,column sep=12ex]
                        g_1g_2 \ar[r,"{L_{g_2}(h_1)}",{name=U}] \ar[d,"{R_{g_1}(h_2)}",swap] 
                        \& g_1'g_2  \ar[d,"{}",swap] \ar[Rightarrow, dl,"{I(h_1,h_2)}",swap] \ar[r,"{L_{g_2}(h_1')}",{name=U2}] \& g_1''g_2  \ar[d,"{R_{g_1''}(h_2)}"] \ar[Rightarrow, dl,"{I(h_1',h_2)}",swap] \\
                        g_1g_2' \ar[r,"{L_{g_2'}(h_1)}",{name=D},swap] 
                        \& g_1'g_2' \ar[r,"{L_{g_2'}(h_1')}",{name=D2},swap]  \& g_1''g_2'
                    \end{tikzcd}}
                    = 
                    \scalebox{0.85}{
                    \begin{tikzcd}[ampersand replacement = \& ,column sep=15ex]
                        g_1g_2 \ar[r,"{L_{g_2}(h_1' \circ h_1)}"{name=A}] \ar[d,"{R_{g_1}(h_2)}",swap] 
                        \& g_1''g_2  \ar[d,"{R_{g_1''}(h_2)}"] \ar[Rightarrow, dl,"{I(h_1' \circ h_1,h_2)}",swap] \\
                        g_1g_2' \ar[r,"{L_{g_2'}(h_1' \circ h_1)}",{name=D},swap,pos=0.3]
                        \& g_1''g_2'
                    \end{tikzcd}}~,     
                \end{equation}
                \begin{equation}
                    \begin{tikzcd}[column sep=14ex]
                        g_1g_2 \ar[r,"{L_{g_2}(h_1)}",{name=U}] \ar[d,"{R_{g_1}(h_2)}",swap] 
                        & g_1'g_2  \ar[d,"{R_{g_1'}(h_2)}"] \ar[Rightarrow, dl,"{I(h_1,h_2)}",swap] \\
                        g_1g_2' \ar[r,"{}",{name=D}] \ar[d,"{R_{g_1}(h_2')}",swap]
                        & g_1'g_2' \ar[d,"{R_{g_1'}(h_2')}"] \ar[Rightarrow, dl,"{I(h_1,h_2')}",swap] \\
                        g_1g_2'' \ar[r,"{L_{g_2''}(h_1)}",{name=2},swap] 
                        & g_1'g_2''
                    \end{tikzcd} = 
                    \begin{tikzcd}[column sep=14ex,row sep = 10ex]
                        g_1g_2 \ar[r,"{L_{g_2}(h_1)}"{name=A}] \ar[d,"{R_{g_1}(h_2' \circ h_2)}"{name=X},swap] 
                        & g_1'g_2  \ar[d,"{R_{g_1'}(h_2' \circ h_2)}"] \ar[Rightarrow, dl,"{I(h_1,h_2' \circ h_2)}",swap] \\
                        g_1g_2'' \ar[r,"{L_{g_2''}(h_1)}",{name=D},swap]
                        & g_1'g_2''
                    \end{tikzcd} .  
                \end{equation}
                \item They respect whiskering of $2$-morphisms in $\scB$. That is, given $h_i:g_i \rightarrow g_i'$, $i=1, \,2,\, 3$, we have
                \begin{equation}
                    \!\!\!\scalebox{0.85}{
                    \begin{tikzcd}[ampersand replacement = \& ,column sep=13ex, row sep = 8ex]
                        g_1g_2g_3 \ar[r,"{L_{g_3}(L_{g_2}(h_1))}",{name=U}] \ar[d,"{L_{g_3}(R_{g_1}(h_2))}"{name=E},swap] 
                        \& g_1'g_2g_3 \ar[d,"{L_{g_3}(R_{g_1'}(h_2))}"] \ar[Rightarrow, dl,"{L_{g_3}(I(h_1,h_2))}",swap,pos=0.35]\\
                        g_1g_2'g_3 \ar[r,"{L_{g_3}(L_{g_2'}(h_1))}"{name=D},swap]
                        \& g_1'g_2'g_3 
                    \end{tikzcd}}
                    = 
                    \scalebox{0.85}{
                    \begin{tikzcd}[ampersand replacement = \& ,column sep=13ex, row sep = 8ex]
                        g_1g_2g_3 \ar[r,"{L_{g_2g_3}(h_1)}",{name=U}] \ar[d,"{R_{g_1}(L_{g_3}(h_2))}"{name=E},swap] 
                        \& g_1'g_2g_3 \ar[d,"{R_{g_1'}(L_{g_3}(h_2))}"] \ar[Rightarrow, dl,"{I(h_1,L_{g_3}(h_2))}",swap,pos=0.35]\\
                        g_1g_2'g_3 \ar[r,"{L_{g_2'g_3}(h_1)}"{name=D},swap]
                        \& g_1'g_2'g_3 
                    \end{tikzcd}}~,
                \end{equation}
                \begin{equation}
                    \scalebox{0.85}{
                    \begin{tikzcd}[ampersand replacement = \& ,column sep=14ex, row sep = 8ex]
                        g_1g_2g_3 \ar[r,"{L_{g_3}(L_{g_2}(h_1))}",{name=U}] \ar[d,"{R_{g_1g_2}(h_3)}"{name=E},swap] 
                        \& g_1'g_2g_3 \ar[d,"{R_{g_1'g_2}(h_3)}"] \ar[Rightarrow, dl,"{I(L_{g_2}(h_1),h_3)}",swap,pos=0.35]\\
                        g_1g_2g_3' \ar[r,"{L_{g_3'}(L_{g_2}(h_1))}"{name=D},swap]
                        \& g_1'g_2g_3' 
                    \end{tikzcd}}
                    = 
                    \scalebox{0.85}{
                    \begin{tikzcd}[ampersand replacement = \& ,column sep=14ex, row sep = 8ex]
                        g_1g_2g_3 \ar[r,"{L_{g_2g_3}(h_1)}",{name=U}] \ar[d,"{R_{g_1}(R_{g_2}(h_3))}"{name=E},swap] 
                        \& g_1'g_2g_3 \ar[d,"{R_{g_1'}(R_{g_2}(h_3))}"] \ar[Rightarrow, dl,"{I(h_1,R_{g_2}(h_3))}",swap,pos=0.35]\\
                        g_1g_2g_3' \ar[r,"{L_{g_2g_3'}(h_1)}"{name=D},swap]
                        \& g_1'g_2g_3' 
                    \end{tikzcd}}~,
                \end{equation}
                \begin{equation}
                    \scalebox{0.85}{
                    \begin{tikzcd}[ampersand replacement = \& ,column sep=14ex, row sep = 8ex]
                        g_1g_2g_3 \ar[r,"{L_{g_3}(R_{g_1}(h_2))}",{name=U}] \ar[d,"{R_{g_1g_2}(h_3)}"{name=E},swap] 
                        \& g_1g_2'g_3 \ar[d,"{R_{g_1g_2'}(h_3)}"] \ar[Rightarrow, dl,"{I(R_{g_1}(h_2),h_3)}",swap,pos=0.35]\\
                        g_1g_2g_3' \ar[r,"{L_{g_3'}(R_{g_1}(h_2))}"{name=D},swap]
                        \& g_1g_2'g_3' 
                    \end{tikzcd}}
                    = 
                    \scalebox{0.85}{
                    \begin{tikzcd}[ampersand replacement = \& ,column sep=14ex, row sep = 8ex]
                        g_1g_2g_3 \ar[r,"{R_{g_1}(L_{g_3}(h_2))}",{name=U}] \ar[d,"{R_{g_1}(R_{g_2}(h_3))}"{name=E},swap] 
                        \& g_1g_2'g_3 \ar[d,"{R_{g_1}(R_{g_2'}(h_3))}"] \ar[Rightarrow, dl,"{R_{g_1}(I(h_2,h_3))}",swap,pos=0.35]\\
                        g_1g_2g_3' \ar[r,"{R_{g_1}(L_{g_3'}(h_2))}"{name=D},swap]
                        \& g_1g_2'g_3' 
                    \end{tikzcd}}~.
                \end{equation}
            \end{enumerate}
        \end{enumerate}
    \end{adefinition}

    \subsection{2-crossed modules}\label{app:2crossedmodules}
    
    \begin{adefinition}[2-crossed module of groups]\label{def:2xm-groups}
        A 2-crossed module of groups $\caG$ consists of a complex of three groups
        \begin{equation}
            \begin{tikzcd}
                \sfL \arrow[r, "\sft"]& \sfH \arrow[r,"\sft"]& \sfG
            \end{tikzcd}
        \end{equation}
        together with actions by automorphisms
        \begin{equation}
            \acton:\sfG\times \sfH\rightarrow \sfH
            \eand 
            \acton:\sfG\times \sfL\rightarrow \sfL
        \end{equation} 
        with respect to which $\sft$ is equivariant, i.e.,
        \begin{equation}
            \sft(g\acton h)=g\sft(h)g^{-1}
            \eand 
            \sft(g\acton \ell)=g\acton \sft(\ell)
        \end{equation} 
        for all $g\in \sfG$, $h\in \sfH$, and $\ell\in \sfL$, and a $\sfG$-equivariant map
        \begin{equation}
            \{-,-\}: \sfH \times \sfH \longrightarrow \sfL~,
        \end{equation} 
        called the Peiffer lifting, that encodes the violation of the Peiffer identity, i.e.,
        \begin{subequations}
            \begin{equation}
                \langle h_1,h_2\rangle\coloneqq h_1 h_2 h_1^{-1} (\sft(h_1) \rhd h_2^{-1})= \sft(\{h_1, h_2\})
            \end{equation}
            for all $h_{1,2}\in \sfH$. The Peiffer lifting is required to satisfy the following axioms:
            \begin{align}
                \{\sft(\ell_1), \sft(\ell_2) \} &= \ell_1 \ell_2 \ell_1^{-1}\ell_2^{-1}~,
                \\
                \{h_1 h_2,h_3\}&=\{h_1,h_2h_3h_2^{-1}\}(\sft(h_1)\acton\{h_2,h_3\})~,
                \\
                \{h_1,h_2h_3\}&=\{h_1,h_2\}\{h_1,h_3\}\{\langle h_1,h_3\rangle^{-1},\sft(h_1)\acton h_2\}~,
                \\
                \{\sft(\ell),h\}\{h,\sft(\ell)\}&=\ell (\sft(h)\acton \ell^{-1})\label{eq:2xm-last_axiom}
            \end{align}
        \end{subequations}
        for all $h,h_{1,2,3}\in \sfH$ and $\ell\in \sfL$. We usually write $\caG=(\sfL\xrightarrow{\sft}\sfH\xrightarrow{\sft}\sfG)$ for short for such a 2-crossed module.
    \end{adefinition}
    
    Note that a crossed module of groups is simply a 2-crossed module $\caG=(\sfL\xrightarrow{\sft}\sfH\xrightarrow{\sft}\sfG)$ with $\sfL$ trivial. In particular, the data $\sfH/\sft(\sfL)\xrightarrow{\sft}\sfG$ forms a crossed module of groups. Moreover, the complex underlying the 2-crossed module is automatically normal, i.e., the images of $\sft$ form normal subgroups.
    
    In a 2-crossed module $\caG=(\sfL\xrightarrow{\sft}\sfH\xrightarrow{\sft}\sfG)$, we have a further natural action 
    \begin{equation}
        \begin{aligned}
            \acton:\sfH\times \sfL&\rightarrow \sfL~,
            \\
            h\acton \ell&\coloneqq \ell\{\sft(\ell)^{-1},h\}
        \end{aligned}
    \end{equation}
    by automorphisms. This action satisfies
    \begin{equation}
        \begin{aligned}
            \sft(h\acton \ell)&=\sft(\ell)\sft(\{\sft(\ell^{-1}),h\}) = \sft(\ell)\sft(\ell^{-1})h\sft(\ell)h^{-1} = h\sft(\ell)h^{-1}~,
            \\
            \sft(\ell_1)\acton\ell_2&=\ell_2\{\sft(\ell_2^{-1}),\sft(\ell_1)\} = \ell_1\ell_2\ell_1^{-1}~,
        \end{aligned}
    \end{equation}
    and therefore $(\sfH\xrightarrow{\sft}\sfG)$ forms a crossed module of groups.
    
    A 2-crossed module $\mathcal{G}$ determines a Gray tricategory denoted $B\mathcal{G}$. Its set of objects is $B\mathcal{G}_0 = \{\ast\}$. The 2-category $B\mathcal{G}(\ast,\ast)$ has 2-cells of the form
    \begin{equation}
        \begin{tikzcd}[ampersand replacement = \&, column sep=12ex ]
            \sft(h)g
            \& 
            \phantom{\sft(}g\phantom{h)}
            \ar[l,bend right = 40, "{h}"{name=F},swap] 
            \ar[l,bend left = 40, "{\sft(l)h}"{name=G}] \ar[Rightarrow,from=F,to=G,"{l}",swap,shorten=1ex] 
        \end{tikzcd}\!\!.
    \end{equation}
    Vertical composition of 2-cells in $B\mathcal{G}_1(\ast,\ast)$ is defined by
    \begin{equation}
        \begin{tikzcd}[ampersand replacement = \& ,column sep = 12ex]
            \sft(h)g
            \& 
            \phantom{\sft(}g\phantom{h)}
            \ar[l,bend right = 70, "{h}"{name=F},swap] 
            \ar[l,bend right = 10, "{\sft(l)h}"{name=G}]
            \ar[l,bend left = 70,"{\sft(l')\sft(l)h}"{name=H}]
            \ar[Rightarrow,from=F,to=G,"{l}",swap,pos=0.4,shorten = 1ex]
            \ar[Rightarrow,from=G,to=H,"{l'}",swap,pos=0.4,shorten=0.5ex]
        \end{tikzcd} = 
        \begin{tikzcd}[ampersand replacement = \& ,column sep = 12ex]
            \sft(h)g
            \& 
            \phantom{\sft()}g\phantom{h}
            \ar[l,bend right = 70, "{h}"{name=F},swap]
            \ar[l,bend left = 70,"{\sft(l'l)h}"{name=H}]
            \ar[Rightarrow,from=F,to=H,"{l'l}",swap,pos=0.3,shorten=1ex]
        \end{tikzcd}.
    \end{equation}
    Horizontal composition of 2-cells in $B\mathcal{G}_1(\ast,\ast)$ is defined by
    \begin{equation}
        \begin{tikzcd}[ampersand replacement = \& ]
            \sft(h')\sft(h)g
            \& 
            \sft(h)g 
            \ar[l,bend right = 40, "{h'}"{name=F2},swap] 
            \ar[l,bend left = 40, "{\sft(l')h'}"{name=G2}] \ar[Rightarrow,from=F2,to=G2,"{l'}",swap,shorten=1ex] 
            \& 
            \phantom{\sft()}g\phantom{h} 
            \ar[l,bend right = 40, "{h}"{name=F},swap] 
            \ar[l,bend left = 40, "{\sft(l)h}"{name=G}] \ar[Rightarrow,from=F,to=G,"{l}",swap,shorten=1ex] 
        \end{tikzcd} 
        =
        \begin{tikzcd}[ampersand replacement = \&,column sep=16ex ]
            \sft(h'h)g
            \& 
            g 
			\ar[l,bend right = 40, "{h'h}"{name=F},swap] 
			\ar[l,bend left = 40, "{\sft(l')h'\sft(l)h}"{name=G}] \ar[Rightarrow,from=F,to=G,"{l' (h' \triangleright l)}",swap,shorten=1ex] 
        \end{tikzcd} .
    \end{equation}
    Given $g_1,\,g_2 \in G$, the whiskering functors are defined as follows. 
    \begin{align}
        \begin{split}
            L_{g_2}: \mathcal{G}(\ast,\ast) &\rightarrow \mathcal{G}(\ast,\ast) \\
            \begin{tikzcd}[ampersand replacement = \& ]
            	\sft(h_1)g_1
                \& 
                \phantom{\sft}g_1\phantom{h()} 
                \ar[l,bend right = 40, "{h_1}"{name=F},swap] 
                \ar[l,bend left = 40, "{\sft(l_1)h_1}"{name=G}] \ar[Rightarrow,from=F,to=G,"{l_1}",swap,shorten=1ex] 
            \end{tikzcd} &\mapsto
            \begin{tikzcd}[ampersand replacement = \&, column sep = 10ex ]
                \sft(h_1)g_1g_2
                \& 
                \phantom{\sft()}g_1g_2\phantom{h_1} 
                \ar[l,bend right = 40, "{h_1}"{name=F},swap] 
                \ar[l,bend left = 40, "{\sft(l_1)h_1}"{name=G}] \ar[Rightarrow,from=F,to=G,"{ l_1}",swap,shorten=1ex] 
            \end{tikzcd}
        \end{split}.
    \end{align}
    \begin{align}
        \begin{split}
            R_{g_1}: B\mathcal{G}_1(\ast,\ast) &\rightarrow B\mathcal{G}_1(\ast,\ast) \\
            \begin{tikzcd}[ampersand replacement = \& ]
                \sft(h_2)g_2
                \& 
                \phantom{\sft()}g_2\phantom{h} 
                \ar[l,bend right = 50, "{h_2}"{name=F},swap] 
                \ar[l,bend left = 50, "{\sft(l_2)h_2}"{name=G}] \ar[Rightarrow,from=F,to=G,"{l_2}",swap,shorten=1ex] 
            \end{tikzcd} 
            &\mapsto
            \begin{tikzcd}[ampersand replacement = \& , column sep = 10ex]
                g_1\sft(h_2)g_2
                \& 
                \phantom{\sft}g_1g_2\phantom{h_2()} 
                \ar[l,bend right = 40, "{g_1 \triangleright h_2}"{name=F},swap] 
                \ar[l,bend left = 40, "{g_1 \triangleright (\sft(l_2)h_2)}"{name=G}] \ar[Rightarrow,from=F,to=G,"{g_1 \triangleright l_2}",swap,shorten=1ex] 
            \end{tikzcd}
        \end{split},
    \end{align}
    The interchangor reads as 
    \begin{equation}\label{eq:interchangor_2_xm}
        \begin{tikzcd}[column sep=20ex, row sep = 8ex]
            g_1g_2 \ar[r,"{h_1}",{name=U}] \ar[d,"{g_1 \triangleright h_2}",swap] 
            & \sft(h_1)g_1g_2 \ar[d,"{(\sft(h_1)g_1) \triangleright h_2}"] \ar[Rightarrow, dl,"{\{h_1,g_1 \triangleright h_2\}}",swap]\\
            g_1\sft(h_2)g_2 \ar[r,"{ h_1}",{name=D},swap] 
            & \sft(h_1)g_1\sft(h_2)g_2
        \end{tikzcd}~~.
    \end{equation}
    The specialization to 2-crossed modules of Lie groups is evident: we require that all maps are smooth. Applying the tangent functor, we then arrive at the definition of 2-crossed modules of Lie algebras.
    \begin{adefinition}[2-crossed module of Lie algebras]
        A 2-crossed module of Lie algebras is a complex of three Lie algebras
        \begin{equation}
            \begin{tikzcd}
                \mathfrak{l}\arrow[r, "\sft"]& \mathfrak{h}\arrow[r, "\sft"]& \mathfrak{g}~,
            \end{tikzcd}
        \end{equation}
        together with actions by derivations
        \begin{equation}
            \acton:\frg\times \frh\rightarrow \frh
            \eand 
            \acton:\frg\times \frl\rightarrow \frl~,
        \end{equation}  
        such that $\sft$ is a $\frg$-homomorphism, i.e.,
        \begin{equation}
            \sft(X\acton Y)=[X,\sft(Y)]
            \eand 
            \sft(X\acton Z)=X\acton \sft(Z)
        \end{equation} 
        for all $X\in \frg$, $Y\in \frh$, and $Z\in \frl$, and a $\frg$-equivariant bilinear map
        \begin{equation}
            \{-,-\}: \frh \times \frh \longrightarrow \frl~,
        \end{equation} 
        called the Peiffer lifting, that encodes the violation of the Peiffer identity, i.e.,
        \begin{subequations}
            \begin{equation}\label{eq:def_Peiffer_lifting}
                \langle Y_1,Y_2\rangle\coloneqq 
                [Y_1, Y_2] - \sft(Y_1) \acton Y_2= \sft(\{Y_1, Y_2\})
            \end{equation}
            for all $Y_{1,2}\in \frh$. The Peiffer lifting is required to satisfy the following axioms:
            \begin{align}
                \{\sft(Z_1),\sft(Z_2)\}&=[Z_1,Z_2]~,
                \\ 
                \{[Y_1,Y_2],Y_3\}&=\sft(Y_1)\acton\{Y_2,Y_3\}+\{Y_1,[Y_2,Y_3]\}\nonumber
                \\
                &\hspace{0.6cm}-\sft(Y_2)\acton\{Y_1,Y_3\}-\{Y_2,[Y_1,Y_3]\}~,
                \\
                \{Y_1,[Y_2,Y_3]\}&=\{\sft(\{Y_1,Y_2\}),Y_3\}-\{\sft(\{Y_1,Y_3\}),Y_2\}~,
                \\
                \{\sft(Z),Y\}+\{Y,\sft(Z)\}&=-\sft(Y)\acton Z
            \end{align}
            for all $Y,Y_{1,2}\in \frh$ and $Z,Z_{1,2}\in \frl$.
        \end{subequations}
    \end{adefinition}
    
    Again a crossed module of Lie algebras $(\frh\xrightarrow{\sft}\frg)$ is a 2-crossed module of Lie algebras $(\frl\xrightarrow{\sft}\frh\xrightarrow{\sft}\frg)$ with $\frl$ trivial.
    
    The analog of the $\sfH$-action on $\sfL$ in a 2-crossed module of Lie groups $\caG=(\sfL\xrightarrow{\sft}\sfH\xrightarrow{\sft}\sfG)$ is the action 
    \begin{equation}
        \begin{aligned}
            \acton:\frh\times \frl&\rightarrow \frl~,
            \\
            Y\acton Z&\coloneqq -\{\sft(Z),Y\}~,
        \end{aligned}
    \end{equation}
    and with this action $(\frl\xrightarrow{~\sft}\frh)$ is a crossed module of Lie algebras.
    
    It is clear that in a crossed module of Lie algebras $(\frl\xrightarrow{\sft}\frh\xrightarrow{\sft}\frg)$ obtained by differentiating a crossed module of Lie groups $(\sfL\xrightarrow{\sft}\sfH\xrightarrow{\sft}\sfG)$, the group $\sfG$ acts on the Lie algebras $\frl$, $\frh$, and $\frg$. We will denote this action by the same symbol $\acton$. Similarly, we have further ``half-linearized'' Peiffer liftings
    \begin{equation}\label{eq:half-linearizedPL}
        \{-,-\}:\sfH\times \frh\rightarrow \frl
        \eand 
        \{-,-\}:\frh\times \sfH\rightarrow \frl~,
    \end{equation} 
    and we list the half-linearized forms of the 2-crossed module axioms in the next section.
    
    \subsection{Helpful 2-crossed module relations}\label{app:Formulas}
    
    In a 2-crossed module of groups $\caG=(\sfL\xrightarrow{\sft}\sfH\xrightarrow{\sft}\sfG)$, we also have the following useful identities:
    \begin{align}
        \{\unit,h\}&=\{h,\unit\}=\unit~,
        \\
        \sft(\{h_1,h_2\}^{-1})&=\sft((h_1h_2h_1^{-1})\acton\{h_1,h_2^{-1}\})~,
        \\
        \ell_1\ell_2 &= \{\sft(\ell_1),\sft(\ell_2)\}\ell_2\ell_1~,
        \\
        h\acton\ell &= \sft(h)\acton(\ell\{h^{-1},h\sft(\ell^{-1})h^{-1}\})~,
        \\
        \{h_1,h_2\}\ell &=(\sft(\{h_1,h_2\})\acton \ell)\{h_1,h_2\}~,
        \\
        \sft(h_1)\acton h_2&=h_1\sft(h_1^{-1}\acton\{h_1,h_2\}^{-1})h_2h_1^{-1}
    \end{align}
    for all $h,h_{1,2}\in \sfH$ and $\ell,\ell_{1,2}\in \sfL$. For proofs, see e.g.~\cite{Martins:2009aa,Saemann:2013pca} and references therein.
    
    The half-linearized forms of the Peiffer lifting~\eqref{eq:half-linearizedPL} satisfy the following relations:
    \begin{equation}
        \begin{aligned}
            X\acton\{Y,h\}&=\{X\acton Y,h\}+\{Y, h^{-1} X\acton h\}-\{[Y,h^{-1} X\acton h]-\sft(Y)\acton (h^{-1} X\acton h),h\}~,
            \\
            X\acton\{h,Y\}&=\{h,[h^{-1}X\acton h,Y]\}+\sft(h)\acton\{h^{-1}X\acton h,Y\}+\{h,X\acton Y\}~,
            \\
            \sft(\{h,Y\})&=hYh^{-1}-\sft(h)\acton Y~,
            \\
            Y-\sft(h)\acton Y&=\sft(\{h,Y\}+\{Y,h\})-h(\sft(Y)\acton h^{-1})~,
            \\
            Z-\sft(h)\acton Z&=\{\sft(Z),h\}+\{h,\sft(Z)\}~,
        \end{aligned}
    \end{equation}
    \begin{equation}
        \begin{aligned}
            \{[Y_1,Y_2],h\}&=\{\sft(\{Y_1,Y_2\}),h\}-\{\sft(\{Y_1,h\}),Y_2\}-\{\sft(\{Y_2,Y_1\}),h\}+\{\sft(\{Y_2,h\}),Y_1\}
            \\
            &\hspace{1cm}+\sft(Y_1)\acton\{Y_2,h\}-\sft(Y_2)\acton\{Y_1,h\}~,
            \\
            \{h,[Y_1,Y_2]\}&=\{\sft(\{h,Y_1\}),\sft(h)\acton Y_2\}-\{\sft(\{h,Y_2\}),\sft(h)\acton Y_1\}
            +[\{h,Y_1\},\{h,Y_2\}]~,
            \\
            \{hY_1h^{-1},Y_2\}&=\{h,[Y_1,h^{-1}Y_2h]\}+\sft(h)\acton\{Y_1,h^{-1}Y_2 h\}+\sft(hY_1h^{-1})\acton\sft(h)\acton\{h^{-1},Y_2\}
            \\
            \{Y_1,hY_2h^{-1}\}&=\{Y_1,Y_2\}-\{\sft(\{Y_1,Y_2\},h\}-\{\sft(\{Y_1,h^{-1}\}),Y_2\}
            \\
            &\hspace{1cm}+\{\sft(\{\sft(\{Y_1,h^{-1}\}),Y_2\}),h\}~,
            \\
            \sft(h(X\acton h^{-1}))&=\sft(h)X\sft(h^{-1})-X=\sft(h)\acton X-X~,
            \\
            h(\sft(Y)\acton h^{-1})&=h Y h^{-1}+\sft(\{Y,h\})-Y~,
        \end{aligned}
    \end{equation}
    and
    \begin{equation}
        \begin{aligned}
            \{h_1h_2,Y\}&=\{h_1,h_2Yh_2^{-1}\}+\sft(h_1)\acton\{h_2,Y\}~,
            \\
            \{Y,h_1h_2\}&=\{Y,h_2\}+\{h_2Yh_2^{-1}-h_2\sft(Y)\acton h_2^{-1},h_1\}
        \end{aligned}
    \end{equation}
    for all $h\in \sfH$, $X\in \frg$, $Y,Y_{1,2}\in \frh$, and $Z\in \frl$. These relations are readily derived from linearizing the axioms in a 2-crossed module of Lie groups.
    
    Inverses inside the Peiffer lifting can be resolved as follows:
    \begin{equation}
        \begin{aligned}
            \{h_1^{-1},h_2\}&=(\sft(h_1^{-1})\acton\{h_1,h_1^{-1}h_2h_1\})^{-1}~,
            \\
            \{h_1,h_2^{-1}\}&=\{h_1,h_2\}^{-1}\{(\sft(h_1)\acton h_2^{-1})h_1h_2h_1^{-1},\sft(h_1)\acton h_2\}^{-1}~,
        \end{aligned}
    \end{equation}
    which linearize as
    \begin{equation}
        \begin{aligned}
            \{h^{-1},Y\}&=-\sft(h^{-1})\acton\{h,h^{-1}Yh\}~,
            \\
            \{Y,h^{-1}\}&=\{h^{-1}(\sft(Y)\acton h),h\}-\{h^{-1}Yh,h\}
        \end{aligned}
    \end{equation}
    for all $h,h_{1,2}\in \sfH$ and $Y\in \frh$.
    
    Finally, we can list a number of relations for 2-crossed-module-valued forms. Consider $U$ an open patch of some manifold. Then 
    \begin{equation}
        \begin{aligned}
            (g\acton h)^{-1}\rmd(g\acton h) &= g\acton[ h^{-1}((g^{-1}\rmd g)\acton h)]+g\acton( h^{-1}\rmd h)~,
            \\
            \rmd(\alpha\acton h) &= \rmd\alpha\acton h+(-1)^p\alpha\acton\rmd h~,
            \\
            \rmd(g\acton\beta) &= g\acton((g^{-1}\rmd g)\acton\beta)+g\acton\rmd\beta~,
            \\
            \rmd(h^{-1}(\alpha\acton h)) &= h^{-1}(\rmd\alpha\acton h)+(-1)^p[h^{-1}(\alpha\acton h),h^{-1}\rmd h]+(-1)^p\alpha\acton h^{-1}\rmd h 
        \end{aligned}
    \end{equation}
    for all $g\in\scC^\infty(U,\sfG)$, $h\in\scC^\infty(U,\sfH)$, $\alpha\in\Omega^p(U,\frg)$, and $\beta\in\Omega^q(U,\frh)$.
    
    The differential acts on the adjustment function $\kappa_1$ as follows: 
    \begin{equation}
        \begin{aligned}
            \rmd\kappa_1(g,X)&=g\acton \kappa_1(g^{-1}\rmd g,X)+\kappa(g,[g^{-1}\rmd g,X]-\sft(\kappa_1(g^{-1}\rmd g,X)))+\kappa_1(g,\rmd X)~.
        \end{aligned}
    \end{equation}    
    
    \subsection{3-term \texorpdfstring{$L_\infty$}{L infinity}-algebras}\label{app:L-infty}
    
    Another useful model for Lie 3-algebras exists in the form of 3-term $L_\infty$-algebras. Recall that $L_\infty$-algebras are the homotopy algebras of Lie algebras. They generalize differential graded Lie algebras in that the Jacobi identity only holds up to higher homotopy. These homotopies, in turn, satisfy further, higher Jacobi identities. For a review of $L_\infty$-algebras in our conventions, see~\cite{Jurco:2018sby}.
    
    Concretely, an $L_\infty$-algebra $\frL=(\frL,\mu_k)$ is a graded vector space $\frL=\oplus_{k\in \IZ}\frL_k$ endowed with multilinear maps $\mu_k:\frL^{\wedge k}\rightarrow \frL$ of degree~$|\mu_k|=2-k$. These brackets satisfy the homotopy Jacobi identities 
    \begin{equation}
        \sum_{j+k=i}\sum_{\sigma\in {\rm Sh}(j;i) }\chi(\sigma;\ell_1,\ldots,\ell_{i})(-1)^{k}\mu_{k+1}(\mu_j(\ell_{\sigma(1)},\ldots,\ell_{\sigma(j)}),\ell_{\sigma(j+1)},\ldots,\ell_{\sigma(i)}) = 0
    \end{equation}
    for all $i\in\IN$ and elements $\ell_1,\ldots,\ell_{i}$ of $\frL$. Here, ${\rm Sh}(j;i)$ denotes the set of $(j;i)$-unshuffles, i.e., permutations such that $\sigma(1)<\cdots<\sigma(j)$ and $\sigma(j+1)<\cdots<\sigma(i)$, and $\chi(\sigma;\ell_1,\ldots,\ell_{i})$ denotes the Koszul sign that arises in the graded permutations required to transform $\ell_1\wedge \ldots \wedge \ell_i$ to $\ell_{\sigma(1)}\wedge \ldots \wedge \ell_{\sigma(i)}$.
    
    Higher or categorified Lie algebras can be modeled by $L_\infty$-algebras concentrated in non-positive degrees. In particular, a general class of Lie $n$-algebras can be described by $n$-term $L_\infty$-algebras, which are $L_\infty$-algebras concentrated (i.e., non-trivial) in degrees $-n+1$ to $0$. 
    
    For Lie 3-algebras, we consider 3-term $L_\infty$-algebras of the form
    \begin{equation}
        \frL=\frL_{-2}\oplus \frL_{-1}\oplus \frL_0~.
    \end{equation} 
    For degree reasons, the only non-trivial higher products in a 3-term $L_\infty$-algebra are $\mu_k$ with $1\leq k\leq 4$. As usual in an $L_\infty$-algebra $(\frL,\mu_1)$ forms a differential graded vector space. 
    
    We call an $L_\infty$-algebra strict if $\mu_k$ is trivial for $k\geq 3$. While strict 2-term $L_\infty$-algebras agree with crossed modules of Lie algebras (i.e., 2-crossed modules of Lie algebras $(\frl\xrightarrow{~\sft~}\frh\xrightarrow{~\sft~}\frg)$ with $\frl$ trivial), strict 3-term $L_\infty$-algebras have an overlap with 2-crossed modules of Lie algebras, but neither contains the other. For details, see the discussion in~\cite[Appendix D]{Kim:2019owc}. 
    
    The key result here is that any  2-crossed module of Lie algebras maps to a strict 3-term $L_\infty$-algebra with the following identification:
    \begin{equation}\label{eq:ident_L-infty_2CM}
        \begin{aligned}
            \mu_1&: \frL_{-1}\rightarrow \frL_0
            ~~~&&\leftrightarrow~~~
            &\sft&: \frh\rightarrow \frg~,
            \\
            \mu_1&: \frL_{-2}\rightarrow \frL_{-1}
            ~~~&&\leftrightarrow~~~
            &\sft&: \frl\rightarrow \frh~,
            \\
            \mu_2&: \frL_0\times \frL_0\rightarrow \frL_0
            ~~~&&\leftrightarrow~~~
            &[-,-]&: \frg\times \frg\rightarrow \frg~,
            \\
            \mu_2&: \frL_0\times \frL_{-1}\rightarrow \frL_{-1}
            ~~~&&\leftrightarrow~~~
            &\acton&: \frg\times \frh\rightarrow \frh~,
            \\
            \mu_2&: \frL_0\times \frL_{-2}\rightarrow \frL_{-2}
            ~~~&&\leftrightarrow~~~
            &\acton&: \frg\times \frl\rightarrow \frl~,
            \\
            \mu_2&: \frL_{-1}\times \frL_{-1}\rightarrow \frL_{-2}
            ~~~&&\leftrightarrow~~~
            &-\{-,-\}-\{-,-\}\circ\sigma_2&: \frh\times \frh\rightarrow \frl~,
        \end{aligned}
    \end{equation}
    where $\sigma_2$ denotes the permutation $\sigma_2(Y_1,Y_2)=(Y_2,Y_1)$. Note that in this map, information about the anti-symmetric part of the Peiffer lifting is lost.
    
    \subsection{Raw adjustment conditions}\label{app:raw_adjustment_condition}
    
    Below, we list the explicit form of the raw adjustment conditions $\caC_{1,2,3,4}$.
    \begin{equation}
        \begin{aligned}
            \caC_1&=
            -\left\{\kappa_1\left(g_2^{-1},\nu_2\left(g_1^{-1},F\right)\right),g_2^{-1}\acton \Lambda_1\right\}
            -R_1\left(g_2,\Lambda_2,\nu_2\left(g_1^{-1},F\right)\right)
            \\&\hspace{1cm}
            +R_1\left(g_1g_2,g_2^{-1}\acton \Lambda_1,F\right)
            +R_1(g_1g_2,\Lambda_2,F)
            -\left\{g_2^{-1}\acton \Lambda_1,\kappa_1\left(g_2^{-1},\nu_2\left(g_1^{-1},F\right)\right)\right\}
            \\&\hspace{1cm}
            -g_2^{-1}\acton R_1(g_1,\Lambda_1,F)
            +g_2^{-1}\acton \kappa_2\left(g_1^{-1},H\right)
            -\kappa_2\left(g_2^{-1},[\kappa_1\left(g_1^{-1},F\right),\Lambda_1]\right)
            \\&\hspace{1cm}
            -\kappa_2\left(g_2^{-1},g^{-1}_1 \nabla g_1\acton \kappa_1\left(g_1^{-1},F\right)\right)
            +\kappa_2\left(g_2^{-1},g_1^{-1}\acton H\right)
            \\&\hspace{1cm}
            +\kappa_2\left(g_2^{-1},(g_1^{-1}Fg_1)\acton \Lambda_1\right)
            +\kappa_2\left(g_2^{-1},g_1^{-1}\acton \kappa_1(\rmAd_{g_1}(g^{-1}_1 \nabla g_1),F)\right)
            \\&\hspace{1cm}
            +\kappa_2\left(g_2^{-1},\sft\left(\left\{\kappa_1\left(g_1^{-1},F\right),\Lambda_1\right\}\right)\right)
            -\kappa_2\left(g_2^{-1},\sft(R_1(g_1,\Lambda_1,F))\right)
            \\&\hspace{1cm}
            +\kappa_2\left(g_2^{-1},\sft\left(\kappa_2\left(g_1^{-1},H\right)\right)\right)
            -\kappa_2\left(g_2^{-1},\kappa_1\left(g_1^{-1},\rmd F\right)\right)-\kappa_2\left(g_2^{-1},\kappa_1\left(g_1^{-1},\nu_2(A,F)\right)\right)
            \\&\hspace{1cm}
            +\kappa_2\left(g_2^{-1},\kappa_1\left(g_1^{-1},\nu_2(\rmAd_{g_1}(g^{-1}_1 \nabla g_1),F)\right)\right)
            -\kappa_2\left(g_2^{-1},\kappa_1\left(g^{-1}_1 \nabla g_1,\nu_2\left(g_1^{-1},F\right)\right)\right)
            \\&\hspace{1cm}
            +\kappa_2\left(g_2^{-1},\kappa_1\left(\sft(\Lambda_1),\nu_2\left(g_1^{-1},F\right)\right)\right)
            -\kappa_2\left((g_1g_2)^{-1},H\right)
        \end{aligned}
    \end{equation}
    
    \begin{equation}
        \begin{aligned}
            \caC_2&=-\left\{\kappa_1\left(g^{-1},F\right),g^{-1}\acton h^{-1}\nabla h\right\}-g^{-1}\acton \rmd(\kappa_3(h,F))+\kappa_2\left(g^{-1},H\right)-\kappa_2\left((\sft(h)g)^{-1},H\right)
            \\&\hspace{1cm}
            -R_1(g,\Lambda ,F)+R_1(\sft(h)g,\Lambda ,F)+R_1\left(\sft(h)g,g^{-1}\acton h^{-1}\nabla h,F\right)
            \\&\hspace{1cm}
            +R_1\left(\sft(h)g,\sft\left(g^{-1}\acton \Xi\right),F\right)+\left(\sft\left(\kappa_1\left(g^{-1},F\right)\right)\acton \left(g^{-1}\acton \Xi\right)\right)
            \\&\hspace{1cm}
            +g^{-1}\acton \Big(\left\{H,h^{-1}\right\}+\left\{h^{-1},H\right\}+\left\{h^{-1},\kappa_1(A,F)\right\}
            +\left\{\kappa_1(A,F),h^{-1}\right\}
            \\&\hspace{1cm}
            -\left\{h^{-1}\nabla h,\rmAd^\acton_{h^{-1}}(F)+g\acton \kappa_1\left(g^{-1},F\right)\right\}
            -A\acton \kappa_3(h,F)-\sft\left(h^{-1}\nabla h\right)\acton \kappa_3(h,F)\Big)
        \end{aligned}
    \end{equation}
    
    \begin{equation}
        \begin{aligned}
            \caC_3&=\left\{g^{-1}\acton h_1{}^{-1},g^{-1}\acton \rmAd^\acton_{h_2^{-1}}(F)\right\}+(\sft\left(h_1\right)g)^{-1}\acton \kappa_3\left(h_2,F\right)
            \\
            &\hspace{1cm}-g^{-1}\acton \kappa_3\left(h_2h_1,F\right)+g^{-1}\acton \kappa_3\left(h_1,F\right)
        \end{aligned}
    \end{equation}
    
    \begin{equation}
        \begin{aligned}
            \caC_4&=R_2(g,\sft(\ell)h,X)-R_2(g,h,X)-g^{-1}\acton h^{-1}\acton(\ell^{-1} X\acton \ell)
        \end{aligned}
    \end{equation}
    
    %\iffancy
    \bibliographystyle{latexeu2}
    %\else
    %\bibliographystyle{plain}
    %\bibliographystyle{alphaurl}
    %\fi
    
    \bibliography{bigone}
    
\end{document}